\documentclass[twocolumn, twoside]{article}

\usepackage{moreverb}

\usepackage[bookmarksopen,bookmarksnumbered]{hyperref}

\usepackage{amsmath}
\usepackage{amsthm}
\usepackage{amsfonts}
\usepackage{epsfig}
\usepackage{mathcomSTEP,stmaryrd}

\theoremstyle{remark}
\newtheorem{rem}[thm]{Remark}
\theoremstyle{definition}

\begin{document}

\title{
On the Capacity of the Cognitive Interference Channel with a Common Cognitive Message
}

\author{
Stefano Rini$^\diamond$ and Carolin Huppert$^\dagger$ \\[.5cm]
$^\diamond$ Institute for Communications Engineering, Technische Universit\"at M\"unchen, Germany, \\ email: {\tt stefano.rini@tum.de}\\[.2cm]
$^\dagger$ Institute of Communications Engineering,  Ulm University, Germany, \\ email: {\tt carolin.huppert@uni-ulm.de}
}

\date{}

\twocolumn[
  \begin{@twocolumnfalse}
    \maketitle

\begin{abstract}
In this paper the cognitive interference channel with a common message, a variation of the classical cognitive interference channel in which the cognitive message is decoded at both receivers, is studied.
For this channel model new outer and inner bounds are developed as well as new capacity results for both the discrete memoryless and the Gaussian case.
The outer bounds are derived using bounding techniques originally developed by Sato for the classical interference channel and Nair and El Gamal for the broadcast channel.
A general inner bound is obtained combining rate-splitting, superposition coding and binning.
Inner and outer bounds are shown to coincide in the ``very strong interference'' and the ``primary decodes cognitive'' regimes.
The first regime consists of channels in which there is no loss of optimality in having both receivers decode both messages while in the latter
regime interference pre-cancellation at the cognitive receiver achieves capacity.
Capacity for the Gaussian channel is shown to within a constant additive gap and a constant multiplicative factor.
\end{abstract}
\vspace*{.5cm}

  \end{@twocolumnfalse}
]

\section{Introduction}
Cognitive networks are transmission networks where the message of one user is known at multiple nodes.
The study of cognitive networks was inspired by newfound abilities of smart radios  to overhear the transmission taking place over the channel and gather information about neighboring nodes \cite{goldsmith_survey}.
The information theoretical study of cognitive networks has so far focused on small networks with a limited number of users and messages.
A classical such model is the cognitive interference channel \cite{devroye_IEEE}: a channel where two sets of transmitter/receiver pairs communicate over a shared channel, thus interfering with each others' transmission. One of the encoders in the network--the primary transmitter-- has knowledge of only the message to be transmitted to its intended receiver while the other encoder--the cognitive transmitter-- has knowledge of both messages.
The additional knowledge available at the cognitive transmitter models a smart and adaptable device which is able to acquire the primary message
from previous or simultaneous transmissions in the network.

The cognitive channel has been studied in depth in the last few years and many results have been derived for this model.
The largest known inner bound is provided in \cite{rini2009state} and is obtained using classical random coding techniques such as rate-splitting, superposition coding and binning.
The most general outer bound is derived in \cite{maric2005capacity} by using an argument originally devised for the broadcast channel in \cite{NairGamal06}.
Capacity for both the memoryless and the Gaussian case is not known in general but only for specific subclasses.
For the memoryless channel, the largest known region where capacity is known is the ``better cognitive decoding'' regime, \cite{RTDjournal1}, where capacity is achieved by rate-splitting the cognitive message in a private and a public part and decoding the primary message at both receivers.
This result generalizes two previous results: a ``very strong interference'' result, \cite{MaricUnidirectionalCooperation06}, and a ``very weak interference'' result, \cite{WuDegradedMessageSet}.
In the ``very strong interference'' regime there is no loss of optimality in having both receivers decode both messages. This results is akin to the ``very strong interference'' result for the interference channel \cite{sato.IFC.strong,CostaElGamal87}.
For the ``very weak interference'' regime, instead, capacity is achieved by having the cognitive receiver decode the interference while the primary receiver treats the interference as noise.

Capacity is also known for the semi-deterministic cognitive interference channel, \cite{Rini:ICC2010}, that is for the channel in which the output at the cognitive receiver is a deterministic function of inputs while the output at the primary decoder is a any random function.
Here capacity is achieved by having both cognitive and the primary message private and pre-coding for the cognitive transmission against the primary interference.

A larger set of capacity results is available for the Gaussian case than for the discrete memoryless case.
Capacity is known in the ``weak  interference'' regime \cite{WuDegradedMessageSet}, a regime that contains the ``very weak interference'' regime.
As for the ``very weak interference'' regime, the optimal strategy for the primary receiver is to treat the interference as noise but, in this case, the cognitive codeword is pre-coded against the interference created by the primary transmission.
Capacity for the Gaussian case is also known in the ``primary decodes cognitive'' regime of \cite{RTDjournal2},
in which the cognitive message is decoded at both receivers and pre-coded against the interference created by the primary user at the cognitive decoder.
It must be noted that a crucial tool to achieve capacity for both the deterministic and the Gaussian channel is interference pre-cancellation using binning as in the classical Gel'fand Pinsker problem \cite{GelFandPinskerClassic}.
For these two classes of channels binning at the cognitive transmitter can fully remove the effect of the interference experienced at the cognitive receiver. This property does not hold for a general channel and makes it easier to prove capacity.

Capacity for the Gaussian case is also known to within a constant additive gap of 1~bit/s/Hz and to within a multiplicative factor of two \cite{Rini:ICC2010}.
That is, the gap between the inner and the outer bound can be bounded by a constant difference as well as by a constant ratio.
The first result well characterizes the capacity region in the high SNR regime while the latter gives a good capacity approximation for the low SNR regime.

Despite of the difficulty in deriving capacity for the cognitive interference channel, capacity is fully known for a simple variation of the cognitive interference channel: the cognitive interference channel with a degraded message set \cite{liang2009capacity}.
In this channel the cognitive receiver is required to decode both the cognitive and the primary message.
The capacity achieving strategy is have a public primary message and split the cognitive message in a public and a private parts.  The cognitive public codeword is then superposed over the primary public one and the private cognitive codewords over the two other public codewords.
Since the primary message is decoded at both receivers, no interference pre-cancellation is required at the cognitive transmitter.

\subsection*{Contributions}

In this paper we study the cognitive interference channel with a common cognitive message, a variation of the cognitive interference channel where the primary receiver decodes both messages.
We derive inner and outer bounds for this channel model as well as new capacity results.
Some of the techniques used to derive these results are similar to the techniques used in
\cite{maric2005capacity,RTDjournal1,RTDjournal2} for the cognitive interference channel.
In particular we highlight the relationship between this channel model and the cognitive interference channel in the ``strong interference'' regime,
a regime where there is no loss of generality in having the primary receiver to also decode the cognitive message.
Capacity in this regime is known only in a subset of the parameter regime and progress in improving either inner or outer bounds has been slow.

We first derive a series of outer bounds, each containing an increasing number of auxiliary random variables.
The simpler outer bounds are easy to evaluate but are not tight in general while  outer bounds with more auxiliary random variables are tighter but harder to evaluate and compare to inner bounds.

We introduce an inner bound that employs rate-splitting,  superposition coding and binning which is very much reminiscent of the achievable scheme in \cite{rini2009state}.
A compact representation of this achievable region is obtained by considering \emph{rate-sharing}. 
Rate sharing consists in  ``shifting'' rate contributions from the cognitive rates to the primary rates as well as from public rates to private rates.
The Fourier Motzkin elimination of the different rate contributions provides  a more compact expression of the original achievable region.
This techniques was introduced by Hajek and Pursley \cite{hajek1979evaluation}  when deriving an achievable region for the broadcast channel with common messages and successively employed when in \cite{liang2007rate} to derive an achievable region for the relay broadcast channel.

Inner and outer bounds are shown to coincide for the general channel model in the ``very strong interference'' and the ``primary decodes cognitive'' regimes.
The first regime is related to a result available for the cognitive interference channel and we present it for completeness. The ``primary decodes cognitive'' regime, however, have not been investigated for the general cognitive interference channel so far.
We also prove capacity for the semi-deterministic channel, that is the channel where the channel output at the cognitive decoder is a deterministic function of the channel inputs while the output at the primary receiver is any random function.
As for the cognitive interference channel, this result relies on the fact that for deterministic channels binning can completely pre-cancel the interference experienced at the receiver.

For the Gaussian case we prove capacity to within a constant additive gap of 1 bit/s/Hz and a multiplicative factor 2.
The proof for the constant additive gap offers an alternative proof to the result in \cite{Rini:ICC2010} although it does not improve on the overall gap.

Most of the results we derive are shown using an interesting transmission scheme in which the cognitive message, decoded at both receivers, is also
pre-coded against the interference experienced at the cognitive decoder.
The pre-coding of the cognitive message does not allow the primary decoder to reconstruct the interfering signal.
The cognitive message acts instead as a side information at the primary receiver when decoding its intended message.
The paper concludes with a set of numerical simulations that compare outer bounds, to show the rate advantages of different transmission choices and 
illustrate the regimes where capacity is known for the Gaussian case.

\subsection*{Paper Organization}
The paper is organized as follows:
Sec.~\ref{sec:Channel Model} introduces the considered channel model. 
In  Sec \ref{sec:An Outer Bounds for the Cognitive Interference Channel with a Common Cognitive Message} we introduce outer bounds to the capacity region.
while inner bounds are presented in Sec.  \ref{sec:Inner Bounds for the Cognitive Interference Channel with a Common Cognitive Message}. 
Sec.~\ref{sec:Capacity for the Cognitive Interference Channel with a Common Cognitive Message} contains the capacity results for the general channel model.
Sec.~\ref{sec:The Gaussian Cognitive Interference Channel with a Common Cognitive Message} presents results for the Gaussian case.
In Sec.~\ref{sec:Numerical Simulations} we  introduce a series of numerical simulation of the results presented in the paper.
Sec.~\ref{sec:conclusion} concludes the paper.

\section{Channel Model}
\label{sec:Channel Model}

\begin{figure}
\centering
\includegraphics[width=1.0\columnwidth]{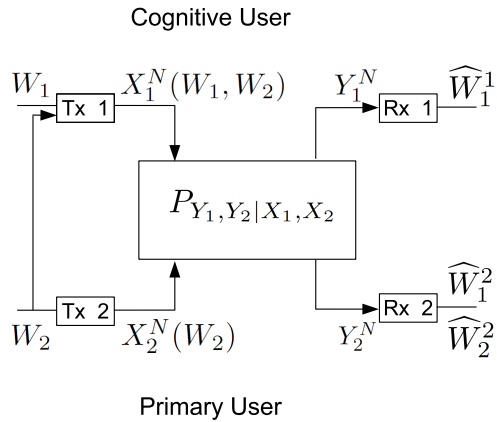}
\caption{The Cognitive InterFerence Channel with Common Cognitive Message (CIFC-CCM).}
\label{fig:channelModel}
\end{figure}
The Cognitive InterFerence Channel with a Common Cognitive Message (CIFC-CCM), as shown in Fig.~\ref{fig:channelModel}, is obtained from the classical Cognitive InterFerence Channel (CIFC)  by having the primary receiver decode both messages.
It consists of two transmitter-receiver pairs that exchange independent messages over a common channel. In the CIFC-CCM transmitter $i$, $i\in\{1,2\}$, has input alphabet $\Xcal_i$ and its receiver has output alphabet $\Ycal_i$. The channel is assumed to be memoryless with transition probability $P_{Y_1,Y_2|X_1,X_2}$ so that $P_{Y_1,Y_2|X_1,X_2}^N$ indicates the memoryless extension to the channel transition probability over $N$ channel uses.
Encoder~2 wishes to communicate a message $W_2$ uniformly distributed on
$\Wcal_2 = [1: 2^{N R_2}]$ to decoder~2 in $N$ channel uses at rate $R_2$.
Similarly, encoder~1, wishes to communicate a message $W_1$ uniformly distributed on
$\Wcal_1 = [1: 2^{N R_1}]$ to both decoder~1 and decoder~2 in $N$ channel uses at rate $R_1$.
Encoder~1 (i.e., the cognitive user) knows its own message $W_1$ and the one of encoder~2 (i.e., the primary user), $W_2$.
A rate pair $(R_1,R_2)$ is achievable if there exist sequences of encoding functions
\begin{align*}
X_1^N &= f_{X_1^N}(W_1, W_2), \;\;   f_{X_1^N} : \Wcal_1 \times \Wcal_2 \rightarrow {\cal X}_1^N, \\
X_2^N &= f_{X_2^N}(W_2), \;\; \;\;\;\;\;\;  f_{X_1^N} : \Wcal_2 \rightarrow {\cal X}_2^N,
\end{align*}
with corresponding sequences of decoding  functions
\begin{align*}
\widehat{W}_1^1 & =  f_{\widehat{W}_1^1}(Y_1^N), \;\;   f_{\widehat{W}_1^1} : {\cal Y}_1^N \rightarrow \Wcal_1, \\
\widehat{W}_1^2 & =  f_{\widehat{W}_1^2}(Y_2^N),   \;\; f_{\widehat{W}_1^2} : {\cal Y}_2^N \rightarrow \Wcal_1, \\
\widehat{W}_2^2 & =  f_{\widehat{W}_2^2}(Y_2^N),   \;\; f_{\widehat{W}_2^2} : {\cal Y}_2^N \rightarrow \Wcal_2,
\end{align*}
such that
\begin{equation*}
\underset{N \rightarrow \infty}{\lim} {\rm Pr} \left\{\left[\widehat{W}_1^1, \widehat{W}_1^2, \widehat{W}_2^2\right] \neq \left[W_1, W_1, W_2 \right]\right\}=0.
\end{equation*}

The capacity region is defined as the closure of the union of achievable $(R_1,R_2)$ pairs
 \cite{ThomasCoverBook}.
Standard strong-typicality is assumed; properties may be found in
 \cite{kramerBook}.

\smallskip

In the following we focus in particular on the Gaussian CIFC-CCM as depicted in Fig.~\ref{fig:GaussianchannelModel}.
For this class of channels, the input/output relationship
is:
\eas{
Y_1& =X_1 + \ a X_2 + Z_1,  \\
Y_2& =X_2 + |b| X_1 + Z_2,
}{\label{eq:in/out gaussian CIFC-CCM}}
for $a,b \in \Cbb$ and for $Z_i \sim \Ncal_{\Cbb}(0,1)$, where the $\Ncal_{\Cbb}$ indicates complex, circularly symmetric jointly Gaussian Random Variables (RVs).
Moreover, the channel inputs are subject to the second moment constraints
\ea{
\Ebb \lsb \labs X_i  \rabs^2 \rsb \leq P_i, \ \ \ i \in \{1,2\}.
\label{eq:power constraint}
}
A channel where the outputs are obtained from a linear combination of the inputs plus an additional complex Gaussian term can be reduced to
the formulation in \eqref{eq:in/out gaussian CIFC-CCM} and \eqref{eq:power constraint} without loss of generality \cite[App. A]{RTDjournal2}.
Note that the coefficient $b$ can be taken to be real and positive without loss of generality, for this reason it will be indicated as $|b|$ in the remainder of the paper \cite[App. A]{RTDjournal2}.

\begin{figure}
\centering
\includegraphics[width=1.0\columnwidth]{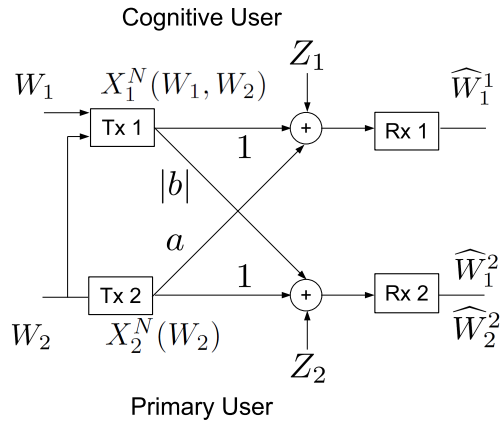}
\caption{The Gaussian Cognitive Interference Channel with Common Cognitive Message (Gaussian CIFC-CM).}
\label{fig:GaussianchannelModel}
\end{figure}

\section{Outer Bounds}
\label{sec:An Outer Bounds for the Cognitive Interference Channel with a Common Cognitive Message}

In this section we develop three outer bounds to the capacity region: the first outer bound does not contain any auxiliary random RV, the second bound one and the third bound three auxiliary RVs.
Additionally, each outer bound contains the previous bounds as a special cases  but the increasing amount of auxiliary RVs makes it harder to evaluate in closed form and compare to the inner bounds.

\smallskip

The first outer bound for the capacity region of the general CIFC-CCM is based on results known for the cognitive interference channel in the ``strong interference'' regime.
This outer bound contains no auxiliary RV and thus can be easily evaluated in closed form.

\begin{thm}{\bf An Outer Bound for the CIFC-CCM}
\label{th:Outer Bound for the CIFC-CCM}

Any achievable region for the CIFC-CCM is contained in the region
\eas{
R_1         & \leq I(Y_1 ; X_1 | X_2 ),
\label{eq:Outer Bound for the CIFC-CCM R1} \\
R_1         & \leq I(Y_2 ; X_1 | X_2 ),
\label{eq:Outer Bound for the CIFC-CCM R1 weak} \\
R_1 + R_2   & \leq I(Y_2 ; X_1 , X_2),
\label{eq:Outer Bound for the CIFC-CCM sum rate 1}
}{\label{eq:Outer Bound for the CIFC-CCM}}
union over all the joint distributions
\ea{
P_{X_1 X_2} P_{Y_1 Y_2 | X_1,X_2}.
}
\end{thm}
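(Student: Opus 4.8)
The plan is to prove all three inequalities as a single-letter weak converse driven by Fano's inequality. Because the code has vanishing error probability, the three decoding functions $\widehat{W}_1^1$, $\widehat{W}_1^2$, $\widehat{W}_2^2$ each supply a Fano bound, namely $H(W_1 \mid Y_1^N) \le N \epsilon_N$, $H(W_1 \mid Y_2^N) \le N \epsilon_N$ and $H(W_2 \mid Y_2^N) \le N \epsilon_N$ with $\epsilon_N \to 0$. I would also record at the outset the two structural facts the whole argument rests on: $W_1$ and $W_2$ are independent, and the Markov chain $(W_1, W_2) \to (X_1^N, X_2^N) \to (Y_1^N, Y_2^N)$ holds with $X_2^N$ a deterministic function of $W_2$ alone. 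Together these let me convert conditioning on messages into conditioning on channel inputs.

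For the bound $R_1 \le I(Y_1; X_1 \mid X_2)$ I start from $N R_1 = H(W_1) = H(W_1 \mid W_2)$ (using independence), expand it as $I(W_1; Y_1^N \mid W_2) + H(W_1 \mid Y_1^N, W_2)$, and bound the last term by $H(W_1 \mid Y_1^N) \le N \epsilon_N$. Writing $I(W_1; Y_1^N \mid W_2) = H(Y_1^N \mid W_2) - H(Y_1^N \mid W_1, W_2)$, I use that $(X_1^N, X_2^N)$ is a function of $(W_1, W_2)$ to replace the subtracted term by $H(Y_1^N \mid X_1^N, X_2^N)$, and that $X_2^N$ is a function of $W_2$ to get $H(Y_1^N \mid W_2) \le H(Y_1^N \mid X_2^N)$; hence $I(W_1; Y_1^N \mid W_2) \le I(Y_1^N; X_1^N \mid X_2^N)$. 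The bound $R_1 \le I(Y_2; X_1 \mid X_2)$ then follows verbatim after replacing $Y_1^N$ by $Y_2^N$, now invoking instead that $W_1$ is decoded at receiver~2.

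For the sum-rate bound $R_1 + R_2 \le I(Y_2; X_1, X_2)$ I exploit that receiver~2 recovers both messages: $N(R_1 + R_2) = H(W_1, W_2) = I(W_1, W_2; Y_2^N) + H(W_1, W_2 \mid Y_2^N)$, where the joint conditional entropy is at most $H(W_1 \mid Y_2^N) + H(W_2 \mid Y_2^N) \le 2 N \epsilon_N$ by the chain rule and the two Fano bounds. Since $(X_1^N, X_2^N)$ is a function of $(W_1, W_2)$, I have $I(W_1, W_2; Y_2^N) = I(Y_2^N; X_1^N, X_2^N)$, which gives the multi-letter sum-rate bound. I then single-letterize all three estimates: the memoryless property gives $H(Y_j^N \mid X_1^N, X_2^N) = \sum_i H(Y_{j,i} \mid X_{1,i}, X_{2,i})$, while chain rule and conditioning-reduces-entropy give $H(Y_j^N \mid X_2^N) \le \sum_i H(Y_{j,i} \mid X_{2,i})$ and $H(Y_2^N) \le \sum_i H(Y_{2,i})$; introducing $Q$ uniform on $[1:N]$ and setting $X_1 = X_{1,Q}$, $X_2 = X_{2,Q}$, $Y_j = Y_{j,Q}$ collapses each normalized sum to the claimed single-letter mutual information, the residual conditioning on $Q$ being dropped because it only lowers these mutual informations. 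The argument is otherwise routine; the only delicate bookkeeping is this single-letterization together with the twin use of the fact that $X_2^N$ is a function of $W_2$ (to produce the ``$\mid X_2$'' conditioning) and channel memorylessness (so that $Y_{j,Q}$ depends on $Q$ only through $(X_{1,Q}, X_{2,Q})$). Letting $N \to \infty$ so that $\epsilon_N \to 0$ then delivers the stated region as a union over all $P_{X_1 X_2}$.
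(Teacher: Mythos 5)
Your proof is correct and follows essentially the same route as the paper's: Fano's inequality, exploitation of the deterministic encoders ($X_2^N$ a function of $W_2$ alone, $X_1^N$ of $(W_1,W_2)$), channel memorylessness, and a time-sharing variable $Q$ that is dropped because conditioning on it only decreases the relevant mutual informations. The only differences are cosmetic: you prove the bound $R_1 \le I(Y_1;X_1|X_2)$ explicitly where the paper merely cites its validity from the classical CIFC literature, and your bookkeeping (e.g., $H(W_1)=H(W_1|W_2)$ by message independence, rather than the paper's equality $I(Y_2^N;W_1)=I(Y_2^N;W_1|W_2)$, which should really be a ``$\le$'') is if anything slightly cleaner.
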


\begin{proof}
The outer bound in \eqref{eq:Outer Bound for the CIFC-CCM R1} was originally devised for the classical CIFC in \cite{WuDegradedMessageSet} and is valid for the
CIFC-CCM as well, since the cognitive decoder is decoding only the cognitive message.
The bound \eqref{eq:Outer Bound for the CIFC-CCM R1 weak}  is obtained from Fano's inequality  as
\pp{
& N R_1 - N \ep_{N}  \\
& \quad \quad  \leq I(Y_2^N ; W_1) \\
& \quad \quad  = I(Y_2^N ; W_1 | W_2) \\
& \quad \quad  = \sum_{ k=1}^N  H(Y_{2,k} | Y_{2, k+1}^N , W_2 , X_{2,k}) \\
& \quad \quad  \quad \quad  - H(Y_{2,k} | Y_{2, k+1}^N , W_1 , W_2, X_{1,k} , X_{2,k} ) \\
& \quad \quad  \leq \sum_{ k=1}^N  H(Y_{2,k} |  X_{2,k}) -  H(Y_{2,k} | X_{1,k} , X_{2,k} ) \\
& \quad \quad  \leq N I(Y_{2,Q}; X_{1,Q} |  X_{2,Q}, Q),
}
where $Q$ is the time sharing RV, uniformly distributed in the interval $[1:N]$.

The bound \eqref{eq:Outer Bound for the CIFC-CCM sum rate 1} is derived in a similar fashion:
\pp{
& N (R_1 + R_2 ) - N \ep_N  \\
& \quad \quad \leq I(Y_2^N ; W_1 , W_2)  \\
& \quad \quad  = \sum_{k=1}^N H(Y_{2, k} | Y_{2 , k+1}^N)\\
& \quad \quad  \quad \quad  - H(Y_{2, k} | Y_{2 , k+1}^N, W_1, W_2 , X_{1,k} , X_{2,k}) \\
& \quad \quad  \leq \sum_{k=1}^N I(Y_{2,k}; X_{1,k}, X_{2,k})\\
& \quad \quad  = N I(Y_{2,Q}; X_{1,Q}, X_{2,Q} |  Q).
}
All the bounds are decreasing in the time sharing RV $Q$ and thus it can be dropped.
\end{proof}

\begin{rem}
\label{rem:storng int outer bound}
The bound \eqref{eq:Outer Bound for the CIFC-CCM R1 weak} is redundant if
\ea{
I(X_1 ; Y_1 | X_2) \leq I(X_1 ; Y_2 |X_2),
\label{eq:strong interference}
}
for all the distributions $P_{X_1,X_2}$.
Condition \eqref{eq:strong interference} corresponds to the condition describing the ``strong interference'' regime for the CIFC.
Thus, when dropping \eqref{eq:Outer Bound for the CIFC-CCM R1 weak} from the outer bound in Th.~\ref{th:Outer Bound for the CIFC-CCM}, one obtains the ``strong interference'' outer bound for the CIFC \cite{maric2005capacity}.
This outer bound  is capacity in the ``very strong interference'' regime for the general CIFC and is capacity in the ``primary decodes cognitive'' regime for the Gaussian CIFC.

\end{rem}

Rem.~\ref{rem:storng int outer bound} formally defines the relationship between the CIFC in ``strong interference'' and the CIFC-CCM.
For a CIFC in the ``strong interference'' regime  there is no loss of optimality in having the primary receiver decodes both messages.
Under condition \eqref{eq:strong interference}, the rate of the cognitive message is not bounded by the decoding capabilities of the primary receiver.
For these reasons, the capacity of the CIFC is equivalent to the one of the CIFC-CCM when condition \eqref{eq:strong interference}  holds.
We avoid referring to condition \eqref{eq:strong interference}  as ``strong interference'' condition for the CIFC-CCM as one cannot properly define ``interference'' in this case since the primary receiver is decoding both the cognitive message and the primary message.

\smallskip

Next, we derive an outer bound inspired by the outer bound in \cite{WuDegradedMessageSet} which is known to be tight for the CIFC in the ``very weak interference'' regime.

\begin{thm}{\bf An Outer Bound for the CIFC-CCM with One RV}
\label{th:An Outer Bound for the CIFC-CCM with one RV}

Any achievable region for the CIFC-CCM is contained in the region
\eas{
R_1         & \leq I(Y_1 ; X_1 | X_2 ),\label{eq:R_1 oneRV 1}\\
R_1         & \leq I(Y_2 ; X_1 | X_2 ), \label{eq:R_1 oneRV 2}\\
R_2         & \leq I(Y_2; U, X_2), \label{eq:R_2 oneRV}\\
R_1 + R_2   & \leq I(Y_2 ; U, X_2) + I(Y_1; X_1 | U , X_2) \label{eq:R_1+R_2 oneRV 1}\\
R_1 + R_2   & \leq I(Y_2 ; X_1 , X_2),\label{eq:R_1+R_2 oneRV 2}
}{\label{eq:Outer Bound for the CIFC-CCM oneRV}}
union over all the joint distributions $P_{U,X_1,X_2}.$
\end{thm}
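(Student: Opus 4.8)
The plan is to observe first that the three inequalities \eqref{eq:R_1 oneRV 1}, \eqref{eq:R_1 oneRV 2} and \eqref{eq:R_1+R_2 oneRV 2} do not involve the auxiliary $U$ and coincide with the bounds already established in Th.~\ref{th:Outer Bound for the CIFC-CCM}; since the region here is a union over all $P_{U,X_1,X_2}$, these carry over verbatim for the induced marginal $P_{X_1,X_2}$. The work therefore concentrates on the two new bounds \eqref{eq:R_2 oneRV} and \eqref{eq:R_1+R_2 oneRV 1}, for which I would use the single auxiliary
\[
U_k=(W_2,\,Y_1^{k-1},\,Y_{2,k+1}^N),
\]
so that the same $U$ validates both inequalities simultaneously. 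The Markov chain $U_k-(X_{1,k},X_{2,k})-(Y_{1,k},Y_{2,k})$ needed for the target factorization $P_{U,X_1,X_2}P_{Y_1,Y_2|X_1,X_2}$ follows from the memoryless property: conditioned on $(X_{1,k},X_{2,k})$ the pair $(Y_{1,k},Y_{2,k})$ is independent of all other variables, and $U_k$ is a function of those.

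For \eqref{eq:R_2 oneRV} I would start from Fano's inequality, $N R_2-N\epsilon_N\le I(W_2;Y_2^N)$, expand with the backward chain rule as $\sum_k I(W_2;Y_{2,k}\mid Y_{2,k+1}^N)$, enlarge the first argument to $(W_2,Y_1^{k-1})$, and recall that $X_{2,k}$ is a deterministic function of $W_2$. Using $I(A;Y\mid Z)\le I(A,Z;Y)$ with $Z=Y_{2,k+1}^N\subseteq U_k$ then gives $I(W_2;Y_{2,k}\mid Y_{2,k+1}^N)\le I(U_k,X_{2,k};Y_{2,k})$; summing over $k$ and applying the usual time-sharing argument yields \eqref{eq:R_2 oneRV}.

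The crux is the sum-rate bound \eqref{eq:R_1+R_2 oneRV 1}. Here I would bound $W_1$ at decoder~$1$ and $W_2$ at decoder~$2$, use $I(W_1;Y_1^N)\le I(W_1;Y_1^N\mid W_2)=I(X_1^N;Y_1^N\mid W_2)$ (valid since $W_1\perp W_2$ and $X_1^N$ is a function of $(W_1,W_2)$), and then manipulate
\[
I(W_2;Y_2^N)+I(X_1^N;Y_1^N\mid W_2).
\]
Expanding the second term with the forward chain rule and inserting the genie $Y_{2,k+1}^N$ produces a term $\sum_k I(Y_{2,k+1}^N;Y_{1,k}\mid Y_1^{k-1},W_2)$, which the Csisz\'ar sum identity (conditioned on $W_2$) converts into $\sum_k I(Y_1^{k-1};Y_{2,k}\mid Y_{2,k+1}^N,W_2)$. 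Recombining this with the expansion $I(W_2;Y_2^N)=\sum_k I(W_2;Y_{2,k}\mid Y_{2,k+1}^N)$ collapses, by the chain rule, into $\sum_k I(W_2,Y_1^{k-1};Y_{2,k}\mid Y_{2,k+1}^N)$, while the leftover term reduces by memorylessness (the Markov chain $Y_{1,k}-(X_{1,k},W_2)-\mathrm{rest}$) to $\sum_k I(X_{1,k};Y_{1,k}\mid U_k)$. Recognizing $U_k=(W_2,Y_1^{k-1},Y_{2,k+1}^N)$, dropping the conditioning on $Y_{2,k+1}^N\subseteq U_k$ in the first term, and appending the redundant $X_{2,k}$ (a function of $W_2$) then gives exactly $\sum_k\big[I(U_k,X_{2,k};Y_{2,k})+I(X_{1,k};Y_{1,k}\mid U_k,X_{2,k})\big]$.

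Finally I would introduce a time-sharing variable $Q$ uniform on $[1:N]$, set $U=(Q,U_Q)$ and $X_i=X_{i,Q}$, $Y_i=Y_{i,Q}$, and verify both the Markov chain and the factorization so that the single-letter inequalities emerge. I expect the main obstacle to be the sum-rate step: choosing the chain-rule directions so that the genie term matches one side of the Csisz\'ar sum identity, and realizing that the auxiliary must carry both the past outputs $Y_1^{k-1}$ and the future outputs $Y_{2,k+1}^N$ (together with $W_2$) for the two transformed terms to recombine cleanly into $I(U_k,X_{2,k};Y_{2,k})$.
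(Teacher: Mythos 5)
Your proposal is correct, but it is a genuinely different (much more laborious) route than the one the paper takes. The paper's own proof is essentially a citation: bounds \eqref{eq:R_1 oneRV 1}, \eqref{eq:R_1 oneRV 2} and \eqref{eq:R_1+R_2 oneRV 2} are inherited from Th.~\ref{th:Outer Bound for the CIFC-CCM}, exactly as you do, while the two auxiliary-variable bounds \eqref{eq:R_2 oneRV} and \eqref{eq:R_1+R_2 oneRV 1} are simply quoted from the outer bound of \cite{WuDegradedMessageSet} for the general CIFC, together with the observation that they remain valid here because the CIFC-CCM only adds a decoding requirement (every CIFC-CCM code is in particular a CIFC code, so any CIFC outer bound contains the CIFC-CCM capacity region). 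What you have done is reconstruct the converse hidden behind that citation: Fano's inequality at each decoder, the identification $U_k=(W_2,Y_1^{k-1},Y_{2,k+1}^N)$, the Csisz\'ar sum identity to trade the genie term $\sum_k I(Y_{2,k+1}^N;Y_{1,k}\mid Y_1^{k-1},W_2)$ for $\sum_k I(Y_1^{k-1};Y_{2,k}\mid Y_{2,k+1}^N,W_2)$, single-letterization via the memoryless property, and absorption of the time-sharing variable into $U$. I checked the delicate steps and they hold: $I(W_1;Y_1^N)\le I(W_1;Y_1^N\mid W_2)=I(X_1^N;Y_1^N\mid W_2)$ (only ``$\le$'' is actually needed); $I(X_1^N;Y_{1,k}\mid U_k)=I(X_{1,k};Y_{1,k}\mid U_k,X_{2,k})$, which uses that $X_{2,k}$ is a deterministic function of $W_2$ (a component of $U_k$) and that the channel is memoryless without feedback; and the Markov chain $U-(X_1,X_2)-(Y_1,Y_2)$ needed for the stated factorization. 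Importantly, a single $U_k$ validates \eqref{eq:R_2 oneRV} and \eqref{eq:R_1+R_2 oneRV 1} simultaneously, and the $U$-free bounds depend only on the induced marginal $P_{X_1,X_2}$, so all five inequalities hold for one joint distribution, as the converse requires. The trade-off between the two routes: the paper's is shorter and makes explicit the structural relation between the CIFC and the CIFC-CCM that is reused throughout; yours is self-contained and makes visible that the two $U$-bounds rely only on the decoding requirements common to both models (decoder~1 decodes $W_1$, decoder~2 decodes $W_2$), which is precisely the reason the citation argument goes through.
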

\begin{proof}
The bounds \eqref{eq:R_1 oneRV 1}, \eqref{eq:R_1 oneRV 2} and \eqref{eq:R_1+R_2 oneRV 2} are from Th.~\ref{th:Outer Bound for the CIFC-CCM}. The two bounds \eqref{eq:R_2 oneRV} and \eqref{eq:R_1+R_2 oneRV 1} are derived in \cite{WuDegradedMessageSet} for the general CIFC and are also valid in the CIFC-CCM.
\end{proof}

\begin{rem}
\label{rem:reduce outer bound 2 to outer bound one}
The outer bound in Th.~\ref{th:Outer Bound for the CIFC-CCM} can be obtained from the outer bound in Th.~\ref{th:An Outer Bound for the CIFC-CCM with one RV} by dropping \eqref{eq:R_2 oneRV} and \eqref{eq:R_1+R_2 oneRV 1}. The region obtained by dropping these two bounds necessarily contains the region in \eqref{eq:Outer Bound for the CIFC-CCM oneRV} but the inclusion is not granted to be strict.

\end{rem}

\begin{thm}{\bf An Outer Bound for the CIFC-CCM with Three RVs}
\label{th:An Outer Bound for the CIFC-CCM with three RV}

Any achievable region for the CIFC-CCM is contained in the region
\eas{
R_1         & \leq I(Y_1 ; X_1 | X_2 ),\label{eq:R_1 twoRV 1}\\
R_1         & \leq I(Y_2 ; X_1 | X_2 ), \label{eq:R_1 twoRV 2}\\
R_1         & \leq I(Y_1 ; V, U_1 ), \label{eq:R_1 twoRV 3}\\
R_1         & \leq I(Y_2 ; V, U_1 ), \label{eq:R_1 twoRV 4}\\
R_2         & \leq I(Y_2 ; V, U_2 ), \label{eq:R_2 twoRV}\\
R_1 + R_2   & \leq I(Y_2 ; X_2 | U_1, V) + I(Y_1; V, U_1), \label{eq:R_1+R_2 twoRV 1}\\
R_1 + R_2   & \leq I(Y_1 ; X_1 | U_2, V) + I(Y_2; V, U_2), \label{eq:R_1+R_2 twoRV 2}\\
R_1 + R_2   & \leq I(Y_2 ; X_1 , X_2),\label{eq:R_1+R_2 twoRV 3}
}{\label{eq:An Outer Bound for the CIFC-CCM with two RV}}
union over all the joint distributions $P_{V,U_1,U_2,X_1,X_2}$ that factor as
\ea{
P_{U_1} P_{U_2} P_{V |U_1 ,U_2}P_{X_2 | U_2 V} P_{X_1 | U_1 , U_2}.
\label{eq:factorization maric type}
}
\end{thm}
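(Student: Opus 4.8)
The plan is to obtain the region through a Nair--El Gamal / Mari\'c-type single-letterization, as in \cite{maric2005capacity,NairGamal06} and its CIFC counterpart \cite{RTDjournal2}, adapted to the fact that here $W_1$ is decoded at \emph{both} receivers. Three of the eight inequalities are free: \eqref{eq:R_1 twoRV 1}, \eqref{eq:R_1 twoRV 2} and \eqref{eq:R_1+R_2 twoRV 3} are exactly the bounds of Th.~\ref{th:Outer Bound for the CIFC-CCM} and are inherited verbatim. For the remaining five I would start from Fano's inequality applied to each decoding constraint: since $W_1$ is decoded at both receivers, $N R_1 \le I(Y_1^N;W_1)+N\epsilon_N$ and $N R_1 \le I(Y_2^N;W_1)+N\epsilon_N$; since $W_2$ is decoded at receiver~2, $N R_2 \le I(Y_2^N;W_2)+N\epsilon_N$, and because $W_1$ and $W_2$ are independent also $N R_2 \le I(Y_2^N;W_2\mid W_1)+N\epsilon_N$. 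These feed the single-rate bounds and, in paired form, the two mixed sum-rate bounds.

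Next I would introduce per-symbol auxiliaries built from the messages and from output subsequences, and then collapse them with the time-sharing variable $Q$. The natural assignment lets $U_1$ carry the cognitive message, $U_2$ carry the primary message together with the time index, and $V$ carry the output history/future needed to pass from $N$-letter to single-letter quantities: concretely $U_1=W_1$, $U_2=(W_2,Q)$ and $V=(V_Q,Q)$, with $V_k$ collecting a suitable combination of $Y_1$-past and $Y_2$-future. Since $V$ appears in \emph{every} one of the five new inequalities, lodging $Q$ inside $V$ is what makes the single-letterization of \eqref{eq:R_1 twoRV 3}, \eqref{eq:R_1 twoRV 4} and \eqref{eq:R_2 twoRV} immediate: each conditional mutual information produced by Fano is dominated by the matching single-letter term $I(Y_{i,k};V_k,U_{j,k})$ as soon as $V_k$ contains the corresponding subsequence.

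The two mixed sum-rate bounds \eqref{eq:R_1+R_2 twoRV 1} and \eqref{eq:R_1+R_2 twoRV 2} are the technical heart. Each is obtained by pairing one receiver's bound on the common message with the other receiver's conditional bound on the remaining message -- e.g. pairing $I(Y_1^N;W_1)$ with $I(Y_2^N;W_2\mid W_1)$ -- and then invoking the Csisz\'ar sum identity to telescope the cross terms coupling the $Y_1$-history with the $Y_2$-future, exactly as in the broadcast converse of \cite{NairGamal06}. The step that turns messages into channel inputs uses the memoryless structure $Y_{i,k}\perp(\cdot)\mid(X_{1,k},X_{2,k})$: at receiver~$1$, conditioning on $W_2$ fixes $X_{2,k}$, so $Y_{1,k}\perp W_1\mid(X_{1,k},W_2,Y_1^{k-1})$ and $W_1$ may be replaced by $X_{1,k}$ \emph{exactly}, yielding the term $I(Y_1;X_1\mid U_2,V)$ of \eqref{eq:R_1+R_2 twoRV 2}. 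The cognitive asymmetry -- $X_2$ is a function of $W_2$ alone while $X_1$ depends on both messages -- is precisely why the companion bound \eqref{eq:R_1+R_2 twoRV 1} must instead retain $X_2$ as conditioning, since the interfering input $X_{1,k}$ cannot be resolved from $U_1$ alone; this is visible in the asymmetric form of the two bounds.

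The step I expect to be the main obstacle is exhibiting a \emph{single} auxiliary triple $(U_1,U_2,V)$ that validates all eight inequalities simultaneously while obeying the factorization \eqref{eq:factorization maric type}: the two sum-rate bounds naturally call for opposite output subsequences in $V$, so reconciling them within one distribution is the delicate point. Once a valid $V$ is fixed, verifying \eqref{eq:factorization maric type} is then a matter of bookkeeping: $U_1\perp U_2$ follows from $W_1\perp W_2$ and $Q\perp(W_1,W_2)$; the law $P_{X_2\mid U_2,V}$ holds because the symbol $X_{2,Q}$ is a deterministic function of $U_2=(W_2,Q)$; and $P_{X_1\mid U_1,U_2}$ holds because, with $Q$ placed in $U_2$, the symbol $X_{1,Q}$ is a deterministic function of $(U_1,U_2)$, which simultaneously delivers the conditional independences $X_1\perp V\mid(U_1,U_2)$ and $X_2\perp U_1\mid(U_2,V)$ demanded by the factorization.
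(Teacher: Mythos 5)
Your proposal has the right skeleton, and it is essentially the skeleton that the paper \emph{outsources}: the paper's own proof takes \eqref{eq:R_1 twoRV 1}, \eqref{eq:R_1 twoRV 2}, \eqref{eq:R_1+R_2 twoRV 3} verbatim from Th.~\ref{th:Outer Bound for the CIFC-CCM}, imports \eqref{eq:R_1 twoRV 3}, \eqref{eq:R_2 twoRV}, \eqref{eq:R_1+R_2 twoRV 1} and \eqref{eq:R_1+R_2 twoRV 2} by citing \cite{MaricGoldsmithKramerShamai07Eu} (outer bounds for the general CIFC remain valid here, since a CIFC-CCM code is in particular a CIFC code), and then observes that the single genuinely new inequality, \eqref{eq:R_1 twoRV 4}, follows by repeating the derivation of \eqref{eq:R_1 twoRV 3} at $Y_2$, which is legitimate because $W_1$ is also decoded there. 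You identified all of these ingredients correctly, including the role of the common cognitive message in \eqref{eq:R_1 twoRV 4} and the bookkeeping that validates the factorization \eqref{eq:factorization maric type} with $U_1=W_1$, $U_2=(W_2,Q)$.

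As a self-contained proof, however, your attempt has a genuine gap, and it sits exactly where you flag ``the main obstacle'' and then leave it unresolved; concretely, the problem is \eqref{eq:R_1+R_2 twoRV 1}. Your message-to-input replacement is sound for \eqref{eq:R_1+R_2 twoRV 2}: conditioning there includes $W_2$, which determines $X_{2,k}$, so $I(Y_{1,k};W_1 \mid X_{1,k},W_2,V_k)=0$ by memorylessness and $W_1$ can be traded for $X_{1,k}$ exactly. The mirror step for \eqref{eq:R_1+R_2 twoRV 1} fails in the cognitive setting: after pairing $I(Y_1^N;W_1)$ with $I(Y_2^N;W_2\mid W_1)$ and applying the Csisz\'ar sum identity with $V_k=(Y_1^{k-1},Y_{2,k+1}^N)$, one must replace $W_2$ by $X_{2,k}$ under conditioning on $(W_1,V_k)$ only; but $X_{1,k}=f_{1,k}(W_1,W_2)$ depends on $W_2$, so the residual $I(Y_{2,k};W_2\mid X_{2,k},W_1,V_k)$ does \emph{not} vanish, and the argument as sketched only delivers the weaker bound with $I(Y_2;X_1,X_2\mid U_1,V)$ in place of $I(Y_2;X_2\mid U_1,V)$ --- i.e., essentially the BC-DMS bound of Cor.~\ref{cor:BC-DMS outer bound}, not the tighter inequality claimed in the theorem. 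You actually notice the relevant asymmetry (``the interfering input $X_{1,k}$ cannot be resolved from $U_1$ alone'') but draw a garbled conclusion from it: $X_2$ is not a conditioning variable in \eqref{eq:R_1+R_2 twoRV 1}, it is the variable whose mutual information with $Y_2$ is measured, and no argument is supplied for why the residual term can be dropped. This extra argument (tied to the restriction, stated in the paper's proof, that $X_2$ be a function of $(U_2,V)$ and $X_1$ of $(U_1,U_2,V)$) is precisely the content the paper imports from \cite{MaricGoldsmithKramerShamai07Eu} rather than reproving; without it, your derivation establishes only seven of the eight inequalities.
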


\begin{proof}
The bounds \eqref{eq:R_1 twoRV 1}, \eqref{eq:R_1 twoRV 2} and \eqref{eq:R_1+R_2 twoRV 3} are from Th.~\ref{th:Outer Bound for the CIFC-CCM}. Bound \eqref{eq:R_1 twoRV 3}, \eqref{eq:R_2 twoRV},
\eqref{eq:R_1+R_2 twoRV 1} and \eqref{eq:R_1+R_2 twoRV 2} are derived in \cite{MaricGoldsmithKramerShamai07Eu} for the general CIFC with a restriction to input distributions where $X_2$ is a function of $(U_2, V)$ and $X_1$ is a function of $(U_1, U_2, V)$.
Bound \eqref{eq:R_1 twoRV 4} is induced by the common cognitive message and derived according to \eqref{eq:R_1 twoRV 3} in \cite{MaricGoldsmithKramerShamai07Eu}.
\end{proof}

\begin{rem}
The factorization in \eqref{eq:factorization maric type} differs from the factorization in \cite[Th. 4]{MaricGoldsmithKramerShamai07Eu} and is, in general, more restrictive. The tightening of the factorization can be obtained by noting that the RVs $U_1$ and $U_2$ are associated with the messages $W_1$ and $W_2$ respectively.
\end{rem}

\begin{rem}
The outer bound in Th.~\ref{th:An Outer Bound for the CIFC-CCM with one RV} can be obtained from the outer bound in Th.~\ref{th:An Outer Bound for the CIFC-CCM with three RV} by considering equations \eqref{eq:R_1 twoRV 1}, \eqref{eq:R_1 twoRV 2}, \eqref{eq:R_2 twoRV}, \eqref{eq:R_1+R_2 twoRV 2} and \eqref{eq:R_1+R_2 twoRV 3} and letting $U=[U_2, V]$.
The region obtained by dropping these five bounds necessarily contains the region in \eqref{eq:An Outer Bound for the CIFC-CCM with two RV} but the inclusion is not granted to be strict.
\end{rem}

\begin{cor}{\bf BC-DMS Outer Bound}
\label{cor:BC-DMS outer bound}
Any achievable region for the CIFC-CCM is contained in the region

\eas{
R_1         & \leq I(Y_1 ; U),\\
R_1 + R_2   & \leq I(Y_2 ; X_1, X_2 | U) + I(Y_1 ; U), \\
R_1 + R_2   & \leq I(Y_2 ; X_1 , X_2),
}{\label{eq:BC-DMS outer bound}}

union over all joint distributions $P_{U,X_1,X_2}$.
\end{cor}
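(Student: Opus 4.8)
The plan is to recognize \eqref{eq:BC-DMS outer bound} as the Körner--Marton / Nair--El Gamal outer bound for a broadcast channel with degraded message sets (BC-DMS) obtained by \emph{relaxing} the encoding constraints of the CIFC-CCM. First I would observe that any CIFC-CCM code is also a code for the BC-DMS in which a single transmitter emits the super-symbol $(X_1,X_2)$, with $W_1$ acting as the common message (decoded at both $Y_1$ and $Y_2$) and $W_2$ as the private message (decoded only at $Y_2$). The CIFC-CCM only imposes the extra restriction that the $X_2$-component not depend on $W_1$; dropping this restriction enlarges the set of admissible codes, so any outer bound for the BC-DMS is automatically an outer bound for the CIFC-CCM. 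The region \eqref{eq:BC-DMS outer bound} is precisely that BC-DMS bound, with the private-rate constraint $R_2\le I(Y_2;X_1,X_2|U)$ absorbed, by addition to $R_1\le I(Y_1;U)$, into the sum-rate constraint, so only the three stated inequalities have to be verified.

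For the single-letterization I would start from Fano's inequality, writing $NR_1\le I(W_1;Y_1^N)+N\epsilon_N$ since receiver~1 decodes $W_1$, and, using the independence of $W_1$ and $W_2$, $NR_2\le I(W_2;Y_2^N\mid W_1)+N\epsilon_N$ since receiver~2 decodes $W_2$. The pure sum-rate bound $R_1+R_2\le I(Y_2;X_1,X_2)$ is already established in Th.~\ref{th:Outer Bound for the CIFC-CCM} and may be quoted directly. The crux is the auxiliary: following Nair--El Gamal I would set
\[
U_k := (W_1,\, Y_1^{k-1},\, Y_{2,k+1}^N),
\]
which mixes the common message with the \emph{past} outputs of $Y_1$ and the \emph{future} outputs of $Y_2$.

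With this choice the bound $R_1\le I(Y_1;U)$ follows from the backward chain rule $I(W_1;Y_1^N)=\sum_k I(W_1;Y_{1,k}\mid Y_1^{k-1})$ together with the fact that adjoining $Y_1^{k-1}$ and $Y_{2,k+1}^N$ to the observation only increases mutual information. For the sum-rate bound I would expand $I(W_1;Y_1^N)+I(W_2;Y_2^N\mid W_1)$, upper-bound the $W_2$-term by $\sum_k I(X_{1,k},X_{2,k};Y_{2,k}\mid U_k)$ using the memoryless Markov chain $Y_{2,k}-(X_{1,k},X_{2,k})-(\textrm{rest})$, and match the remaining single-letter terms to $\sum_k I(U_k;Y_{1,k})$. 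Introducing the time-sharing variable $Q$ (uniform on $[1:N]$) and absorbing it into $U$ then yields the per-letter inequalities of \eqref{eq:BC-DMS outer bound}.

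The step I expect to be the main obstacle is the reconciliation of the two cross terms left over after the expansion: a term $\sum_k I(Y_{2,k+1}^N;Y_{1,k}\mid W_1,Y_1^{k-1})$ arising from the $R_1$ contribution and a term $\sum_k I(Y_1^{k-1};Y_{2,k}\mid W_1,Y_{2,k+1}^N)$ arising from the $R_2$ contribution. These do not cancel term by term, but their sums over $k$ coincide by the Csiszár sum identity; this is exactly what the particular auxiliary $U_k$ is engineered to exploit, and it is the only non-routine step. Everything else---Fano, the chain rule, and the data-processing arguments that rely on the memorylessness of $P_{Y_1,Y_2|X_1,X_2}$---is mechanical.
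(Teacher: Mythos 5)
Your proposal is correct, but it proves the corollary by a genuinely different route than the paper. The paper's proof is a one-line specialization of its own Th.~\ref{th:An Outer Bound for the CIFC-CCM with three RV}: it keeps only the bounds \eqref{eq:R_1 twoRV 3}, \eqref{eq:R_1+R_2 twoRV 1} and \eqref{eq:R_1+R_2 twoRV 3} and sets $U=[V,U_1]$ and $X=U_2=[X_1,X_2]$, so all the single-letterization work is inherited from the earlier theorem (itself largely quoted from the Maric et al.\ outer bound for the CIFC). You instead give a self-contained converse: you relax the encoding constraint (let transmitter~2 also know $W_1$, i.e., full cooperation), observe that every CIFC-CCM code is then a BC-DMS code with super-input $(X_1,X_2)$, common message $W_1$ and private message $W_2$ at $Y_2$, and re-derive the K\"orner--Marton/Nair--El Gamal converse with the auxiliary $U_k=(W_1,Y_1^{k-1},Y_{2,k+1}^N)$ and the Csisz\'ar sum identity reconciling the two cross terms $\sum_k I(Y_{2,k+1}^N;Y_{1,k}\mid W_1,Y_1^{k-1})$ and $\sum_k I(Y_1^{k-1};Y_{2,k}\mid W_1,Y_{2,k+1}^N)$; that identification of the only non-routine step is exactly right, and the Markov chain $U_k-(X_{1,k},X_{2,k})-(Y_{1,k},Y_{2,k})$ needed for the data-processing step holds by memorylessness. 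The trade-off: the paper's argument is shorter but leans on the correctness (and the restricted factorization \eqref{eq:factorization maric type}) of the three-RV bound, whereas your argument is operationally transparent and is in fact the formal substantiation of the claim the paper only makes in the remark following the corollary, namely that the region \eqref{eq:BC-DMS outer bound} is the BC-DMS capacity region under full transmitter cooperation. Your reduction argument alone (code-class inclusion plus citing K\"orner--Marton's converse) would already suffice as a complete proof; the explicit Fano/Csisz\'ar derivation makes it self-contained.
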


\begin{proof}
The outer bound in Cor.~\ref{cor:BC-DMS outer bound}  can be obtained from the outer bound in Th.~\ref{th:An Outer Bound for the CIFC-CCM with three RV} by considering only Equations \eqref{eq:R_1 twoRV 3}, \eqref{eq:R_1+R_2 twoRV 1}, and \eqref{eq:R_1+R_2 twoRV 3} and letting $U = [V,U_1]$ and $X = U_2 = [X_1,X_2]$.
\end{proof}

\begin{rem}
The outer bound in Cor.~\ref{cor:BC-DMS outer bound} equals the capacity region of the general broadcast channel with degraded message set (BC-DMS) from \cite{KornerMarton_generalBC}.
If full transmitter cooperation is assumed in the CIFC-CCM, i.e., if the cognitive message is also known at transmitter 2, its capacity region equals the BC-DMS capacity.
Note that this region gives an outer bound on the capacity region of the general CIFC in the strong interference regime, cf. \cite{RTDjournal2}.
\end{rem}

\section{Inner Bounds}
\label{sec:Inner Bounds for the Cognitive Interference Channel with a Common Cognitive Message}

In this section we develop an inner bound for the CIFC-CCM that is obtained by a combination of random coding techniques such as rate-splitting, superposition coding and binning.
The primary message is rate-split in three parts: a common part, a private part and a private part broadcasted by the cognitive transmitter to the primary receiver.
The private primary codeword is superposed to the private public one and the cognitive message is superposed over the common primary codeword and also binned against the primary private codeword.

The achievable region can be obtained using standard random coding techniques and is very much reminiscent of the achievable region in \cite{rini2009state}.
We successively show that this region is also achievable by considering \emph{rate-sharing}, that is by allowing part of the cognitive message to be embedded into the primary message and allowing part of the public messages being integrated into the private ones.
Finally, we provide a series of simple achievable regions containing at most one auxiliary RV.
These regions have a simpler expression than the general inner bound and can thus be more easily compared to the available outer bounds. Furthermore, they can be easily used for numerical simulations.

\begin{thm}{\bf A General Inner Bound for the CIFC-CCM}
\label{th:Inner Bounds for the CIFC-CCM}

The following region is achievable for a general CIFC-CCM
\eas{
R_1 & \leq I(Y_1 ; U_{1c} | U_{2c}) - I(U_{1c}; X_2 | U_{2c} ), \\
R_1 & \leq I(Y_2 ; U_{1c}, U_{2pb} | U_{2c}, X_2), \\
R_1 + R_2 & \leq I(Y_2 ; U_{2c}, X_2, U_{1c}, U_{2pb}), \\
R_1 + R_2 & \leq I(Y_1 ; U_{1c}, U_{2c})\notag\\
	  &\;\;\;+ I(Y_2 ; X_2 , U_{2pb} | U_{1c}, U_{2c}),\\
2 R_1 + R_2 & \leq I(Y_1 ; U_{1c} , U_{2c})\notag\\
	  &\;\;\;+ I(Y_2 ; U_{1c}, X_2 , U_{2pb} | U_{2c})\notag\\
	  &\;\;\;- I(U_{1c}; X_2 | U_{2c} ),\label{eq:inner bound 2R1+R2}
}{\label{eq:inner bound}}
for any distribution that factors as
\ea{
P_{U_{1c},U_{2c},U_{2pb},X_1,X_2}.
}
\end{thm}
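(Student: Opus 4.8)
The plan is to prove \eqref{eq:inner bound} by a random-coding argument combining rate-splitting, superposition coding and Gel'fand--Pinsker binning, followed by a Fourier--Motzkin elimination (FME) of the auxiliary rates. First I would split $W_2$ into a common part, a private part, and a part ``broadcast'' by the cognitive transmitter, of rates $R_{2c}$, $R_{2p}$, $R_{2pb}$ with $R_2=R_{2c}+R_{2p}+R_{2pb}$, and generate the codebooks according to a factorization of the form $P_{U_{2c}}P_{X_2|U_{2c}}P_{U_{1c}|U_{2c}}P_{U_{2pb}|U_{1c},U_{2c}}P_{X_1|U_{1c},U_{2pb},U_{2c},X_2}$: a cloud centre $U_{2c}$ carries $W_{2c}$; the primary input $X_2$ and the cognitive codeword $U_{1c}$ are superposed on $U_{2c}$; $U_{1c}$ additionally carries a binning index of rate $R_{1b}$; and $U_{2pb}$ is superposed \emph{on top of} $U_{1c}$. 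Placing $U_{2pb}$ above $U_{1c}$ is the decisive structural choice: it forces a wrong $U_{1c}$ at the primary receiver to entail a wrong $U_{2pb}$, which is what later yields the bound $R_1\le I(Y_2;U_{1c},U_{2pb}|U_{2c},X_2)$ rather than the strictly tighter $R_1\le I(Y_2;U_{1c}|U_{2c},X_2,U_{2pb})$ that an isolated $U_{1c}$-error event would produce.

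Encoding uses the covering lemma: knowing $W_2$ and hence $X_2$, the cognitive encoder searches its bin of $2^{NR_{1b}}$ codewords $U_{1c}$ for one jointly typical with $X_2$ given $U_{2c}$, which succeeds provided $R_{1b}\ge I(U_{1c};X_2|U_{2c})$. Decoder~1 jointly decodes $(U_{2c},U_{1c})$, non-uniquely in $U_{2c}$ since $W_{2c}$ is not required there, and decoder~2 decodes the full tuple $(U_{2c},X_2,U_{1c},U_{2pb})$. A packing-lemma analysis then gives one inequality per error pattern, and the qualitative heart of the argument is the asymmetric role of the penalty $J:=I(U_{1c};X_2|U_{2c})$. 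At decoder~1, which treats $X_2$ as unknown interference, the penalty is \emph{paid}, giving $R_1+R_{1b}\le I(Y_1;U_{1c}|U_{2c})$ and $R_1+R_{1b}+R_{2c}\le I(Y_1;U_{1c},U_{2c})$. At decoder~2, which recovers $X_2$, it is \emph{refunded}: a competing $U_{1c}$ drawn independently of the true $X_2$ must additionally appear $X_2$-correlated, so every decoder-2 bound in which $U_{1c}$ is in error carries an extra $+J$, e.g. $R_1+R_{1b}+R_{2pb}\le J+I(Y_2;U_{1c},U_{2pb}|U_{2c},X_2)$, $R_1+R_{1b}+R_{2p}+R_{2pb}\le J+I(Y_2;U_{1c},X_2,U_{2pb}|U_{2c})$, and the all-error bound $R_{2c}+R_1+R_{1b}+R_2\le J+I(Y_2;U_{2c},X_2,U_{1c},U_{2pb})$; the events with $U_{1c}$ correct contribute $R_{2p}+R_{2pb}\le J+I(Y_2;X_2,U_{2pb}|U_{2c},U_{1c})$ and its sub-sums.

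I would then eliminate $R_{1b}$ first. Pairing the covering bound $R_{1b}\ge J$ with each decoder inequality containing $R_{1b}$ clears it and cancels the refund against the covering term: the decoder-1 bounds give the first line of \eqref{eq:inner bound}, namely $R_1\le I(Y_1;U_{1c}|U_{2c})-J$, together with $R_1+R_{2c}\le I(Y_1;U_{1c},U_{2c})-J$, while the decoder-2 bounds become $R_1+R_{2pb}\le I(Y_2;U_{1c},U_{2pb}|U_{2c},X_2)$, $R_1+R_{2p}+R_{2pb}\le I(Y_2;U_{1c},X_2,U_{2pb}|U_{2c})$ and the third line $R_1+R_2\le I(Y_2;U_{2c},X_2,U_{1c},U_{2pb})$. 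Eliminating the split rates, i.e. projecting onto $(R_1,R_2)$ with $R_2=R_{2c}+R_{2p}+R_{2pb}$, produces the rest: pairing $R_1+R_{2pb}\le I(Y_2;U_{1c},U_{2pb}|U_{2c},X_2)$ with $R_{2pb}\ge0$ drops $R_{2pb}$ and gives the second line; adding $R_1+R_{2c}\le I(Y_1;U_{1c},U_{2c})-J$ to $R_{2p}+R_{2pb}\le J+I(Y_2;X_2,U_{2pb}|U_{2c},U_{1c})$ gives the fourth line; and adding the same decoder-1 bound to $R_1+R_{2p}+R_{2pb}\le I(Y_2;U_{1c},X_2,U_{2pb}|U_{2c})$ gives the $2R_1+R_2$ bound \eqref{eq:inner bound 2R1+R2}, with the single $-J$ surviving precisely because it is paid once at receiver~1 and fully refunded at receiver~2.

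I expect the FME to be the main obstacle, for two reasons. First, the complete list of error patterns yields more inequalities than the five that survive; one must check that the single-codeword decoder-2 events with $U_{1c}$ correct (e.g. $X_2$ alone or $U_{2pb}$ alone in error) are redundant in the projection, which reduces to comparing conditional mutual informations such as $I(Y_2;X_2|U_{2c},U_{1c},U_{2pb})+I(Y_2;U_{2pb}|U_{2c},U_{1c},X_2)$ against their joint counterpart $I(Y_2;X_2,U_{2pb}|U_{2c},U_{1c})$, and this is where the bookkeeping concentrates. Second, the $-J$ terms must be tracked with care through the elimination, since a single sign error in the penalty/refund ledger collapses the region; the cleanest way to keep this honest is to carry $R_{1b}$ and the $+J$ refunds symbolically and verify that they cancel in exactly the two lines \eqref{eq:inner bound 2R1+R2} and the first, and disappear everywhere else. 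Confirming the covering/packing exponents under the chosen factorization, in particular that the refund equals exactly $J$ and that superposing $U_{2pb}$ on $U_{1c}$ removes the isolated $U_{1c}$-error event at receiver~2, is the remaining technical point.
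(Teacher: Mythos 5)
Your overall architecture is the same as the paper's: you split $W_2$ into $W_{2c},W_{2p},W_{2pb}$ with $R_2=R_{2c}+R_{2p}+R_{2pb}$, superpose $U_{1c}$ and $X_2$ over $U_{2c}$, bin $U_{1c}$ against $X_2$ at a binning rate at least $I(U_{1c};X_2|U_{2c})$, and then eliminate the binning rate and the split rates by Fourier--Motzkin. Your bookkeeping of the penalty/refund terms and the specific pairings you list (decoder-1 bound plus the $R_{2p}+R_{2pb}$ bound for the fourth line of \eqref{eq:inner bound}; decoder-1 bound plus the $R_1+R_{2p}+R_{2pb}$ bound for \eqref{eq:inner bound 2R1+R2}) reproduce exactly the projection of the paper's pre-FME region \eqref{eq:achievable region before FME}.

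There is, however, one genuine defect, and it sits exactly in what you call the ``decisive structural choice.'' You generate $U_{2pb}$ from $P_{U_{2pb}|U_{1c},U_{2c}}$, i.e., \emph{without} conditioning on $X_2$, whereas the paper superposes $U_{2pb}^N$ over \emph{all} other codewords, including $X_2^N$. This matters twice. First, Th.~\ref{th:Inner Bounds for the CIFC-CCM} asserts achievability for an arbitrary joint law $P_{U_{1c},U_{2c},U_{2pb},X_1,X_2}$, but your construction can only induce laws satisfying the Markov chain $U_{2pb} \to (U_{1c},U_{2c}) \to X_2$; terms such as $I(Y_2;U_{1c},U_{2pb}|U_{2c},X_2)$ and $I(Y_2;X_2,U_{2pb}|U_{1c},U_{2c})$ can be strictly larger when $U_{2pb}$ is allowed to correlate with $X_2$ given $(U_{1c},U_{2c})$, so as written you prove a strictly weaker theorem. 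Second, your structure is what creates the extra decoder-2 error events you then must argue away: with $U_{2pb}$ not superposed on $X_2$, the event ``$X_2$ wrong, $(U_{2c},U_{1c},U_{2pb})$ correct'' is possible and yields a standalone constraint on $R_{2p}$ that is not implied by the five claimed bounds in general, whereas in the paper's scheme this event simply cannot occur because a wrong $X_2$ forces a wrong $U_{2pb}$. Both issues disappear with one change: draw $U_{2pb}\sim P_{U_{2pb}|U_{2c},U_{1c},X_2}$ (legitimate, since the cognitive encoder knows $X_2$); then your covering/packing inequalities coincide with \eqref{eq:achievable region before FME} and your elimination goes through verbatim. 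Finally, a small slip: your ``all-error'' bound reads $R_{2c}+R_1+R_{1b}+R_2\le\cdots$, which double-counts $R_{2c}$; it should be $R_1+R_{1b}+R_{2c}+R_{2p}+R_{2pb}\le I(U_{1c};X_2|U_{2c})+I(Y_2;U_{2c},X_2,U_{1c},U_{2pb})$.
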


\begin{proof}
The achievable region in \eqref{eq:inner bound} is obtained in a similar manner than the region in \cite{rini2009state} and using an identical notation.
The message $W_1$ is embedded in the codeword $U_{1c}^N$, where $c$ denotes ``common''.
As mentioned above, the message $W_2$ is rate-split in three parts $W_{2c}, W_{2p}$ and $W_{2pb}$.
$W_{2c}$ is the common part to be decoded by both receivers, $W_{2p}$ is the private part of $W_2$ encoded by both transmitters while $W_{2pb}$ is the private part of the primary message transmitted by the cognitive transmitter.
The rate-splits $W_{2c}$ and $W_{2pb}$ are embedded in the codewords $U_{2c}^N$ and $U_{2pb}^N$ while $W_{2p}$ is mapped directly in the channel output $X_2^N$.
We then consider the achievable region obtained by superposing the codewords $U_{1c}^N$ and $X_2^N$ over $U_{2c}^N$, by superposing $U_{2pb}^N$ over all the other codewords and by binning $U_{1c}^N$ against $X_2^N$.
With this encoding scheme we achieve the region:
\eas{
\Ro_{1c} & \geq I(U_{1c}; X_2| U_{2c})  \nonumber \\
R_{2c} + R_{1c} + \Ro_{1c}  & \leq I(Y_1 ; U_{2c}, U_{1c}),
\label{eq:achievable region before FME a}\\
R_{1c} + \Ro_{1c}  & \leq I(Y_1 ; U_{1c} | U_{2c}),
\label{eq:achievable region before FME b}\\
R_{2c}+R_{1c}+\Ro_{1c}+R_{2p}&+R_{2pb} \leq\notag\\
 I(Y_2 ; U_{2c}, U_{1c}, X_2 , U_{2pb} ) &+ I(U_{1c}; X_2 | U_{2c}),
\label{eq:achievable region before FME c}\\
 R_{1c}+\Ro_{1c}+R_{2p}+R_{2pb} & \leq\notag\\
 I(Y_2 ; U_{1c}, X_2 , U_{2pb} | U_{2c})  &+ I(U_{1c}; X_2 | U_{2c}),
 \label{eq:achievable region before FME d}\\
R_{2p}+R_{2pb} & \leq\notag\\
 I(Y_2 ;  X_2 , U_{2pb} | U_{2c}, U_{1c}) &+ I(U_{1c}; X_2 | U_{2c}),
\label{eq:achievable region before FME e}\\
 R_{1c}+\Ro_{1c}+R_{2pb} & \leq\notag\\
 I(Y_2 ; U_{1c} , U_{2pb} | U_{2c}, X_2) &+ I(U_{1c}; X_2 | U_{2c}),
 \label{eq:achievable region before FME f}\\
 R_{2pb} & \leq I(Y_2 ;  U_{2pb} | U_{2c}, U_{1c}, X_2).
 \label{eq:achievable region before FME g}
}{\label{eq:achievable region before FME}}
The region in \eqref{eq:inner bound} is obtained from the Fourier-Motzkin elimination (FME), \cite{lall-advanced}, of the region in \eqref{eq:achievable region before FME} using the rate-splitting equation
\ea{
R_2 = R_{2c}+R_{2p}+R_{2pb}.
\label{eq:rate splitting definition}
}
The region in  \eqref{eq:inner bound} differs from the region in \cite{rini2009state} in that there is no private cognitive message and in that there are additional constraints on the correct decoding of $U_{1c}^N$ at the primary receiver.
\end{proof}

The chain graph representation, \cite{rini2011achievable}, of the achievable scheme in Th.~\ref{th:Inner Bounds for the CIFC-CCM} is provided in Fig \ref{fig:CGRAS}:
the boxes represent codewords associated with the primary message while the diamond represents the cognitive message. Solid lines represent superposition coding and dotted lines binning.

\begin{figure}
\centering
\includegraphics[width=1.0\columnwidth]{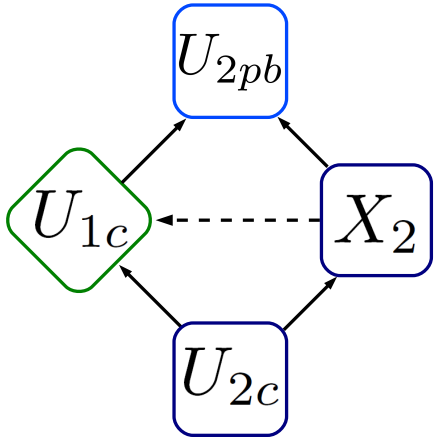}
\caption{The chain graph representation of the achievable scheme in Th.~\ref{th:Inner Bounds for the CIFC-CCM}.}
\label{fig:CGRAS}
\end{figure}

In the following we show that the region in Th.~\ref{th:Inner Bounds for the CIFC-CCM} is also achievable by considering that part of the cognitive message can be embedded into the primary messages and part of the primary public message into the primary private ones. We refer to this technique  as ``rate-sharing'', in lack of any specific term used in literature.

Rate-sharing was introduced by Hajek and Pursley \cite{hajek1979evaluation}  when deriving an achievable region for the broadcast channel with common messages
and successively employed when in \cite{liang2007rate} to derive an achievable region for the relay broadcast channel.
It consists in embedding part of a message after rate-splitting in another codeword: this can be done if rate is shifted from a common message to a private message and from the cognitive message to a primary message.

\begin{cor}{\bf Rate-Sharing}
\label{cor:rate-sharing}
The region in Th.~\ref{th:Inner Bounds for the CIFC-CCM} is also achieved by means of rate sharing.
\end{cor}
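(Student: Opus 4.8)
The plan is to realize rate-sharing as an enlargement of the coding scheme behind Th.~\ref{th:Inner Bounds for the CIFC-CCM} and then to check, by Fourier--Motzkin elimination, that this enlargement does not move the projection onto the $(R_1,R_2)$ plane. In that scheme the whole cognitive message sits in $U_{1c}^N$, so that $R_1=R_{1c}$, and the primary message is split as in \eqref{eq:rate splitting definition}. To rate-share I would introduce two non-negative parameters: a parameter $s_c\ge 0$ quantifying the portion of the cognitive rate carried instead by the common primary codeword $U_{2c}^N$ --- legitimate precisely because $U_{2c}^N$ is recovered at both receivers, exactly as the cognitive message must be --- and a parameter $s_p\ge 0$ moving rate from the common primary part to the private parts $W_{2p},W_{2pb}$. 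The bookkeeping then becomes $R_1=R_{1c}+s_c$ and $R_2=R_{2c}+R_{2p}+R_{2pb}$, with the common layer transporting $R_{2c}+s_c-s_p$, the private layer transporting $R_{2p}+R_{2pb}+s_p$, subject to $s_c,s_p\ge 0$ and $R_{2c}-s_p\ge 0$.

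First I would rewrite the pre-elimination inequalities \eqref{eq:achievable region before FME}, replacing in each decoding constraint the rate of a codeword by the shared rate it actually transports; the random-coding construction, the binning step and the joint-typicality decoding are identical to those in the proof of Th.~\ref{th:Inner Bounds for the CIFC-CCM}, so no new error-probability estimate is required and only the linear bookkeeping among the rate variables changes. Because $s_c=s_p=0$ returns \eqref{eq:achievable region before FME} verbatim, the rate-shared region automatically contains \eqref{eq:inner bound}; the content of the corollary is therefore the reverse inclusion, namely that sharing cannot buy any additional rate.

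For that reverse inclusion I would eliminate, by Fourier--Motzkin elimination \cite{lall-advanced}, the internal variables $R_{1c},\Ro_{1c},R_{2c},R_{2p},R_{2pb}$ together with the sharing parameters $s_c,s_p$, retaining only $R_1,R_2$, and verify that every surviving inequality is either redundant or one of the five constraints of \eqref{eq:inner bound}. The guiding principle --- this is the Hajek--Pursley observation \cite{hajek1979evaluation} --- is that each sharing parameter enters the enlarged system with opposite signs in the two groups of constraints it couples: moving cognitive rate onto $U_{2c}^N$ removes the binning penalty $I(U_{1c};X_2|U_{2c})$ on one side but loads the common-layer constraints on the other, so that projecting it out regenerates exactly the half-spaces already present in \eqref{eq:inner bound}, including the weighted bound \eqref{eq:inner bound 2R1+R2}. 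The main obstacle is precisely this elimination: it is mechanical but long, and the delicate points are keeping the non-negativity constraints $s_c,s_p\ge 0$ and $R_{2c}-s_p\ge 0$ active throughout --- these are what generate the extreme rate allocations --- and confirming that none of the new candidate inequalities they produce is binding, so that the rate-shared region collapses back onto \eqref{eq:inner bound}.
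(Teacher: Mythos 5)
Your overall strategy (enlarge the scheme with sharing parameters, then project by Fourier--Motzkin elimination) matches the paper's, but your first sharing parameter is not a valid coding operation, and this breaks the achievability half of the argument before the elimination even starts. Rate sharing must be feasible at the \emph{encoders}, not only at the decoders: a message part can ride on a codeword only if every transmitter that needs to know that codeword's index also knows that message part. Your $s_c$ puts bits of the cognitive message $W_1$ onto the common primary codeword $U_{2c}^N$; but transmitter~2 generates $X_2^N$ superposed on $U_{2c}^N$, hence must know which $U_{2c}^N$ codeword is sent, and it does not know $W_1$. So the enlarged region you start from is not achievable, and the subsequent elimination cannot repair this. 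The paper's corresponding parameter $\Delta_{2c}^{(1)}$ goes in the \emph{opposite} direction: bits of the common primary message $W_{2c}$ ride on the cognitive codeword $U_{1c}^N$, which is legitimate because transmitter~1 knows both messages and $U_{1c}^N$ is decoded at both receivers (and even this shift turns out to be useless: the paper sets $\Delta_{2c}^{(1)}=0$ at the first step of the elimination). Your bookkeeping $R_1=R_{1c}+s_c$ is precisely the inadmissible enlargement of the cognitive rate.

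Beyond this, your parametrization omits the shifts that do the work in the paper's proof: the paper enlarges \eqref{eq:achievable region before FME} with five parameters, in particular $\Delta_{2p}^{(1)},\Delta_{2pb}^{(1)}$, which move rate from the cognitive codeword to the \emph{private} primary parts, and it splits the common-to-private shift by destination ($\Delta_{2p}^{(2)}$ versus $\Delta_{2pb}^{(2)}$); these appear in \eqref{eq:achievable region before FME RS} with the sign pattern that the staged elimination exploits to reach \eqref{eq:achievable region FME step4} and finally \eqref{eq:inner bound}. There is also a problem with your logical framing: the content of Cor.~\ref{cor:rate-sharing} is the achievability direction, namely that the rate-shared scheme attains \eqref{eq:inner bound}. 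Declaring that direction ``automatic'' because $s_c=s_p=0$ recovers \eqref{eq:achievable region before FME} presupposes that the plain elimination of \eqref{eq:achievable region before FME} already yields \eqref{eq:inner bound}, i.e., it presupposes the proof of Th.~\ref{th:Inner Bounds for the CIFC-CCM}; but the rate-sharing argument is intended as a self-contained alternative derivation that avoids the redundancy arguments behind that proof (see the remark following the corollary), so your proof is circular in exactly the setting where the corollary has content, and the ``reverse inclusion'' you concentrate on is not needed for achievability at all.
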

The proof of Cor.~\ref{cor:rate-sharing} is given in the Appendix \ref{app:Proof of Cor rate-sharing}.

\begin{rem}
The region in \eqref{eq:inner bound} can also be derived with the approach in \cite{RTDjournal2} which is similar to the simplification of the Han and Kobayashi region for the interference channel in \cite{chong2008han}.
In this approach, one needs to show that certain bounds obtained through the FME are redundant when considering the union over all the possible input distributions.
The proof using rate-sharing is perhaps more laborious but is more powerful in that it does not require one to identify the redundant bounds.
\end{rem}

We now provide two simpler sub-schemes that can be obtained by removing rate-splitting in the scheme of Th.~\ref{th:Inner Bounds for the CIFC-CCM}.

\begin{cor}{\bf Achievable Region Applying only Superposition Coding}
\label{cor:scheme D}
\label{cor:Only superposition coding}

The inner bound in Th.~\ref{th:Inner Bounds for the CIFC-CCM} for $R_{2p}=R_{2pb}=0$ becomes
\eas{
R_1 & \leq I(Y_1 ; X_1 | X_2),
\label{eq:scheme D R1 bound} \\
R_1 & \leq I(Y_2 ; X_1 | X_2),
\label{eq:scheme D R1 Y2 bound} \\
R_1+R_2  & \leq I(Y_1 ; X_1 , X_2),
\label{eq:scheme D Y1 bound} \\
R_1+R_2  & \leq I(Y_2 ; X_1 , X_2),
\label{eq:scheme D Y2 bound}
}{\label{eq:scheme D}}
for any distribution $P_{X_1,X_2}$.
\end{cor}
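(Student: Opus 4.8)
The plan is to obtain \eqref{eq:scheme D} as a degenerate case of the general inner bound \eqref{eq:inner bound}, in which the rate-splitting of the primary message is switched off. First I would translate the hypothesis $R_{2p}=R_{2pb}=0$ into a choice of the auxiliary random variables used in Th.~\ref{th:Inner Bounds for the CIFC-CCM}. Since $W_{2pb}$ now carries no rate, the codeword $U_{2pb}^N$ is superfluous and I set $U_{2pb}$ to a constant. Since the entire primary message is common ($R_2=R_{2c}$) and there is no private primary part, the input $X_2$ need not be distinguished from the common codeword, so I identify $U_{2c}=X_2$; likewise, with no separate cognitive layer to bin, I take $U_{1c}=X_1$. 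Under this identification the binning penalty disappears, $I(U_{1c};X_2\mid U_{2c})=I(X_1;X_2\mid X_2)=0$, which is consistent with the fact that once $X_2$ is available as the conditioning codeword no binning rate $\Ro_{1c}$ is required.

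Next I would substitute these choices into the five inequalities of \eqref{eq:inner bound} and collapse each conditional mutual information. The first three bounds become $R_1\le I(Y_1;X_1\mid X_2)$, $R_1\le I(Y_2;X_1\mid X_2)$ and $R_1+R_2\le I(Y_2;X_1,X_2)$, i.e.\ \eqref{eq:scheme D R1 bound}, \eqref{eq:scheme D R1 Y2 bound} and \eqref{eq:scheme D Y2 bound}. In the fourth bound the term $I(Y_2;X_2,U_{2pb}\mid U_{1c},U_{2c})$ reduces to $I(Y_2;X_2\mid X_1,X_2)=0$, leaving $R_1+R_2\le I(Y_1;X_1,X_2)$, which is \eqref{eq:scheme D Y1 bound}.

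The one step that requires an argument rather than a substitution is the elimination of the $2R_1+R_2$ constraint \eqref{eq:inner bound 2R1+R2}. After substitution it reads $2R_1+R_2\le I(Y_1;X_1,X_2)+I(Y_2;X_1\mid X_2)$. I would show it is redundant by noting that its right-hand side is exactly the sum of the right-hand sides of \eqref{eq:scheme D R1 Y2 bound} and \eqref{eq:scheme D Y1 bound}; hence every pair $(R_1,R_2)$ obeying those two individual inequalities automatically obeys the $2R_1+R_2$ inequality, and the latter may be dropped. This leaves precisely the four bounds of \eqref{eq:scheme D}.

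I expect no genuine obstacle here, since the statement is a specialization of an already-established region: the substantive content is confined to checking that the chosen auxiliary identification legitimately corresponds to setting $R_{2p}=R_{2pb}=0$, and to the elementary redundancy check above. As an alternative that matches the hypothesis more literally, one could instead start from the pre-FME region \eqref{eq:achievable region before FME}, set $R_{2p}=R_{2pb}=0$ and $U_{2pb}$ constant---which makes \eqref{eq:achievable region before FME e} and \eqref{eq:achievable region before FME g} trivially hold---and redo the Fourier--Motzkin elimination in the reduced variables $R_{1c}=R_1$, $R_{2c}=R_2$ and $\Ro_{1c}$; this route avoids guessing the auxiliary identification but is more laborious.
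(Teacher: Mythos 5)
Your proof is correct and follows essentially the route the paper intends: the corollary is stated without explicit proof as the direct specialization of Th.~\ref{th:Inner Bounds for the CIFC-CCM} under the degenerate assignment $U_{2pb}=\emptyset$, $U_{2c}=X_2$, $U_{1c}=X_1$, which is the same assignment the paper itself uses for the Gaussian counterpart in Cor.~\ref{cor:Region achievable applying only superposition coding in the Gaussian CIFC-CCM}. Your verification that the $2R_1+R_2$ constraint \eqref{eq:inner bound 2R1+R2} becomes the sum of \eqref{eq:scheme D R1 Y2 bound} and \eqref{eq:scheme D Y1 bound}, and is therefore redundant, is precisely the one step the paper leaves implicit, and you handle it correctly.
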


The achievable scheme in Cor.~\ref{cor:Only superposition coding} employs only superposition coding and both messages are decoded at both receivers.

\begin{cor} {\bf Achievable Region Applying only Binning}
\label{cor:Scheme E}

The inner bound in Th.~\ref{th:Inner Bounds for the CIFC-CCM} for $R_{2c}=R_{2pb}=0$ becomes
\eas{
R_1 & \leq I(Y_1 ; U_{1c}) - I(U_{1c}; X_2),
\label{eq:scheme E R1 bound }\\
R_1 & \leq I(Y_2 ; X_1| X_2),
\label{eq:scheme E R1 bound 2}\\
R_1 + R_2 & \leq I(Y_2 ; X_1, X_2),
\label{eq:scheme E sum rate Y1}\\
R_1 + R_2 & \leq I(Y_1; U_{1c}) + I(Y_2 ; X_2 | U_{1c}),
\label{eq:scheme E extra sum rate Y1}
}{\label{eq:scheme E}}
for any distribution $P_{U_{1c},X_1,X_2}$.
\end{cor}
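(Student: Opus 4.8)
The plan is to specialize the general region \eqref{eq:inner bound} by setting $R_{2c}=R_{2pb}=0$, in direct analogy with the superposition-only reduction of Cor.~\ref{cor:Only superposition coding}. Since the common primary codeword $U_{2c}^N$ and the cognitive-broadcast codeword $U_{2pb}^N$ now carry zero rate, I would take both auxiliaries $U_{2c}$ and $U_{2pb}$ to be constant, so that the whole primary message ($R_2=R_{2p}$) is mapped directly onto $X_2^N$. Substituting $U_{2c}=U_{2pb}=\emptyset$ into the five inequalities of \eqref{eq:inner bound}, the first bound collapses to $R_1\le I(Y_1;U_{1c})-I(U_{1c};X_2)$ and the fourth to $R_1+R_2\le I(Y_1;U_{1c})+I(Y_2;X_2\mid U_{1c})$; these already coincide with the first inequality of \eqref{eq:scheme E} and with \eqref{eq:scheme E extra sum rate Y1}, so no further manipulation is needed for them.

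After the same substitution the second and third bounds read $R_1\le I(Y_2;U_{1c}\mid X_2)$ and $R_1+R_2\le I(Y_2;U_{1c},X_2)$, which still feature $U_{1c}$ in place of $X_1$. To recover \eqref{eq:scheme E R1 bound 2} and \eqref{eq:scheme E sum rate Y1} I would use two structural facts about the scheme of Th.~\ref{th:Inner Bounds for the CIFC-CCM}. First, because the cognitive message has no private component, once $U_{2c}$ and $U_{2pb}$ are frozen the channel input $X_1$ is a deterministic function of $(U_{1c},X_2)$, hence $I(Y_2;X_1\mid U_{1c},X_2)=0$. Second, $Y_2$ is produced from the inputs only through $(X_1,X_2)$, which yields the Markov chain $U_{1c}-(X_1,X_2)-Y_2$ and therefore $I(Y_2;U_{1c}\mid X_1,X_2)=0$. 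Chaining these two identities gives $I(Y_2;U_{1c}\mid X_2)=I(Y_2;U_{1c},X_1\mid X_2)=I(Y_2;X_1\mid X_2)$, and adding $I(Y_2;X_2)$ to both sides gives $I(Y_2;U_{1c},X_2)=I(Y_2;X_1,X_2)$, which is exactly what the second and third bounds require.

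Finally I would show that the fifth inequality \eqref{eq:inner bound 2R1+R2} is redundant. Under the same substitution it becomes $2R_1+R_2\le I(Y_1;U_{1c})+I(Y_2;U_{1c},X_2)-I(U_{1c};X_2)$, and using $I(Y_2;U_{1c},X_2)=I(Y_2;X_1,X_2)$ this reads $2R_1+R_2\le [\,I(Y_1;U_{1c})-I(U_{1c};X_2)\,]+I(Y_2;X_1,X_2)$, i.e. precisely the sum of the specialized first bound and the sum-rate bound \eqref{eq:scheme E sum rate Y1}; it is thus implied by the remaining constraints and can be dropped, leaving the four inequalities of \eqref{eq:scheme E}. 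The one step I expect to require care is the replacement of $U_{1c}$ by $X_1$ in the $Y_2$-terms: it is legitimate here only because the cognitive input carries no private rate and is a function of $(U_{1c},X_2)$, and one should confirm that this determinism does not interfere with the binning construction, since the binning cost $I(U_{1c};X_2)$ is evaluated before $X_1$ is formed and is therefore unchanged.
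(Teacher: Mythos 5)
Your proposal is correct and takes essentially the same route as the paper, which gives no separate proof of this corollary and treats it as the direct specialization of Th.~\ref{th:Inner Bounds for the CIFC-CCM} obtained by setting $U_{2c}=U_{2pb}=\emptyset$. The two points you elaborate---identifying $I(Y_2;U_{1c}\mid X_2)=I(Y_2;X_1\mid X_2)$ and $I(Y_2;U_{1c},X_2)=I(Y_2;X_1,X_2)$ via the determinism of $X_1$ given the codewords together with the chain $U_{1c}-(X_1,X_2)-Y_2$, and discarding the $2R_1+R_2$ constraint because its right-hand side equals the sum of the right-hand sides of the $R_1$ binning bound and of \eqref{eq:scheme E sum rate Y1}---are precisely the details the paper leaves implicit.
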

The achievable scheme in Cor.~\ref{cor:Scheme E} employs only binning and the primary message is private.

\section{Capacity Results}
\label{sec:Capacity for the Cognitive Interference Channel with a Common Cognitive Message}

In this section we derive two capacity results for subsets of the general CIFC-CCM and capacity for the semi-deterministic CIFC-CCM.

\subsection{The Very Strong Interference Regime}
We begin by proving capacity in the ``very strong interference'' regime, a regime where there is no loss of optimality in having both receivers decode both messages.  This regime is reminiscent of the ``very strong interference'' regime for the IFC \cite{sato.IFC.strong} and the CIFC \cite{MaricUnidirectionalCooperation06}. There, the capacity of the channel reduces to the capacity of the compound Multiple Access Channel (MAC) obtained by considering the intersection of the capacity region of the two MAC channels where each decoder decodes both messages.

\begin{thm}{\bf Capacity in the ``Very Strong Interference'' Regime}
\label{th:Capacity in the Very Strong Interference Regime}

If
\ea{
I(Y_2 ; X_1 , X_2) \leq  I(Y_1 ; X_1 , X_2),
\label{eq:very strong interference}
}
the region in \eqref{eq:Outer Bound for the CIFC-CCM} is capacity.
\end{thm}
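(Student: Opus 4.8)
The plan is to establish capacity by matching the outer bound of Theorem~\ref{th:Outer Bound for the CIFC-CCM} with a suitable inner bound. Since Theorem~\ref{th:Outer Bound for the CIFC-CCM} already certifies that the region \eqref{eq:Outer Bound for the CIFC-CCM} is an outer bound for every CIFC-CCM, it remains only to exhibit an achievable scheme whose region coincides with \eqref{eq:Outer Bound for the CIFC-CCM} under the hypothesis \eqref{eq:very strong interference}. The natural candidate is the pure superposition-coding region of Corollary~\ref{cor:Only superposition coding}, in which both messages are decoded at both receivers; this is precisely the compound-MAC strategy one expects to be optimal in a very strong interference regime.

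First I would recall the achievable region \eqref{eq:scheme D}, which consists of the two single-rate bounds \eqref{eq:scheme D R1 bound} and \eqref{eq:scheme D R1 Y2 bound} together with the two sum-rate bounds \eqref{eq:scheme D Y1 bound} and \eqref{eq:scheme D Y2 bound}. Comparing term by term with the outer region \eqref{eq:Outer Bound for the CIFC-CCM}, the bounds \eqref{eq:scheme D R1 bound}, \eqref{eq:scheme D R1 Y2 bound} and \eqref{eq:scheme D Y2 bound} are literally the three constraints \eqref{eq:Outer Bound for the CIFC-CCM R1}, \eqref{eq:Outer Bound for the CIFC-CCM R1 weak} and \eqref{eq:Outer Bound for the CIFC-CCM sum rate 1} defining the outer bound. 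The only discrepancy is the additional sum-rate constraint \eqref{eq:scheme D Y1 bound}, namely $R_1+R_2 \leq I(Y_1; X_1, X_2)$, which is present in the inner bound but absent from the outer bound.

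The key step is then to observe that hypothesis \eqref{eq:very strong interference}, $I(Y_2; X_1, X_2) \leq I(Y_1; X_1, X_2)$, makes this extra constraint redundant: for every input distribution the sum-rate bound \eqref{eq:scheme D Y2 bound} is at least as tight as \eqref{eq:scheme D Y1 bound}, so dropping the latter does not shrink the achievable region. Consequently, for each fixed $P_{X_1,X_2}$ the achievable region of Corollary~\ref{cor:Only superposition coding} coincides with the outer region of Theorem~\ref{th:Outer Bound for the CIFC-CCM}, and taking the union over all input distributions yields that \eqref{eq:Outer Bound for the CIFC-CCM} is capacity.

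The argument is essentially a redundancy check, so there is no deep analytic obstacle; the point requiring care is the quantifier. I would emphasize that \eqref{eq:very strong interference} is assumed to hold for \emph{every} $P_{X_1,X_2}$, exactly as the strong-interference condition \eqref{eq:strong interference} in Remark~\ref{rem:storng int outer bound} is, which is what allows the per-distribution redundancy to be inherited by the union defining the two regions. If the inequality held only for some distributions, the union of inner regions could still fall short of the union of outer regions, so the uniformity of \eqref{eq:very strong interference} over all inputs is the load-bearing assumption rather than any coding ingredient.
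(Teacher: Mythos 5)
Your proposal is correct and follows essentially the same route as the paper: both take the pure superposition-coding region of Cor.~\ref{cor:Only superposition coding}, observe that under \eqref{eq:very strong interference} the extra sum-rate constraint \eqref{eq:scheme D Y1 bound} is dominated by \eqref{eq:scheme D Y2 bound} and can be dropped, and conclude that the remaining inner bound coincides with the outer bound of Th.~\ref{th:Outer Bound for the CIFC-CCM}. Your added remark that \eqref{eq:very strong interference} must hold for \emph{every} input distribution $P_{X_1,X_2}$ (so that the per-distribution redundancy survives the union) is a correct and worthwhile clarification of a quantifier the paper leaves implicit.
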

\begin{proof}
Consider the achievable region in Cor.~\ref{cor:Only superposition coding}:
under condition \eqref{eq:very strong interference} the rate bound \eqref{eq:scheme D Y1 bound} can be dropped and the resulting achievable region coincides with the outer bound in \eqref{eq:Outer Bound for the CIFC-CCM}.
\end{proof}

\begin{rem}
The ``very strong interference'' regime for the CIFC is defined by condition \eqref{eq:very strong interference} and \eqref{eq:strong interference}.
However, condition \eqref{eq:strong interference} is not required to prove capacity for the CIFC-CCM.
\end{rem}
Capacity in the ``very strong interference'' regime  for the CIFC is achieved by having both receivers decode both messages and by superposing the cognitive message over the primary message \cite{maric2005capacity}.
This strategy achieves capacity for a class of CIFC-CCM that we also term  ``very strong interference'' regime.
This definition is not fully accurate since the primary receiver decodes both messages, but is coherent with the CIFC literature.

\subsection{The Primary Decodes Cognitive Regime}
\label{sec:The primary decodes cognitive regime}

The following result shows capacity for a class of channels in which binning allows full interference cancellation at the cognitive decoder.
This result is inspired by the ``primary decodes cognitive'' capacity result available for the Gaussian CIFC \cite{Rini:Allerton2010}.

\begin{thm}{\bf Capacity in the ``Primary Decodes Cognitive'' Regime}
\label{th:Capacity in the Primary Decodes Cognitive Regime}

If
\eas{
I(Y_1 ; U) & \geq I(Y_2 ; U),
\label{eq:PDC DCM}\\
I(U ; X_2 |Y_1) & = 0
\label{eq:PDC binning}
}
for all the distributions  $P_{U,X_1,X_2}$ that factor as
\ea{
P_U P_{X_2} P_{X_1 |U, X_2},
}
the capacity for a CIFC-CCM is given by the region in \eqref{eq:Outer Bound for the CIFC-CCM}
union over all the distributions $P_{X_1,X_2}$.

\end{thm}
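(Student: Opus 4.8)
The plan is to establish achievability of the outer bound of Theorem~\ref{th:Outer Bound for the CIFC-CCM}, so that it coincides with capacity under \eqref{eq:PDC DCM}--\eqref{eq:PDC binning}. The achievable region I would use is the binning-only region of Cor.~\ref{cor:Scheme E}: its bounds \eqref{eq:scheme E R1 bound 2} and \eqref{eq:scheme E sum rate Y1} already coincide verbatim with \eqref{eq:Outer Bound for the CIFC-CCM R1 weak} and \eqref{eq:Outer Bound for the CIFC-CCM sum rate 1}. It therefore remains to reproduce \eqref{eq:Outer Bound for the CIFC-CCM R1} from \eqref{eq:scheme E R1 bound } and to show that the extra constraint \eqref{eq:scheme E extra sum rate Y1} is inactive.

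First I would fix any target input law $P_{X_1 X_2}$ from the outer-bound union and invoke the functional representation lemma to obtain $U = U_{1c}$ with $U$ independent of $X_2$, together with a deterministic map $X_1 = g(U, X_2)$ reproducing $P_{X_1 X_2}$. The resulting joint law has the admissible form $P_U P_{X_2} P_{X_1 | U, X_2}$, so the hypotheses \eqref{eq:PDC DCM} and \eqref{eq:PDC binning} apply to it. The key identity is
\begin{align*}
I(Y_1 ; U) - I(U ; X_2) = I(U ; Y_1 | X_2) - I(U ; X_2 | Y_1),
\end{align*}
which, using $I(U; X_2) = 0$ and the full-cancellation hypothesis \eqref{eq:PDC binning}, reduces \eqref{eq:scheme E R1 bound } to $I(Y_1 ; U | X_2)$. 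Since $X_1$ is a deterministic function of $(U, X_2)$ and the channel imposes the Markov chain $U - (X_1, X_2) - Y_1$, applying data processing in both directions gives $I(Y_1 ; U | X_2) = I(Y_1 ; X_1 | X_2)$, which is exactly \eqref{eq:Outer Bound for the CIFC-CCM R1}.

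To discharge \eqref{eq:scheme E extra sum rate Y1} I would lower-bound its right-hand side, first using \eqref{eq:PDC DCM},
\begin{align*}
I(Y_1 ; U) + I(Y_2 ; X_2 | U) &\ge I(Y_2 ; U) + I(Y_2 ; X_2 | U) \\
&= I(Y_2 ; X_2) + I(Y_2 ; U | X_2),
\end{align*}
and then the same deterministic-map argument for $Y_2$, namely $I(Y_2 ; U | X_2) = I(Y_2 ; X_1 | X_2)$, to conclude $I(Y_1 ; U) + I(Y_2 ; X_2 | U) \ge I(Y_2 ; X_1, X_2)$. Hence \eqref{eq:scheme E extra sum rate Y1} is never tighter than \eqref{eq:Outer Bound for the CIFC-CCM sum rate 1}. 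Taking the union over all target $P_{X_1 X_2}$ shows that the region of Cor.~\ref{cor:Scheme E} contains \eqref{eq:Outer Bound for the CIFC-CCM}, while the reverse inclusion is Theorem~\ref{th:Outer Bound for the CIFC-CCM}, which yields capacity.

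The step I expect to be most delicate is the use of the functional representation lemma: the tightness of \eqref{eq:scheme E R1 bound } and the redundancy of \eqref{eq:scheme E extra sum rate Y1} both rest on the two equalities $I(Y_i ; U | X_2) = I(Y_i ; X_1 | X_2)$, $i \in \{1, 2\}$, which hold only because $X_1$ is a \emph{deterministic} function of $(U, X_2)$ and would fail for a generic stochastic auxiliary. I would also need to confirm that the hypotheses \eqref{eq:PDC DCM}--\eqref{eq:PDC binning}, posited for every admissible law, indeed cover the degenerate conditional produced by the lemma.
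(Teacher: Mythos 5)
Your proposal is correct and follows essentially the same route as the paper: both use the binning-only scheme of Cor.~\ref{cor:Scheme E}, reduce \eqref{eq:scheme E R1 bound } to $I(Y_1;X_1|X_2)$ via the identity $I(Y_1;U)-I(U;X_2)=I(U;Y_1|X_2)-I(U;X_2|Y_1)$ together with \eqref{eq:PDC binning}, and use \eqref{eq:PDC DCM} to show that \eqref{eq:scheme E extra sum rate Y1} is dominated by the sum-rate bound \eqref{eq:scheme E sum rate Y1}. Your explicit use of the functional representation lemma, giving $U$ independent of $X_2$ with $X_1$ a deterministic function of $(U,X_2)$, makes rigorous the choice of auxiliary that the paper leaves implicit in its equalities $I(Y_i;U|X_2)=I(Y_i;X_1|X_2)$.
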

\begin{proof}
Consider the scheme in \eqref{eq:scheme E}. For a binning scheme with perfect interference cancellation where \eqref{eq:PDC binning} holds, the inner bound \eqref{eq:scheme E R1 bound } achieves the first outer bound \eqref{eq:Outer Bound for the CIFC-CCM R1}
\ea{
&I(Y_1 ; U_{1c}) - I(U_{1c}; X_2)\notag \\
&= I(Y_1 ; U_{1c} , X_2) - I(Y_1 ; X_2 | U_{1c}) - I(U_{1c}; X_2)\notag \\
&= I(Y_1 ; U_{1c} | X_2) + I(Y_1 ; X_2) - I(X_2 ; Y_1 , U_{1c})\notag\\
&\quad \quad  + I(X_2 ; U_{1c}) - I(U_{1c}; X_2)\notag \\
&= I(Y_1 ; U_{1c} | X_2) - I(X_2 ; U_{1c} | Y_1)\notag \\
&= I(Y_1 ; X_1 | X_2).
}
The bound in \eqref{eq:scheme E extra sum rate Y1} can be rewritten as:
\ea{
&I(Y_1; U_{1c}) + I(Y_2 ; X_2 | U_{1c})\notag \\
&= I(Y_1; U_{1c}) + I(Y_2 ; X_2 , U_{1c}) - I(Y_2; U_{1c})\notag \\
&= I(Y_1; U_{1c}) + I(Y_2 ; X_1 , X_2) -  I(Y_2; U_{1c})\notag \\
&= I(Y_1; U_{1c}) + I(Y_2 ; X_2 | X_1) -  I(X_2; U_{1c}),
\label{eq:scheme E extra sum rate Y1 new}
}
while \eqref{eq:scheme E sum rate Y1} gives
\ea{
&I(Y_2 ; X_1, X_2) = I(Y_2 ; X_1) + I(Y_2 ; X_2 | X_1)\notag \\
&= I(Y_2; U_{1c}) -  I(X_2; U_{1c}) + I(Y_2 ; X_2 | X_1).
\label{eq:scheme E sum rate Y1 new}
}
This scheme achieves capacity if condition \eqref{eq:PDC DCM} holds since \eqref{eq:scheme E extra sum rate Y1} is redundant in this case as it can be seen from Equations~\eqref{eq:scheme E extra sum rate Y1 new} and \eqref{eq:scheme E sum rate Y1 new}.
\end{proof}

\subsection{Capacity for the Semi-Deterministic Channel}
\label{sec:Capacity for the Semi-Deterministic Cognitive Interference Channel with a Common Cognitive Message}

The semi-deterministic CIFC-CCM is a general CIFC-CCM where the channel output at the cognitive decoder is a deterministic function of the channel inputs, i.e.,
\ea{
Y_1 = f_{Y_1} (X_1,X_2)
\label{eq:det condition Y_1}
}
while the primary output is any random function of the inputs.
When condition \eqref{eq:det condition Y_1} holds, binning at the cognitive transmitter can fully pre-cancel the effect of the interference at the cognitive receiver thus making \eqref{eq:Outer Bound for the CIFC-CCM R1} achievable.

\begin{thm}{\bf Capacity of the Semi-Deterministic CIFC-CCM}
\label{th:Capacity of the semi deterministic CIFC-CCM}

The capacity of the semi-deterministic channel is
\eas{
R_1 & \leq  H(Y_1|X_2), \label{eq:R1 semi deterministic CIFC-CCM 1}\\
R_1 & \leq  I(Y_2 ; X_1 |  X_2), \label{eq:R1  semi deterministic CIFC-CCM 2}\\
R_1 + R_2 & \leq I(Y_2; X_1, X_2), \label{eq:Rsum semi deterministic CIFC-CCM}
}{\label{eq:Capacity of the semi deterministic CIFC-CCM}}
union over all the distributions $P_{X_1,X_2}$.
\end{thm}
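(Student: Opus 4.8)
The plan is to establish the region \eqref{eq:Capacity of the semi deterministic CIFC-CCM} simultaneously as an outer and an inner bound. The outer bound is immediate: since $Y_1 = f_{Y_1}(X_1,X_2)$ by \eqref{eq:det condition Y_1} we have $H(Y_1 \mid X_1,X_2)=0$, so $I(Y_1;X_1\mid X_2)=H(Y_1\mid X_2)$, and the three inequalities of Theorem~\ref{th:Outer Bound for the CIFC-CCM} specialize verbatim to \eqref{eq:R1 semi deterministic CIFC-CCM 1}--\eqref{eq:Rsum semi deterministic CIFC-CCM}. All of the difficulty therefore sits on the achievability side.

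For the inner bound I would invoke the general region of Theorem~\ref{th:Inner Bounds for the CIFC-CCM} and specialize the auxiliaries. The first point to appreciate is that pure binning (Corollary~\ref{cor:Scheme E}) does not suffice: whenever $f_{Y_1}(\cdot,x_2)$ is many-to-one, the input $X_1$ carries strictly more information than $Y_1$ exposes, and those ``hidden'' degrees of freedom must convey part of the primary message directly from the cognitive transmitter, which is exactly the function of the codeword $U_{2pb}^N$. Accordingly, for an arbitrary target law $P_{X_1,X_2}$ I would take the legitimate assignment $U_{2c}=\emptyset$, $\; U_{2pb}=X_1$, and $U_{1c}=Y_1=f_{Y_1}(X_1,X_2)$; this is a valid member of the factorization class $P_{U_{1c},U_{2c},U_{2pb},X_1,X_2}$, so the region \eqref{eq:inner bound} applies.

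With this choice the determinism of $Y_1$ carries the computation, because $Y_1$ may be freely inserted into or removed from any conditioning and $H(Y_1\mid X_1,X_2)=0$. The first bound becomes $H(Y_1)-I(Y_1;X_2)=H(Y_1\mid X_2)$, so the binning pre-cancels the interference seen at the cognitive decoder exactly; the second becomes $I(Y_2;Y_1,X_1\mid X_2)=I(Y_2;X_1\mid X_2)$; and the sum-rate bound becomes $I(Y_2;X_2,Y_1,X_1)=I(Y_2;X_1,X_2)$. These three reproduce \eqref{eq:R1 semi deterministic CIFC-CCM 1}--\eqref{eq:Rsum semi deterministic CIFC-CCM} precisely.

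It then remains to verify that the two surviving constraints of \eqref{eq:inner bound} are inactive. Writing $I(Y_2;X_1,X_2)=I(Y_2;Y_1)+I(Y_2;X_1,X_2\mid Y_1)$, the fourth bound evaluates to $H(Y_1)+I(Y_2;X_1,X_2\mid Y_1)$, which exceeds the sum-rate bound \eqref{eq:Rsum semi deterministic CIFC-CCM} by exactly $H(Y_1)-I(Y_1;Y_2)=H(Y_1\mid Y_2)\ge 0$ and is hence redundant; likewise \eqref{eq:inner bound 2R1+R2} collapses to $H(Y_1\mid X_2)+I(Y_2;X_1,X_2)$, the sum of the first and the sum-rate bounds, so it too is implied. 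The achievable region then coincides with \eqref{eq:Capacity of the semi deterministic CIFC-CCM} and matches the outer bound. The main conceptual hurdle is recognizing that $U_{2pb}$ is indispensable and guessing the self-consistent assignment $U_{1c}=Y_1,\;U_{2pb}=X_1$; once it is in place, every remaining step is a one-line information identity driven by $H(Y_1\mid X_1,X_2)=0$.
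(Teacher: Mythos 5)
Your proof is correct, and it reaches the result by a genuinely different route than the paper. The paper proves achievability inside the binning-only scheme of Cor.~\ref{cor:Scheme E}: it sets $U_{1c}=Y_1$, notes that the first bound becomes $H(Y_1|X_2)$, and argues that \eqref{eq:scheme E extra sum rate Y1} dominates \eqref{eq:scheme E sum rate Y1}; no $U_{2pb}$ codeword is used. You instead evaluate the full region of Th.~\ref{th:Inner Bounds for the CIFC-CCM} at $U_{2c}=\emptyset$, $U_{1c}=Y_1$, $U_{2pb}=X_1$ and then eliminate the two composite bounds. The difference is substantive exactly in the many-to-one case you flag: the bounds \eqref{eq:scheme E R1 bound 2} and \eqref{eq:scheme E sum rate Y1} are stated with $X_1$, but they descend from Th.~\ref{th:Inner Bounds for the CIFC-CCM} as $I(Y_2;U_{1c}|X_2)$ and $I(Y_2;U_{1c},X_2)$, and the substitution $U_{1c}\to X_1$ is justified only when $I(Y_2;X_1|U_{1c},X_2)=0$, e.g.\ when $X_1$ is a deterministic function of $(U_{1c},X_2)$. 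With $U_{1c}=Y_1$ and $f_{Y_1}(\cdot,x_2)$ non-invertible this fails, and the same hidden assumption ($H(Y_2|X_2,Y_1)=H(Y_2|X_1,X_2)$) is what makes the paper's displayed chain of equalities go through. Your choice $U_{2pb}=X_1$ is precisely what removes that assumption: the residual information $I(Y_2;X_1|Y_1,X_2)$ is delivered to receiver~2 as primary rate through $U_{2pb}$, so every receiver-2 term involves the full pair $(X_1,X_2)$, and the argument genuinely covers non-invertible $f_{Y_1}$, as the remark following the theorem promises. The price is two extra redundancy checks, which you carry out correctly (the fourth bound exceeds \eqref{eq:Rsum semi deterministic CIFC-CCM} by exactly $H(Y_1|Y_2)\geq 0$, and \eqref{eq:inner bound 2R1+R2} equals the sum of the first and third bounds). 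One caveat: your assertion that Cor.~\ref{cor:Scheme E} ``does not suffice'' is only demonstrated for the particular evaluation $U_{1c}=Y_1$; whether the union of that scheme over all $P_{U_{1c},X_1,X_2}$ covers the region is a separate question, but nothing in your argument relies on that assertion.
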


\begin{proof}
Consider the transmission scheme given in Cor.~\ref{cor:Scheme E}
to obtain the region \eqref{eq:scheme E}.
For the assignment $U_{1c}=Y_1$, which is possible given \eqref{eq:det condition Y_1},  the bound \eqref{eq:scheme E extra sum rate Y1} is larger than \eqref{eq:scheme E sum rate Y1}
\ea{
& I(Y_1; U_{1c}) + I(Y_2 ; X_2 | U_{1c})\notag\\
& = H(Y_1)+ H(Y_2 | Y_1)  +H(Y_2 | X_1, X_2)  \nonumber \\
& = I(Y_2 ; X_1, X_2) + H(Y_1 |Y_2)  \geq I(Y_2 ; X_1, X_2)
}
and the inner bound in \eqref{eq:scheme E} coincides with
\eqref{eq:Capacity of the semi deterministic CIFC-CCM}
which is also equivalent to the outer bound.
\end{proof}

Note that the result in Th.~\ref{th:Capacity of the semi deterministic CIFC-CCM} does not require the $f_{Y_1}$ to be invertible as for the classical result for the deterministic IFC by El Gamal and Costa \cite{elgamal_det_IC}.

\begin{cor}{\bf Capacity of the Semi-Deterministic CIFC}

The region in \eqref{eq:Capacity of the semi deterministic CIFC-CCM} determines capacity also for a semi-deterministic CIFC in the
``strong interference'' regime.

\end{cor}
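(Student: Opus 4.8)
The plan is to reduce the corollary to the capacity result for the semi-deterministic CIFC-CCM already established in Th.~\ref{th:Capacity of the semi deterministic CIFC-CCM}, exploiting the fact recorded in Rem.~\ref{rem:storng int outer bound} that under the strong interference condition \eqref{eq:strong interference} the CIFC and the CIFC-CCM share the same capacity region. The organizing observation is that the defining extra requirement of the CIFC-CCM --- forcing the primary receiver to also decode the cognitive message --- is costless precisely when \eqref{eq:strong interference} holds, so capacity for the two channels coincides in this regime.

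For achievability I would argue by inclusion of code classes. Every CIFC-CCM code is, verbatim, a valid CIFC code: the encoders are identical, and the CIFC only asks the primary receiver to recover $W_2$, a subset of what the CIFC-CCM code already guarantees (namely $(W_1,W_2)$ at receiver~2). Hence the semi-deterministic CIFC-CCM achievable region of Th.~\ref{th:Capacity of the semi deterministic CIFC-CCM}, attained there under the assignment $U_{1c}=Y_1$, is contained in the achievable region of the semi-deterministic CIFC, so \eqref{eq:Capacity of the semi deterministic CIFC-CCM} is achievable for the latter.

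For the converse I would first use the determinism of $Y_1$ to rewrite the leading bound as $H(Y_1|X_2)=I(Y_1;X_1|X_2)$, since $H(Y_1|X_1,X_2)=0$. Under \eqref{eq:strong interference} one then has $R_1\le H(Y_1|X_2)=I(Y_1;X_1|X_2)\le I(Y_2;X_1|X_2)$, so the bound \eqref{eq:R1  semi deterministic CIFC-CCM 2} is redundant and the region \eqref{eq:Capacity of the semi deterministic CIFC-CCM} collapses to the two inequalities $R_1\le I(Y_1;X_1|X_2)$ and $R_1+R_2\le I(Y_2;X_1,X_2)$. These are exactly the ``strong interference'' outer bound for the CIFC of \cite{maric2005capacity} recalled in Rem.~\ref{rem:storng int outer bound}, which is a legitimate converse for the genuine CIFC in this regime. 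Matching it against the achievable region yields capacity.

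The main obstacle is the converse's internal consistency: the bound \eqref{eq:R1  semi deterministic CIFC-CCM 2} originates from the primary receiver decoding $W_1$ and is therefore not, a priori, a valid outer bound for the CIFC, where receiver~2 decodes only $W_2$. The crux is thus to verify that this message-set-induced bound is redundant under \eqref{eq:strong interference}, so that what remains is precisely the Maric et al.\ bound that genuinely applies to the CIFC; the remaining steps are routine bookkeeping.
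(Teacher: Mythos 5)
Your proposal is correct and follows essentially the same route as the paper: both establish the converse via the Mari\'{c} et al.\ ``strong interference'' outer bound for the CIFC (the region of Th.~\ref{th:Outer Bound for the CIFC-CCM} with \eqref{eq:Outer Bound for the CIFC-CCM R1 weak} dropped), observe that under Condition~\eqref{eq:strong interference} the message-set-induced bound \eqref{eq:R1  semi deterministic CIFC-CCM 2} is redundant so the region \eqref{eq:Capacity of the semi deterministic CIFC-CCM} coincides with that outer bound, and obtain achievability from the CIFC-CCM scheme of \eqref{eq:scheme E}. Your explicit code-class-inclusion argument for achievability is merely a spelled-out version of what the paper leaves implicit.
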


\begin{proof}
An outer bound for the CIFC in the ``strong interference'' regime is given by dropping \eqref{eq:Outer Bound for the CIFC-CCM R1 weak} from the outer bound in Th.~\ref{th:Outer Bound for the CIFC-CCM}. The remaining two bounds are given by \eqref{eq:R1 semi deterministic CIFC-CCM 1} and \eqref{eq:Rsum semi deterministic CIFC-CCM} in the semi-deterministic CIFC. In the ``strong interference'' regime of the CIFC defined by Condition~\eqref{eq:strong interference}
the bound given by \eqref{eq:R1 semi deterministic CIFC-CCM 2} is redundant. Thus, the scheme presented in \eqref{eq:scheme E} also achieves capacity in the ``strong interference'' regime of the semi-deterministic CIFC.
\end{proof}

\section{The Gaussian Channel}
\label{sec:The Gaussian Cognitive Interference Channel with a Common Cognitive Message}

We now specialize the results of the previous sections to the Gaussian channel in \eqref{eq:in/out gaussian CIFC-CCM} and derive new capacity results for this channel model. In particular we prove capacity to within a constant gap of one bit and a factor two, thus providing a tight bound for the capacity region at both high and low SNR.

\subsection{Outer Bounds for the Gaussian Case}
\label{sec:Outer bounds for the Gaussian Case}

\begin{cor}{\bf An Outer Bound for the Gaussian CIFC-CCM}
\label{cor:An Outer Bound for the Gaussian CIFC-CCM}

Any achievable region for the Gaussian CIFC-CCM is contained in the region
\eas{
R_1         & \leq  \Ccal(\al \min\{1 , |b|^2\}  P_1),
\label{eq:Outer Bound for the G-CIFC-CCM R1} \\
R_1 + R_2   & \leq \Ccal( P_2 + b^2 P_1 + 2 \sqrt{ \alb |b|^2 P_1P_2}),
\label{eq:Outer Bound for the G-CIFC-CCM sum rate 1}
}{\label{eq:Outer Bound for the G-CIFC-CCM}}
for $\Ccal(x)=\log(1+x)$ and $\alb = 1-\al \in [0,1]$.
\end{cor}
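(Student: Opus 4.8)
The plan is to start from the auxiliary-variable-free outer bound of Theorem~\ref{th:Outer Bound for the CIFC-CCM}, which holds for every memoryless channel and hence for the Gaussian CIFC-CCM, and to evaluate each of its three mutual-information terms under the Gaussian input constraints by the maximum-entropy principle. I would parametrize the joint input law by a single scalar capturing how much of $X_1$ is coherent with $X_2$: set $\al \in [0,1]$ through $\Ebb[\mathrm{Var}(X_1 \mid X_2)] = \al P_1$. By the law of total variance $\mathrm{Var}(X_1) = \al P_1 + \mathrm{Var}(\Ebb[X_1 \mid X_2])$, and since $\mathrm{Var}(X_1) \le P_1$ we get $\mathrm{Var}(\Ebb[X_1 \mid X_2]) \le \alb P_1$, i.e.\ $\alb P_1$ upper bounds the coherent part of the input power.

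First I would treat the two single-rate constraints \eqref{eq:Outer Bound for the CIFC-CCM R1} and \eqref{eq:Outer Bound for the CIFC-CCM R1 weak}. Conditioning on $X_2$ removes the deterministic terms $aX_2$ and $X_2$ by translation invariance of differential entropy, leaving $Y_1 - aX_2 = X_1 + Z_1$ and $Y_2 - X_2 = |b|X_1 + Z_2$. Bounding each conditional output entropy by that of a circularly symmetric complex Gaussian of the same conditional variance, and applying Jensen's inequality to the concave $\log$ of the average conditional variance $\al P_1$, yields $I(Y_1;X_1\mid X_2) \le \Ccal(\al P_1)$ and $I(Y_2;X_1\mid X_2) \le \Ccal(|b|^2 \al P_1)$. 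Because $\Ccal$ is increasing, the minimum of these two equals $\Ccal(\al \min\{1,|b|^2\}P_1)$, which is exactly \eqref{eq:Outer Bound for the G-CIFC-CCM R1}.

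For the sum-rate constraint \eqref{eq:Outer Bound for the CIFC-CCM sum rate 1} I would write $I(Y_2;X_1,X_2) = h(Y_2) - h(Z_2) \le \Ccal(\mathrm{Var}(X_2 + |b|X_1))$, again by the Gaussian maximum-entropy bound and using that $Z_2$ is unit-variance and independent of the inputs. Expanding the variance gives $P_2 + |b|^2 P_1 + 2|b|\,\mathrm{Re}(\mathrm{Cov}(X_1,X_2))$, so the crux is to control the cross term. I would invoke the orthogonality decomposition $\mathrm{Cov}(X_1,X_2) = \mathrm{Cov}(\Ebb[X_1\mid X_2],X_2)$, then Cauchy--Schwarz together with $\mathrm{Var}(\Ebb[X_1\mid X_2]) \le \alb P_1$ and $\mathrm{Var}(X_2)\le P_2$, to obtain $|\mathrm{Cov}(X_1,X_2)| \le \sqrt{\alb P_1 P_2}$. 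Substituting and using $b^2 = |b|^2$ then yields \eqref{eq:Outer Bound for the G-CIFC-CCM sum rate 1}.

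The main obstacle is the consistency of the single parameter $\al$ across all three bounds: the same normalized average conditional variance must simultaneously govern the two $R_1$ terms and, through its complement $\alb$, the coherent power entering the covariance of the sum-rate term. The delicate points are the Jensen step (the conditional variance of $X_1$ may depend on the realization of $X_2$, so one must average before applying concavity of $\log$) and the verification that Cauchy--Schwarz produces a covariance bound in terms of $\alb P_1$ rather than the looser $P_1$, which is precisely what the orthogonality principle supplies.
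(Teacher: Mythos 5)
Your proposal is correct and takes essentially the same route as the paper: the paper's proof is the one-line observation that, starting from Theorem~\ref{th:Outer Bound for the CIFC-CCM}, circularly symmetric Gaussian inputs maximize all three rate bounds simultaneously, and your argument (the intrinsic definition of $\al$ via the average conditional variance, the maximum-entropy plus Jensen step, and the orthogonality/Cauchy--Schwarz bound on the cross-covariance) is exactly the detailed justification of that claim. In particular, your handling of the consistency of the single parameter $\al$ across the bounds is the substance behind the word ``simultaneously'' in the paper's proof.
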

\begin{proof}
The outer bound is obtained from Th.~\ref{th:Outer Bound for the CIFC-CCM} by noting that complex, circularly symmetric channel inputs maximize all the rate bounds simultaneously.
\end{proof}

\begin{cor}{\bf BC-DMS-Based Outer Bound for the Gaussian CIFC-CCM}
\label{cor:BC-DMS-based outer Bound for the Gaussian CIFC-CCM}

Any achievable region for the Gaussian CIFC-CCM is contained in the region
\eas{
R_1&         \leq \notag\\
&\Ccal\left(\frac{\al_1 P_1 + |a|^2 \al_2 P_2 + 2 \Re\{a^* \rho_1\} \sqrt{\al_{12} P_{12}}}{1 + \alb_1 P_1 + |a|^2 \alb_2 P_2 + 2 \Re\{a^*  \rho_2\} \sqrt{\alb_{12} P_{12}}}\right),
\label{eq:BC-DMS outer Bound for the G-CIFC-CCM R1} \\
R_1& + R_2   \leq \notag\\
&\Ccal\left(\frac{\al_1 P_1 + |a|^2 \al_2 P_2 + 2 \Re\{a^* \rho_1\} \sqrt{\al_{12} P_{12}}}{1 + \alb_1 P_1 + |a|^2 \alb_2 P_2 + 2 \Re\{a^* \rho_2\} \sqrt{\alb_{12} P_{12}}}\right)\notag\\
	    & \;\;\;\;+ \Ccal\left( \alb_1 |b|^2 P_1 + \alb_2 P_2 + 2 \Re\{\rho_2\} \sqrt{\alb_{12} |b|^2 P_{12}}\right),
\label{eq:BC-DMS outer Bound for the G-CIFC-CCM sum rate 1} \\
R_1& + R_2   \leq \notag\\
  &\Ccal\left( |b|^2 P_1 + P_2 + 2 \Re\{\rho_1 \sqrt{\al_{12}} + \rho_2 \sqrt{\alb_{12}}\} \sqrt{|b|^2 P_{12}}\right)
\label{eq:BC-DMS outer Bound for the G-CIFC-CCM sum rate 2}
}{\label{eq:BC-DMS outer Bound for the G-CIFC-CCM}}

over the union over all $(\al_1, \al_2, \rho_1, \rho_2)$ satisfying
\ea{
(\al_1, \al_2, |\rho_1|, |\rho_2|) \in [0,1]^4: \;\; \left|\rho_1 \sqrt{\al_{12}} + \rho_2 \sqrt{\alb_{12}}\right| \leq 1,
\label{eq:BC-DMS outer Bound for the G-CIFC-CCM parameters}
}
where $P_{12} = P_1 P_2$,$\al_{12} = \al_1 \al_2$ and $\alb_{12} = \alb_1 \alb_2$.

\end{cor}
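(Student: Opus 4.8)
The goal is to specialize the abstract outer bound of Corollary~\ref{cor:BC-DMS outer bound} (the BC-DMS outer bound with auxiliary RV $U$) to the Gaussian channel \eqref{eq:in/out gaussian CIFC-CCM}, and show that the resulting region is exactly \eqref{eq:BC-DMS outer Bound for the G-CIFC-CCM}. The strategy is the standard one for converting a single-letter mutual-information outer bound into a Gaussian parametric region: argue that jointly Gaussian inputs are optimal, introduce a convenient Gaussian parametrization of the triple $(U, X_1, X_2)$, and evaluate each of the three mutual-information terms in \eqref{eq:BC-DMS outer bound} in closed form.

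\textbf{Step 1: reduce to Gaussian inputs.}
First I would argue that for each of the three bounds in \eqref{eq:BC-DMS outer bound} it suffices to consider $(U,X_1,X_2)$ jointly Gaussian. The first and third bounds, $I(Y_1;U)$ and $I(Y_2;X_1,X_2)$, are handled by a maximum-entropy argument under the second-moment constraints \eqref{eq:power constraint}: Gaussian inputs maximize the relevant output entropies. The delicate term is the difference $I(Y_2;X_1,X_2\mid U)+I(Y_1;U)$ appearing in the sum-rate bound \eqref{eq:R_1+R_2 twoRV 1}/\eqref{eq:BC-DMS outer bound}; here I would invoke an entropy-power-inequality or conditional maximum-entropy argument (in the style used for the Gaussian broadcast channel with degraded message sets, cf.\ \cite{KornerMarton_generalBC}) to show Gaussian $U$ simultaneously optimizes. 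Since Corollary~\ref{cor:BC-DMS outer bound} is precisely the BC-DMS capacity region, I may import the known optimality of Gaussian signalling for the Gaussian BC-DMS directly rather than re-deriving it.

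\textbf{Step 2: parametrize and evaluate.}
With Gaussianity established, I would write the covariance structure of $(X_1,X_2)$ in terms of the power-splitting parameters $(\al_1,\al_2)$ and correlation coefficients $(\rho_1,\rho_2)$: the idea is that $X_i$ is split into a part correlated with $U$ (powers $\al_i P_i$, cross-correlation governed by $\rho_1\sqrt{\al_{12}}$) and a part independent of $U$ (powers $\alb_i P_i$, cross-correlation governed by $\rho_2\sqrt{\alb_{12}}$), with $P_{12}=P_1P_2$, $\al_{12}=\al_1\al_2$, $\alb_{12}=\alb_1\alb_2$ as defined. The constraint \eqref{eq:BC-DMS outer Bound for the G-CIFC-CCM parameters}, namely $|\rho_1\sqrt{\al_{12}}+\rho_2\sqrt{\alb_{12}}|\le 1$, is exactly the positive-semidefiniteness (Cauchy--Schwarz) condition on the joint covariance matrix of $(X_1,X_2)$, and I would verify it arises precisely this way. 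Then each term is a $\log\det$ of a scalar conditional output variance: $I(Y_1;U)$ gives the MMSE-type ratio in \eqref{eq:BC-DMS outer Bound for the G-CIFC-CCM R1} with $Y_1 = X_1+aX_2+Z_1$; the conditional term $I(Y_2;X_1,X_2\mid U)$ evaluates to the second $\Ccal(\cdot)$ in \eqref{eq:BC-DMS outer Bound for the G-CIFC-CCM sum rate 1} using the residual (unconditioned-on-$U$) powers $\alb_i$ and the channel $Y_2=X_2+|b|X_1+Z_2$; and $I(Y_2;X_1,X_2)$ gives \eqref{eq:BC-DMS outer Bound for the G-CIFC-CCM sum rate 2}. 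The cross terms $2\Re\{a^*\rho_1\}$, $2\Re\{\rho_2\}$, etc.\ come from expanding the variance of the complex linear combinations $X_1+aX_2$ and $|b|X_1+X_2$.

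\textbf{Main obstacle.}
The routine part is the $\log\det$ bookkeeping; the genuine difficulty is Step~1, and specifically justifying that a \emph{single} Gaussian auxiliary $U$ with the two-layer correlation structure of Step~2 is optimal for the \emph{sum} $I(Y_1;U)+I(Y_2;X_1,X_2\mid U)$ rather than optimizing each term separately. The cleanest route is to observe that \eqref{eq:BC-DMS outer bound} coincides verbatim with the KΓΆrner--Marton BC-DMS region for the Gaussian degraded-message-set broadcast channel induced by the two outputs, so the Gaussian optimality and the sufficiency of a two-layer $(\al_i,\rho_1,\rho_2)$ parametrization follow from the already-established Gaussian BC-DMS capacity characterization, and I need only substitute the channel coefficients $a,|b|$ and collect terms. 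I would flag that the parametrization in \eqref{eq:BC-DMS outer Bound for the G-CIFC-CCM parameters} captures all admissible jointly Gaussian $(U,X_1,X_2)$ up to the transformations that leave the three mutual informations invariant, which is what makes the union over $(\al_1,\al_2,\rho_1,\rho_2)$ exhaustive.
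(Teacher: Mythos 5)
Your route coincides with the paper's: the paper's entire proof of this corollary is a one-line deferral, stating that Cor.~\ref{cor:BC-DMS outer bound} specializes to the region \eqref{eq:BC-DMS outer Bound for the G-CIFC-CCM} as shown in \cite[App. D.B]{RTDjournal2}, and your Step~2 --- the two-layer Gaussian parametrization of $(U,X_1,X_2)$, the reading of \eqref{eq:BC-DMS outer Bound for the G-CIFC-CCM parameters} as the positive-semidefiniteness (Cauchy--Schwarz) condition on the input covariance, and the log-of-variance evaluation of the three mutual informations --- is a faithful reconstruction of what that cited appendix does.

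The weak point is Step~1, which you rightly single out as the main obstacle but then discharge with justifications that do not hold as stated. (i) A plain maximum-entropy argument does not handle $I(Y_1;U)$ in isolation: $I(Y_1;U)=h(Y_1)-h(Y_1|U)$ needs a \emph{lower} bound on $h(Y_1|U)$, and in the containment proof this lower bound must be consistent---through the shared parameters $(\al_1,\al_2,\rho_2)$---with the \emph{upper} bound on $h(Y_2|U)$ appearing in \eqref{eq:BC-DMS outer Bound for the G-CIFC-CCM sum rate 1}, since one and the same parameter choice must validate all three inequalities simultaneously for an arbitrary, possibly non-Gaussian, $(U,X_1,X_2)$; this coupling is exactly where the difficulty sits. (ii) A Bergmans-style conditional EPI does not apply off the shelf: under full cooperation the channel is a two-antenna-input MISO broadcast channel whose receivers observe two different linear combinations $X_1+aX_2+Z_1$ and $|b|X_1+X_2+Z_2$, and neither output is a degraded version of the other for general $(a,b)$, which is precisely why Gaussian optimality of $U$ cannot be obtained by the scalar degraded-BC argument. (iii) \cite{KornerMarton_generalBC} establishes only the discrete single-letter region; it contains no Gaussian evaluation, so there is no ``already-established Gaussian BC-DMS capacity'' to import from that reference. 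The Gaussian exhaustion of the region under the per-input power constraints is itself a nontrivial converse requiring MIMO-BC-type machinery, and it is exactly the content of \cite[App. D.B]{RTDjournal2}. Your plan is structurally sound because you defer to the literature at the right spot, but the load-bearing step must cite that evaluation (or reproduce its entropy-bounding argument) rather than either of the two shortcuts you propose.
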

\begin{proof}
The Gaussian BC-DMS-based outer bound is obtained from Cor.~\ref{cor:BC-DMS outer bound} as shown in \cite[App. D.B]{RTDjournal2}.
\end{proof}

\subsection{Inner Bounds for the Gaussian Case}
\label{sec:Inner bound for the Gaussian Case}

In the following, the schemes given in Thm.~\ref{th:Inner Bounds for the CIFC-CCM}, Cor.~\ref{cor:scheme D} and Cor.~\ref{cor:Scheme E} are specialized to the Gaussian case.

\begin{cor}{\bf Region Achievable Applying only Superposition Coding in the Gaussian CIFC-CCM}
\label{cor:Region achievable applying only superposition coding in the Gaussian CIFC-CCM}

The region
\eas{
R_1	& \leq \Ccal  \lb \al P_1\rb,
\label{eq:schemeD gaussian R1} \\
R_1 & \leq \Ccal  \lb \al |b|^2 P_1\rb,
\label{eq:schemeD gaussian extra R1} \\
R_1 + R_2 & \leq \Ccal\lb P_1 + |a|^2 P_2  + 2 \Re\{a\} \sqrt{\alb P_1 P_2}\rb,
\label{eq:schemeD gaussian R2} \\
R_1+R_2 & \leq \Ccal \lb |b|^2 P_1 + P_2  +2 \sqrt{\alb |b|^2 P_1 P_2 }\rb,
\label{eq:schemeD gaussian R1+R2}
}{\label{eq:schemeD gaussian}}
is achievable with the assignment of the RVs for the region in Th.~\ref{th:Inner Bounds for the CIFC-CCM}
\eas{
U_{1c} & \sim \Ncal(0, \al P_1),\\
U_{2c} & \sim \Ncal(0, P_2), \\
X_2    & = U_{2c},\\
X_1 & = U_{1c}+ \sqrt{\alb} \sqrt{\f{P_1}{P_2}} X_2,\\
U_{2p}, U_{2pb} & = \emptyset.
}

\end{cor}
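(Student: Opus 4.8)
The plan is to substitute the stated Gaussian assignment of $(U_{1c},U_{2c},X_1,X_2)$ directly into the four rate bounds of Cor.~\ref{cor:Only superposition coding} and to evaluate each mutual information in closed form. Since the assignment sets $U_{2p},U_{2pb}=\emptyset$ we are in the regime $R_{2p}=R_{2pb}=0$, so the achievable region is precisely that of Cor.~\ref{cor:Only superposition coding}, which depends only on the joint law of $(X_1,X_2)$. Under the assignment $(X_1,X_2)$ is jointly Gaussian: $X_2=U_{2c}\sim\Ncal(0,P_2)$, and $X_1=U_{1c}+\sqrt{\alb}\sqrt{\f{P_1}{P_2}}\,X_2$ with $U_{1c}\sim\Ncal(0,\al P_1)$ independent of $X_2$, so that $\Ebb[X_1 X_2^*]=\sqrt{\alb P_1 P_2}$ is real and $\al$ plays the role of a power-split parameter. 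First I would check the input constraints: $\Ebb[\labs X_1\rabs^2]=\al P_1+\alb \f{P_1}{P_2}P_2=P_1$ and $\Ebb[\labs X_2\rabs^2]=P_2$, so \eqref{eq:power constraint} holds with equality.

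Next I would evaluate the two single-rate bounds. Conditioning on $X_2$ makes the reinforcement term $\sqrt{\alb}\sqrt{\f{P_1}{P_2}}X_2$ deterministic, leaving only the fresh codeword $U_{1c}$ (of power $\al P_1$) as the randomness in $X_1$; equivalently, the conditional variance of $X_1$ given $X_2$ is $\al P_1$. Hence $I(Y_1;X_1\mid X_2)=I(U_{1c}+Z_1;U_{1c})=\Ccal(\al P_1)$, which is \eqref{eq:schemeD gaussian R1}, and $I(Y_2;X_1\mid X_2)=I(|b|U_{1c}+Z_2;U_{1c})=\Ccal(\al|b|^2 P_1)$, which is \eqref{eq:schemeD gaussian extra R1}. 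For the sum-rate bounds I would use $I(Y_i;X_1,X_2)=h(Y_i)-h(Z_i)$ together with the fact that a $\Ncal_{\Cbb}(0,\sigma^2)$ variable has differential entropy $\log(\pi e\sigma^2)$. Writing $X_1+aX_2=U_{1c}+\left(\sqrt{\alb}\sqrt{\f{P_1}{P_2}}+a\right)X_2$, the received power at receiver~$1$ is $\al P_1+\left|\sqrt{\alb}\sqrt{\f{P_1}{P_2}}+a\right|^2 P_2=P_1+|a|^2P_2+2\Re\{a\}\sqrt{\alb P_1 P_2}$, yielding \eqref{eq:schemeD gaussian R2}; the analogous computation at receiver~$2$, where the effective gain on $X_2$ is $1+|b|\sqrt{\alb}\sqrt{\f{P_1}{P_2}}$, yields \eqref{eq:schemeD gaussian R1+R2}.

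There is no genuine obstacle here, as the claim is a direct evaluation; the points that require care are purely computational. One must keep the complex-Gaussian $\pi e$ factors consistent so they cancel in $h(Y_i)-h(Z_i)$, and must expand the squared modulus $\left|\sqrt{\alb}\sqrt{\f{P_1}{P_2}}+a\right|^2$ correctly, since this is where the $X_1$--$X_2$ correlation produces the cross term $2\Re\{a\}\sqrt{\alb P_1 P_2}$; note that the complex gain $a$ enters only through $\Re\{a\}$ because the reinforcement coefficient is real, whereas the already-real gain $|b|$ produces the symmetric term $2\sqrt{\alb|b|^2 P_1 P_2}$. Finally, since each $\Ccal(\cdot)$ is monotone, taking the union over the admissible $\al\in[0,1]$ (with $\alb=1-\al$) sweeps out the region in \eqref{eq:schemeD gaussian}.
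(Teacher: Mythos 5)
Your proposal is correct and follows exactly the route the paper intends: the paper states this corollary without an explicit proof precisely because it is the direct specialization of Cor.~\ref{cor:Only superposition coding} (the $R_{2p}=R_{2pb}=0$ case of Th.~\ref{th:Inner Bounds for the CIFC-CCM}) to the given jointly Gaussian assignment, and your evaluations of the conditional variances and received powers reproduce \eqref{eq:schemeD gaussian R1}--\eqref{eq:schemeD gaussian R1+R2} exactly, including the cross terms $2\Re\{a\}\sqrt{\alb P_1 P_2}$ and $2\sqrt{\alb |b|^2 P_1 P_2}$ and the power-constraint check. Nothing is missing.
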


\begin{cor}{\bf Region Achievable Applying only Binning in the Gaussian CIFC-CCM}

The region
\eas{
R_1	& \leq f \lb a + \sqrt{\f{\alb P_1}{P_2}},1 ; \la \rb,
\label{eq:schemeE gaussian R1} \\
R_1 & \leq \Ccal  \lb \al |b|^2 P_1\rb,
\label{eq:schemeE gaussian extra R1} \\
R_1 + R_2 & \leq \Ccal\lb |b|^2 P_1 + P_2  + 2 \sqrt{\alb |b|^2 P_1 P_2}\rb  \nonumber \\
 & \quad \quad  + f \lb a + \sqrt{\f{\alb P_1}{P_2}},1 ; \la \rb  \nonumber  \\
  & \quad \quad - f \lb \f 1 {|b|} + \sqrt{\f{\alb P_1}{P_2}}, \f 1 {|b|^2} ; \la \rb,
\label{eq:schemeE gaussian R2} \\
R_1+R_2 & \leq \Ccal \lb |b|^2 P_1 + P_2  +2 \sqrt{\alb |b|^2 P_1 P_2 }\rb,
\label{eq:schemeE gaussian R1+R2}
}{\label{eq:schemeE gaussian}}
for
\eas{
f (&h, \sgs; \lambda)=\notag\\
&\log \lb \f{\sgs+ \al P_1 }{ \sgs + \f{\al P_1 |h|^2 P_2}{\al P_1 + |h|^2P_2+ \sgs} \labs \f{\la}{\la_{\rm Costa}(h,\sgs)}-1\rabs^2 } \rb,
}
and
\ea{
\la_{\rm Costa} (h, \sgs)= \f{\al P_1}{\al P_1 +\sgs}h,
\label{eq:la costa}
}
is achievable with the assignment
\eas{
X_i & \sim \Ncal(0, P_i) \quad i \in \{1,2\}, \\
U_{2c},U_{2pb}& = \emptyset,\\
U_{1c}& = X_1 + \lambda a X_2.
}

\end{cor}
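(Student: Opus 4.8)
The plan is to substitute the stated jointly Gaussian input into the four binning bounds of Cor.~\ref{cor:Scheme E} and evaluate each mutual information in closed form. First I would make the correlation structure explicit: the marginal $X_1\sim\Ncal(0,P_1)$ is taken jointly Gaussian with $X_2$, with $\al$ parameterizing the correlation so that $\al P_1$ is the power of the component of $X_1$ that is independent of $X_2$ (the part carrying $W_1$), while the remaining component is aligned with $X_2$. Absorbing the aligned component into the channel coefficient, the cognitive output reads $Y_1 = X_1^{\rm ind} + h\,X_2 + Z_1$ with effective interference coefficient $h = a + \sqrt{\alb P_1/P_2}$; the auxiliary $U_{1c}$ then plays the role of a Gel'fand--Pinsker/Costa codeword that bins $X_1^{\rm ind}$ against the known interference $h\,X_2$, with $\la$ the single free inflation parameter.

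The two MAC-type bounds are routine. Conditioning on $X_2$ removes the aligned component of $X_1$ and leaves only $X_1^{\rm ind}$ of power $\al P_1$ seen through the gain $|b|$, so the second binning bound gives $I(Y_2;X_1|X_2)=\Ccal(\al|b|^2 P_1)$, which is \eqref{eq:schemeE gaussian extra R1}. Knowing $(X_1,X_2)$ makes $Y_2$ a two-user Gaussian multiple-access output, so the third bound $I(Y_2;X_1,X_2)$ equals the MAC sum rate, the cross term $2\sqrt{\alb|b|^2P_1P_2}$ being produced by the $X_1$--$X_2$ correlation; this is \eqref{eq:schemeE gaussian R1+R2}.

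The core computation is the dirty-paper difference $I(Y_1;U_{1c})-I(U_{1c};X_2)$ appearing in the first binning bound. For jointly Gaussian variables this reduces to a ratio of conditional variances, which I would evaluate by an MMSE/determinant calculation to obtain exactly $f\!\left(h,1;\la\right)$; the normalization $\la_{\rm Costa}(h,\sgs)=\frac{\al P_1}{\al P_1+\sgs}h$ is precisely the inflation at which the residual term $|\la/\la_{\rm Costa}-1|^2$ vanishes and $f$ collapses to $\Ccal(\al P_1/\sgs)$. With $\sgs=1$ this yields \eqref{eq:schemeE gaussian R1}. For the fourth (sum-rate) bound I would first write $I(Y_2;X_2|U_{1c}) = I(Y_2;X_1,X_2) - I(Y_2;U_{1c})$, which holds because $(X_2,U_{1c})$ determines $X_1$; the bound then becomes $I(Y_2;X_1,X_2) + I(Y_1;U_{1c}) - I(Y_2;U_{1c})$, and adding and subtracting $I(U_{1c};X_2)$ turns the last two terms into two Costa differences. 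The first is again $f(h,1;\la)$, while the second is the same difference for the primary link: after dividing $Y_2$ by $|b|$ (noise variance $1/|b|^2$, effective interference coefficient $1/|b|+\sqrt{\alb P_1/P_2}$) it equals $f\!\left(1/|b|+\sqrt{\alb P_1/P_2},\,1/|b|^2;\la\right)$, giving \eqref{eq:schemeE gaussian R2}.

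The main obstacle I anticipate is the closed-form evaluation of $f$ for an \emph{arbitrary} (non-matched) inflation parameter $\la$ rather than only at $\la=\la_{\rm Costa}$: retaining a free $\la$ is what produces the $|\la/\la_{\rm Costa}-1|^2$ residual, and it is unavoidable here because the single parameter $\la$ must be optimized simultaneously against both the $Y_1$ channel and the rescaled $Y_2$ channel. The key conceptual step is recognizing that both Costa differences are instances of the one function $f$ under the noise-rescaling $Y_2\mapsto Y_2/|b|$, which is exactly what makes the compact representation in \eqref{eq:schemeE gaussian} possible.
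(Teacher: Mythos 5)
Your proposal is correct and is essentially the paper's own (implicit) derivation: the corollary is stated as a direct specialization of Cor.~\ref{cor:Scheme E}, obtained exactly as you describe by supplying the correlation $\mathbb{E}[X_1 X_2^*]=\sqrt{\bar{\alpha}P_1P_2}$, evaluating the two MAC-type bounds in closed form, and computing the Costa difference $I(Y_1;U_{1c})-I(U_{1c};X_2)$ by a conditional-variance calculation to get $f\left(a+\sqrt{\bar{\alpha}P_1/P_2},1;\lambda\right)$. In particular, your rewriting of the fourth bound via the identity $I(Y_2;X_2,U_{1c})=I(Y_2;X_1,X_2)$ followed by the rescaling $Y_2\mapsto Y_2/|b|$ (yielding the second instance of $f$ with noise variance $1/|b|^2$) is the same manipulation the paper itself performs in Equation~\eqref{eq:scheme E extra sum rate Y1 new} when proving the discrete ``primary decodes cognitive'' result.
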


The achievable region in \eqref{eq:schemeE gaussian} reduces for $\lambda = 0$ and $|b| > 1$ to
\eas{
R_1 \leq &\Ccal\left(\frac{\alpha P_1}{\left(\sqrt{\alb P_1}+a\sqrt{P_2}\right)^2+1}\right),\\
R_1 + R_2 \leq &\Ccal\left(\frac{\alpha P_1}{\left(\sqrt{\alb P_1}+a\sqrt{P_2}\right)^2+1}\right)\notag\\
 &+ \Ccal\left(\left(|b| \sqrt{\alb P_1}+\sqrt{P_2}\right)^2\right),
\label{eq:schemeE lam0 gaussian R2}\\
R_1+R_2 & \leq \Ccal \lb |b|^2 P_1 + P_2  +2 \sqrt{\alb |b|^2 P_1 P_2 }\rb.
\label{eq:schemeE lam0 gaussian R1+R2}
}{\label{eq:schemeE lam0 gaussian}}

Note that for the region where \eqref{eq:schemeE lam0 gaussian R1+R2} is a looser bound than \eqref{eq:schemeE lam0 gaussian R2}, the achievable scheme is similar to the capacity achieving scheme in the Gaussian CIFC ``weak interference'' regime of \cite{WuDegradedMessageSet} in which cognitive and primary users are switched.

\subsection{Capacity in the ``Very Strong Interference'' Regime and the ``Primary Decodes Cognitive'' Regime}
\label{sec:Capacity in the Very Strong Interference Regime}

The following corollary states the result of Th.~\ref{th:Capacity in the Very Strong Interference Regime} for the Gaussian case in \eqref{eq:in/out gaussian CIFC-CCM}.

\begin{cor}{\bf Capacity for the Gaussian CIFC-CCM in the ``Very Strong Interference'' Regime}
\label{cor:Capacity for the Gaussian CIFC-CCM in the very strong interference regime}

If
\ea{
(|a|^2-1)P_2 -(|b|^2-1)P_1 - 2 |a - |b||\sqrt{P_1 P_2} \geq 0,
\label{eq:very strong interference gaussian}
}
the capacity of the Gaussian CIFC-CCM is given  by \eqref{eq:Outer Bound for the G-CIFC-CCM}.
\end{cor}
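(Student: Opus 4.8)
The plan is to specialize Theorem~\ref{th:Capacity in the Very Strong Interference Regime} to the Gaussian channel by matching the superposition-only inner bound of Cor.~\ref{cor:Region achievable applying only superposition coding in the Gaussian CIFC-CCM} with the outer bound of Cor.~\ref{cor:An Outer Bound for the Gaussian CIFC-CCM}. Both regions are parameterized by the same power split $\al$ (with $U_{1c}\sim\Ncal(0,\al P_1)$ and induced correlation $\EE[X_1 X_2^*]=\sqrt{\alb P_1 P_2}$), and they already agree on the $R_1$ constraint: the two inner bounds \eqref{eq:schemeD gaussian R1} and \eqref{eq:schemeD gaussian extra R1} together give $\Ccal(\al\min\{1,|b|^2\}P_1)$, which is \eqref{eq:Outer Bound for the G-CIFC-CCM R1}. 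They also agree on the $Y_2$ sum-rate constraint, since \eqref{eq:schemeD gaussian R1+R2} equals \eqref{eq:Outer Bound for the G-CIFC-CCM sum rate 1}. The two regions therefore coincide, and the outer bound is attained, as soon as the remaining $Y_1$ sum-rate bound \eqref{eq:schemeD gaussian R2} is redundant for every $\al\in[0,1]$; this redundancy is exactly the Gaussian instance of condition \eqref{eq:very strong interference}.

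Next I would compare the two sum-rate arguments directly. The bound \eqref{eq:schemeD gaussian R2} is implied by \eqref{eq:schemeD gaussian R1+R2} precisely when the argument of the former dominates that of the latter, i.e. when
\ea{
(|a|^2-1)P_2-(|b|^2-1)P_1+2(\Re\{a\}-|b|)\sqrt{\alb P_1 P_2}\geq 0.
}
To reduce this to a condition on the channel parameters alone, I would lower bound the cross term uniformly in $\al$ and in the phase of $a$: writing $\Re\{a\}-|b|=\Re\{a-|b|\}\geq -|a-|b||$ and using $0\leq\sqrt{\alb}\leq 1$, the left-hand side is bounded below by $(|a|^2-1)P_2-(|b|^2-1)P_1-2|a-|b||\sqrt{P_1 P_2}$. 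Hence \eqref{eq:very strong interference gaussian} forces \eqref{eq:schemeD gaussian R2} to be redundant for all $\al$, and by the previous paragraph the region \eqref{eq:Outer Bound for the G-CIFC-CCM} is capacity.

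The step I expect to be the main obstacle is handling the complex gain $a$ jointly with the worst case over $\al$. The inner bound fixes the $X_1$--$X_2$ correlation to be real and nonnegative, so only $\Re\{a\}$ enters the $Y_1$ sum rate, whereas a clean channel-only threshold must also absorb the least favourable phase; it is exactly the two slack inequalities $\Re\{a\}\geq|b|-|a-|b||$ and $\sqrt{\alb}\leq 1$ that collapse both the $\al$-dependence and the phase-dependence into \eqref{eq:very strong interference gaussian}. Because these bounds introduce slack, the stated condition is sufficient but not, in general, necessary, which is consistent with the corollary being phrased as a one-directional implication.
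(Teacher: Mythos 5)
Your proof is correct, and it takes a more self-contained route than the paper. The paper's own proof is a two-line reduction: it observes that the parameter condition \eqref{eq:very strong interference gaussian} implies the mutual-information condition \eqref{eq:very strong interference} for the Gaussian model, and then invokes Th.~\ref{th:Capacity in the Very Strong Interference Regime}, delegating the actual verification to an external reference (App.~B of the cited journal paper). That verification is the nontrivial part, since the hypothesis of Th.~\ref{th:Capacity in the Very Strong Interference Regime} is an inequality that must hold for \emph{all} input distributions $P_{X_1X_2}$, not just Gaussian ones, which requires an entropy-based argument. You sidestep this entirely: because the Gaussian outer bound of Cor.~\ref{cor:An Outer Bound for the Gaussian CIFC-CCM} and the superposition-only Gaussian inner bound of Cor.~\ref{cor:Region achievable applying only superposition coding in the Gaussian CIFC-CCM} are unions over the \emph{same} parameter $\al$ and already agree on the $R_1$ and $Y_2$ sum-rate constraints, it suffices to check that the $Y_1$ sum-rate constraint \eqref{eq:schemeD gaussian R2} is dominated by \eqref{eq:schemeD gaussian R1+R2} for the scheme's specific jointly Gaussian inputs, uniformly in $\al$. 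Your algebra for this is sound: the comparison reduces to $(|a|^2-1)P_2-(|b|^2-1)P_1+2(\Re\{a\}-|b|)\sqrt{\alb P_1P_2}\geq 0$, and the two slack estimates $\Re\{a\}-|b|\geq -|a-|b||$ and $\sqrt{\alb}\leq 1$ (the latter applied only when the cross term is negative) collapse this to the channel-only condition \eqref{eq:very strong interference gaussian}. What your approach buys is an elementary, fully in-paper proof that needs the redundancy only at the achievability-scheme inputs; what the paper's approach buys is the stronger statement that the general ``very strong interference'' condition itself holds for all distributions, which is what ties the Gaussian regime to the discrete memoryless theorem and to the corresponding CIFC literature. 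Your closing remark that the condition is sufficient but not necessary is also consistent with both routes.
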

\begin{proof}
Condition \eqref{eq:very strong interference gaussian} is derived from \eqref{eq:very strong interference} for the Gaussian model in \eqref{eq:in/out gaussian CIFC-CCM}. Details can be found in \cite[App. B]{RTDjournal2}.
\end{proof}

The following corollary extends the ``primary decodes cognitive'' regime of \cite{RTDjournal2} to the Gaussian CIFC-CCM:

\begin{cor}{\bf The ``Primary Decodes Cognitive Interference'' Regime for the Gaussian CIFC-CCM}
\label{cor:Primary Decodes Cognitive Interference Regime for the CIFC-CCM}

If
\eas{
P_2 \labs 1- a |b|\rabs^2 \geq &(|b|^2-1)(1 + P_1 + |a|^2 P_2 ) \notag\\
&- P_1 P_2 \labs 1 - a |b|\rabs^2, \\
P_2 \labs 1- a |b|\rabs^2 \geq &(|b|^2-1)(1 + P_1 + |a|^2 P_2 \notag\\
&+2 \Re\{a\}\sqrt{P_1 P_2 }),
}{\label{eq:PDC}}
then \eqref{eq:Outer Bound for the G-CIFC-CCM} is the  capacity of the Gaussian CIFC-CCM.
\end{cor}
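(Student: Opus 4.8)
The plan is to treat this corollary as the Gaussian evaluation of Theorem~\ref{th:Capacity in the Primary Decodes Cognitive Regime}. The converse half is already in hand, since Cor.~\ref{cor:An Outer Bound for the Gaussian CIFC-CCM} establishes that \eqref{eq:Outer Bound for the G-CIFC-CCM} is an outer bound; hence only achievability of \eqref{eq:Outer Bound for the G-CIFC-CCM} under \eqref{eq:PDC} remains. To this end I would evaluate the binning inner bound \eqref{eq:schemeE gaussian} with the jointly Gaussian assignment $X_i\sim\Ncal(0,P_i)$, $U_{1c}=X_1+\lambda a X_2$, and fix the binning parameter to the Costa value $\lambda=\lambda_{\rm Costa}\left(a+\sqrt{\frac{\alb P_1}{P_2}},1\right)$ of \eqref{eq:la costa}, i.e.\ the value for which the interference created by $X_2$ is perfectly pre-cancelled at the cognitive receiver $Y_1$. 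This is the Gaussian counterpart of the binning condition \eqref{eq:PDC binning}.

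With this choice the factor $\left|\frac{\lambda}{\lambda_{\rm Costa}}-1\right|^2$ in the first $f$-term of \eqref{eq:schemeE gaussian R1} vanishes and that bound collapses to the clean point-to-point rate $\Ccal(\al P_1)$; together with the always present bound \eqref{eq:schemeE gaussian extra R1} equal to $\Ccal(\al|b|^2P_1)$, the two jointly reproduce the outer $R_1$-constraint $\Ccal(\al\min\{1,|b|^2\}P_1)$ of \eqref{eq:Outer Bound for the G-CIFC-CCM R1} for every value of $|b|$. Moreover the main sum-rate \eqref{eq:schemeE gaussian R1+R2} coincides verbatim with the outer sum-rate \eqref{eq:Outer Bound for the G-CIFC-CCM sum rate 1} (recall $b^2=|b|^2$). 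Hence the inner and outer regions agree slice-by-slice in $\al$ as soon as the extra sum-rate constraint \eqref{eq:schemeE gaussian R2} can be shown to be inactive, and this is the single remaining point.

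Because \eqref{eq:schemeE gaussian R2} and \eqref{eq:schemeE gaussian R1+R2} carry the same left-hand side $R_1+R_2$, redundancy of \eqref{eq:schemeE gaussian R2} is equivalent to
\[
f\left(a+\sqrt{\frac{\alb P_1}{P_2}},\,1;\,\lambda\right)\;\geq\;f\left(\frac{1}{|b|}+\sqrt{\frac{\alb P_1}{P_2}},\,\frac{1}{|b|^2};\,\lambda\right),
\]
which is exactly the Gaussian transcription of the discrete condition $I(Y_1;U_{1c})\geq I(Y_2;U_{1c})$ in \eqref{eq:PDC DCM} for the chosen auxiliary. Note that the same $\lambda$ is substituted in both $f$-terms, and it is Costa-optimal only for the first, so the second term retains a strictly positive residual. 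Substituting $\lambda=\lambda_{\rm Costa}$, clearing the denominators of the two $f$-expressions and simplifying turns this single inequality into an explicit polynomial inequality in $(a,|b|,P_1,P_2,\al)$. I expect this reduction to be the main obstacle, for two reasons: the $f$-expressions are rational in the Costa parameter and do not simplify gracefully, and the inequality must hold uniformly in $\al\in[0,1]$ while the target conditions \eqref{eq:PDC} are $\al$-free. The way I would close this is to evaluate the reduced inequality at the two extreme allocations---$\al=1$ (all cognitive power devoted to the private message), which I expect to produce the first inequality of \eqref{eq:PDC}, and $\al=0$ (all cognitive power devoted to relaying $X_2$), which produces the second, with the cross term $2\Re\{a\}\sqrt{P_1 P_2}$ arising from $|a+\sqrt{P_1/P_2}|^2$---and then argue, by monotonicity (or convexity) of the gap between the two $f$-terms in $\al$, that these two endpoint conditions already guarantee redundancy for all intermediate $\al$. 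The remaining algebra parallels the computation carried out for the Gaussian CIFC in \cite{RTDjournal2} and can be imported from there.
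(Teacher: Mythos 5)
Your proposal is correct and is essentially the paper's own proof: the paper likewise evaluates the binning scheme \eqref{eq:schemeE gaussian} at the Costa parameter $\la = \f{\al P_1}{\al P_1+1}\lb a + \sqrt{\f{\alb P_1}{P_2}}\rb$, notes that \eqref{eq:schemeE gaussian R1} and \eqref{eq:schemeE gaussian extra R1} then collapse to the outer-bound constraint $R_1 \leq \Ccal\lb \al \min\{1,|b|^2\} P_1\rb$, and concludes that capacity is achieved exactly when \eqref{eq:schemeE gaussian R2} dominates \eqref{eq:schemeE gaussian R1+R2}, citing \cite{RTDjournal2} for the equivalence of this redundancy condition with \eqref{eq:PDC}. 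Your endpoint-plus-monotonicity sketch of that final algebra is optional extra detail; the paper does not carry it out either and simply imports the conditions from \cite{RTDjournal2}.
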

\begin{proof}
Consider the scheme in \eqref{eq:schemeE gaussian} with $\la = \f{\al P_1}{\al P_1+1} \left(a + \sqrt{\f{\alb P_1}{P_2}}\right)$.
The bounds \eqref{eq:schemeE gaussian R1} and \eqref{eq:schemeE gaussian extra R1} are given by
\ea{
R_1 & \leq \Ccal  \lb \al \min\{1,|b|^2\}P_1\rb
}
in this case.

This scheme achieves capacity when \eqref{eq:schemeE gaussian R2} is larger than \eqref{eq:schemeE gaussian R1+R2}, i.e. the conditions in \eqref{eq:PDC} hold.
The conditions were determined in \cite{RTDjournal2} to prove the ``primary decodes cognitive'' regime for the Gaussian CIFC.
\end{proof}

\begin{rem}
The ``primary decodes cognitive regime'' for the CIFC is defined by conditions \eqref{eq:PDC} and condition \eqref{eq:strong interference} which is given by $|b| \geq 1$ in the Gaussian case.
Condition \eqref{eq:strong interference} is not required to prove capacity for the CIFC-CCM.
Thus, the capacity of the Gaussian CIFC-CCM for the regime with $|b| \leq 1$ is also given by Cor.~\ref{cor:Primary Decodes Cognitive Interference Regime for the CIFC-CCM}.
\end{rem}

\subsection{Capacity to Within a Constant Gap and a Constant Factor}
\label{sec:Capacity to Within a Constant Gap and a Constant Factor}

\begin{thm}{\bf Capacity to within 1 bits/s/Hz}
\label{eq:Capacity to within 1 bits/s/Hz}

For any Gaussian CIFC-CCM, the outer bound region in \eqref{eq:Outer Bound for the G-CIFC-CCM} can be achieved to within 1 bits/s/Hz.
\end{thm}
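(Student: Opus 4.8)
The plan is to play the two Gaussian inner bounds against the Gaussian outer bound of Cor.~\ref{cor:An Outer Bound for the Gaussian CIFC-CCM}, using in the inner bounds the \emph{same} power-split parameter $\al$ that indexes the outer region. For each fixed $\al$ the outer region \eqref{eq:Outer Bound for the G-CIFC-CCM} is a quadrilateral cut out by a single cognitive-rate constraint \eqref{eq:Outer Bound for the G-CIFC-CCM R1} and a single sum-rate constraint \eqref{eq:Outer Bound for the G-CIFC-CCM sum rate 1}, with vertices $(0,0)$, $(A,0)$, $(A,B-A)$ and $(0,B)$, where $A=\Ccal(\al\min\{1,|b|^2\}P_1)$ and $B$ is the right-hand side of \eqref{eq:Outer Bound for the G-CIFC-CCM sum rate 1}. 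It therefore suffices to show that the corner $(A,\,B-A)$ lies in the inner region after subtracting at most one bit from one coordinate; the remaining faces then follow since the two defining lines are matched or merely shifted.

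The first observation is that, with the matched $\al$, most of the constraints coincide \emph{exactly}. In the superposition region \eqref{eq:schemeD gaussian} the two cognitive-rate bounds \eqref{eq:schemeD gaussian R1}--\eqref{eq:schemeD gaussian extra R1} combine to $\min\{\Ccal(\al P_1),\Ccal(\al|b|^2P_1)\}=\Ccal(\al\min\{1,|b|^2\}P_1)=A$, reproducing \eqref{eq:Outer Bound for the G-CIFC-CCM R1}, and its primary-receiver sum-rate bound \eqref{eq:schemeD gaussian R1+R2} equals the outer sum-rate bound \eqref{eq:Outer Bound for the G-CIFC-CCM sum rate 1} verbatim. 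The identical primary-receiver sum-rate bound reappears as \eqref{eq:schemeE gaussian R1+R2} in the binning region. Hence the inner region can fall short of the outer region only through one \emph{secondary} constraint: the cognitive-receiver sum-rate bound \eqref{eq:schemeD gaussian R2} in the superposition scheme, or the dirty-paper-penalized bounds \eqref{eq:schemeE gaussian R1} and \eqref{eq:schemeE gaussian R2} in the binning scheme.

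I would then split on the channel regime. Wherever the cognitive receiver carries the stronger MAC, i.e.\ condition \eqref{eq:very strong interference} (equivalently \eqref{eq:very strong interference gaussian}) holds, the bound \eqref{eq:schemeD gaussian R2} is inactive, the superposition region \eqref{eq:schemeD gaussian} reproduces \eqref{eq:Outer Bound for the G-CIFC-CCM} exactly, and the gap is zero by Th.~\ref{th:Capacity in the Very Strong Interference Regime}. In the complementary regime I would switch to binning and fix the binning coefficient to the Costa value \eqref{eq:la costa} matched to the cognitive receiver, $\la=\la_{\rm Costa}(a+\sqrt{\alb P_1/P_2},\,1)$, so that \eqref{eq:schemeE gaussian R1} collapses to $\Ccal(\al P_1)$ and, together with \eqref{eq:schemeE gaussian extra R1}, again meets the outer cognitive-rate bound $A$ exactly. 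The achievable sum rate is then $\min$ of \eqref{eq:schemeE gaussian R1+R2} and \eqref{eq:schemeE gaussian R2}, the latter being the outer sum-rate plus the dirty-paper mismatch $f(a+\sqrt{\alb P_1/P_2},1;\la)-f(\tfrac1{|b|}+\sqrt{\alb P_1/P_2},\tfrac1{|b|^2};\la)$, so the whole question reduces to showing that this single mismatch term costs at most one bit.

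The main obstacle is precisely this uniform one-bit bound on the Costa mismatch: the coefficient $\la$ is tuned to the cognitive receiver but enters \eqref{eq:schemeE gaussian R2} evaluated at the primary-receiver parameters $(\tfrac1{|b|},\tfrac1{|b|^2})$, so the penalty does not vanish and must be controlled for all gains $a,|b|$ and all $\al$. I expect the estimate to branch on $|b|\le1$ versus $|b|\ge1$ (the two cases of $\min\{1,|b|^2\}$); in each branch the loss is the logarithm of a ratio of the two receivers' post-cancellation SNRs, which I would bound by $\log 2=1$ using the mismatch factor $|\la/\la_{\rm Costa}-1|^2$ together with the monotonicity and concavity of $\Ccal$, re-matching $\la$ to the primary receiver in whichever branch keeps the penalty on the already-slack coordinate. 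The delicate bookkeeping is to confirm that the secondary sum-rate bound \eqref{eq:schemeE gaussian R2} and the cognitive-rate bound never jointly force a loss exceeding one bit, rather than one bit in each separately, at the corner $(A,B-A)$.
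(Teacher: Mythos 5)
Your reduction to the corner point $(A,B-A)$ of each fixed-$\al$ quadrilateral is sound, and the first branch (zero gap under \eqref{eq:very strong interference gaussian} via superposition) is fine. The proof breaks at the key estimate of the second branch: the claim that the Costa mismatch in \eqref{eq:schemeE gaussian R2} costs at most one bit is false, and no tuning of $\la$ in the assignment $U_{1c}=X_1+\la a X_2$ can repair it. The sum-rate shortfall of the binning scheme relative to the outer bound \eqref{eq:Outer Bound for the G-CIFC-CCM sum rate 1} is exactly $I(Y_2;U_{1c})-I(Y_1;U_{1c})$, and because $U_{1c}$ is a \emph{deterministic} linear combination of $(X_1,X_2)$, this difference is unbounded whenever the link to the primary receiver is much stronger than the link to the cognitive receiver. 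Concretely, take $a=0$ and $\al=1$: then $U_{1c}=X_1$ for \emph{every} $\la$ (the parameter only multiplies $aX_2$), so $f\!\left(a+\sqrt{\alb P_1/P_2},1;\la\right)=\Ccal(P_1)$ while $f\!\left(\tfrac{1}{|b|}+\sqrt{\alb P_1/P_2},\tfrac{1}{|b|^2};\la\right)=I(Y_2;X_1)=\Ccal\!\left(\tfrac{|b|^2P_1}{1+P_2}\right)$, and the achievable sum rate collapses to $\Ccal(P_1)+\Ccal(P_2)$ whereas the outer bound allows $\Ccal(|b|^2P_1+P_2)$; the gap grows like $2\log|b|$. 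This channel satisfies neither \eqref{eq:very strong interference gaussian} nor \eqref{eq:PDC}, so it lies squarely in the branch your argument must carry, and the superposition scheme is even worse there since its cognitive-receiver bound \eqref{eq:schemeD gaussian R2} caps the sum rate near $\Ccal(P_1)$.

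The missing idea --- and what the paper actually does --- is to abandon the noiseless Costa auxiliary in favor of a \emph{stochastic} one: $U_{1c}=X_1+aX_2+\Nt_1$ with $\Nt_1\sim\Ncal(0,\sgs_1)$ and $\sgs_1=1$, i.e., a statistical replica of $Y_1$ (the Gaussian analogue of the assignment $U_{1c}=Y_1$ that gives the semi-deterministic capacity in Th.~\ref{th:Capacity of the semi deterministic CIFC-CCM}). This caps both losses at once: the $R_1$ loss is ${\rm GAP_1}(\al)=\log\!\left(1+\tfrac{\VARbb[X_1+aX_2]}{1+\VARbb[X_1+aX_2]}\right)\le \log 2 = 1$ bit, and the sum-rate loss ${\rm GAP_2}(\al)=h(U_{1c}|Y_1)-h(U_{1c}|Y_2)$ satisfies ${\rm GAP_2}\le{\rm GAP_1}$ because $h(U_{1c}|Y_2)\ge h(\Nt_1)$: no receiver, however strong, can resolve $U_{1c}$ beyond the injected noise, which is precisely the property any deterministic $U_{1c}$ lacks. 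Your outer-bound geometry and regime split can be kept, but the second branch must use this noisy auxiliary rather than a tuned Costa coefficient.
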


\begin{proof}
Consider  the assignment
\eas{
X_i     & \sim \Ncal (0, P_i)  \quad i \in \{1,2\}, \\
E[X_1 X_2 ] & =  \sqrt{\alb P_1 P_2}, \\
U_{1c}  & = X_1 + a X_2 + \Nt_1, \\
\Nt_1   & \sim \Ncal(0, \sgs_1),
}{}
for the scheme in \eqref{eq:scheme E}, than we have

\eas{
R_1         & \leq \log \lb  \sgs_{1} + \al P_1 \rb  - {\rm GAP_1}(\al),\label{eq:Const gap for the G-CIFC-CCM R1}\\
R_1         & \leq \Ccal \lb \alb |b|^2 P_1 \rb, \\
R_1+R_2     & \leq \Ccal \lb P_2 + |b^2 P_1| + 2 \sqrt{\alb |b|^2 P_1 P_2} \rb, \\
R_1+R_2     & \leq \Ccal \lb P_2 + |b^2 P_1| + 2 \sqrt{\alb |b|^2 P_1 P_2} \rb\notag\\
	    & \quad \quad- {\rm GAP_2}(\al),
}
where
\ea{
I(&Y_1; U_{1c})-I(U_{1c},X_2) = - H(U_{1c}|Y_1) + H(U_{1c}| X_2) \notag\\
& = - H(\Nt_1-N_1| X_1 + a X_2 + N_1) + H(X_1 + \Nt_1| X_2) \notag\\
& = \log \lb \sgs_{1} + \al P_1 \rb - \log \lb \sgs_{1} + \f {\VARbb[X_1 + a X_2]}{1+\VARbb[X_1+a X_2]}\rb  \notag\\
& = \log \lb \sgs_{1} + \al P_1 \rb- {\rm GAP_1}(\al)
}{}

and

\ea{
 - &{\rm GAP_2}(\al)
 = I(Y_1; U_{1c})-I(U_{1c},X_2)\notag\\
    &\quad \quad - \lb I(Y_2; U_{1c})-I(U_{1c},X_2)  \rb   \notag\\
& = - H(U_{1c}| Y_1) + H(U_{1c}| Y_2 ) \notag\\
& = - \log \lb \sgs_{1} + \f {\VARbb[X_1 + a X_2]}{1+\VARbb[X_1+a X_2]}\rb\notag\\
&\quad \quad  + H(X_1 + a X_2 + \Nt_1 | |b| X_1 + X_2 + N_2  )
}{}

Now fix $\sgs_{1}=1$ to obtain the outer bound expression.
Clearly $ {\rm GAP_1}(\al)> {\rm GAP_2}(\al)$ and ${\rm GAP_1}(\al)\leq 1$.
\end{proof}

The result in Th.~\ref{eq:Capacity to within 1 bits/s/Hz} also provides an alternative proof to the constant gap in \cite{Rini:ICC2010}
where a constant additive gap between inner and outer bound is proved using an achievable scheme with a private cognitive message in the ``strong interference'' regime.
As noted in \cite{RTDjournal2}, in the ``strong interference'' regime for the Gaussian channel the primary decoder can reconstruct the channel output at the cognitive decoder and thus decode the cognitive message.
For this reason it is counterintuitive to consider a scheme with a private cognitive message to achieve the outer bound in the ``strong interference'' regime.
Indeed the distance between inner and outer bound using the scheme in \eqref{eq:scheme E} is always smaller or equal to the distance using the scheme in \cite{Rini:ICC2010}.
Despite of this, the two schemes achieve the same gap from the $R_1$ bound in the outer bound expression and thus the overall bound between inner and outer bound is the same in the two cases.

\begin{thm}{\bf Capacity to within a Factor 2}

For any Gaussian CIFC-CCM, the outer bound region in \eqref{eq:Outer Bound for the G-CIFC-CCM} can be achieved to within a factor 2.
\end{thm}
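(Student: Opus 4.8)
The plan is to establish the multiplicative guarantee in its equivalent set form: show that $\tfrac12 \mathcal{R}_{\rm out} \subseteq \mathcal{R}_{\rm in}$, where $\mathcal{R}_{\rm out}$ is the region \eqref{eq:Outer Bound for the G-CIFC-CCM} and $\mathcal{R}_{\rm in}$ is the achievable region. For each $\al$, $\mathcal{R}_{\rm out}$ is the intersection of the half-planes $R_1 \le A(\al)$ and $R_1 + R_2 \le B(\al)$, with $A(\al) = \Ccal(\al\min\{1,|b|^2\}P_1)$ and $B(\al)$ the sum-rate bound \eqref{eq:Outer Bound for the G-CIFC-CCM sum rate 1}; since the achievable region is convex by time sharing, it suffices to place the scaled corner points $(A/2,0)$, $(A/2,(B-A)/2)$ and $(0,B/2)$ inside $\mathcal{R}_{\rm in}$. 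First I would record the structural fact, visible from Cor.~\ref{cor:Region achievable applying only superposition coding in the Gaussian CIFC-CCM} and Cor.~\ref{cor:Scheme E}, that the inner and outer $R_1$-bounds coincide and that the inner bound already contains the sum-rate face $R_1 + R_2 \le B(\al)$ of \eqref{eq:schemeE gaussian R1+R2}. Consequently the $R_1$ coordinate needs no scaling argument, and checking the three corners collapses to a single requirement: the binding achievable sum rate must be at least $\tfrac12 B(\al)$.

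For the sum-rate step I would use the algebraic identity
\[
B(\al) = \Ccal\!\left(|b|^2\al P_1 + \left(\sqrt{P_2}+|b|\sqrt{\alb P_1}\right)^2\right),
\]
together with the two elementary inequalities $\Ccal(u+v)\le \Ccal(u)+\Ccal(v)$ for $u,v\ge 0$ and $\tfrac12\Ccal(x)\le \Ccal(\sqrt x)$ (the latter equivalent to $\sqrt{1+x}\le 1+\sqrt x$). These let me split $\tfrac12 B(\al)$ into a piece controlled by the achievable $R_1$-bound $\Ccal(\al|b|^2P_1)$ of \eqref{eq:schemeE gaussian extra R1} and a piece controlled by the residual signal $\Ccal\big((\sqrt{P_2}+|b|\sqrt{\alb P_1})^2\big)$, which never exceeds the sum-rate face \eqref{eq:schemeE gaussian R1+R2}. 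I would fix the binning parameter to $\la=\la_{\rm Costa}$ as in \eqref{eq:la costa}, exactly as in the proof of Cor.~\ref{cor:Primary Decodes Cognitive Interference Regime for the CIFC-CCM}, so that the term in \eqref{eq:schemeE gaussian R1} attains the outer $R_1$-bound; only the cross constraint \eqref{eq:schemeE gaussian R2} can then pull the achievable sum rate below $B(\al)$, so the whole theorem reduces to showing \eqref{eq:schemeE gaussian R2} $\ge \tfrac12 B(\al)$.

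The hard part is precisely this lower bound on the difference $f(\cdot)-f(\cdot)$ appearing in \eqref{eq:schemeE gaussian R2}, which measures how much rate binning loses to residual interference. I expect to dispatch it with a case split on $|b|$: when $|b|\ge 1$ the primary receiver is the stronger one, binning in Cor.~\ref{cor:Scheme E} nearly cancels the interference at the cognitive decoder and the channel sits close to the regimes of Cor.~\ref{cor:Capacity for the Gaussian CIFC-CCM in the very strong interference regime} and Cor.~\ref{cor:Primary Decodes Cognitive Interference Regime for the CIFC-CCM}, where the gap is already small; when $|b|<1$ I would instead forgo binning and let the cognitive receiver treat $X_2$ as noise, as in \eqref{eq:schemeE lam0 gaussian}, lower-bounding the achievable sum rate by $\Ccal\big((\sqrt{P_2}+|b|\sqrt{\alb P_1})^2\big)$ and then invoking $\tfrac12\Ccal(x)\le\Ccal(\sqrt x)$. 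The main obstacle is making this bound uniform across the intermediate values of $|b|$, $a$ and $\al$, i.e. controlling the residual interference term in \eqref{eq:schemeE gaussian R2} over the whole parameter range; it is this worst-case loss that forces the factor $2$ rather than a sharper multiplicative constant.
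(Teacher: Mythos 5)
There is a genuine gap, and it sits exactly at the step you yourself flag as ``the main obstacle.'' Your reduction fixes the \emph{same} power split $\al$ on both sides, sets $\la=\la_{\rm Costa}$ so that the inner and outer $R_1$ bounds coincide, and concludes that the theorem reduces to showing \eqref{eq:schemeE gaussian R2} $\geq \tfrac12 B(\al)$. That inequality is false. Take $a=0$, $\al=1$ and let $|b|\to\infty$: then $\la_{\rm Costa}=0$, and a direct evaluation of the scheme in \eqref{eq:schemeE gaussian} gives for \eqref{eq:schemeE gaussian R2} the value $\Ccal(|b|^2P_1+P_2)+\Ccal(P_1)-\log\frac{1+|b|^2P_1+P_2}{1+P_2}=\Ccal(P_1)+\Ccal(P_2)$, which stays bounded, while $\tfrac12 B(1)=\tfrac12\,\Ccal(|b|^2P_1+P_2)\to\infty$. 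The failure is structural, not computational: for $|b|>1$ the outer sum rate grows like $\log |b|^2$ because $X_1$ may relay the primary message with power gain $|b|^2$, whereas the scheme of Cor.~\ref{cor:Scheme E} with split $\al$ lets $X_1$ carry primary information only through the correlation $\sqrt{\alb P_1 P_2}$ --- none at all when $\al=1$. Covering the scaled corner of the $\al$-pentagon therefore requires evaluating the inner bound at a \emph{different} split $\al'<\al$ (decoupling the inner-bound parameters from the outer-bound parameters), and this uniform-in-$(a,|b|,\al)$ argument is precisely what your proposal never supplies: the remark that for $|b|\geq 1$ the channel ``sits close to'' the regimes of Cor.~\ref{cor:Capacity for the Gaussian CIFC-CCM in the very strong interference regime} and Cor.~\ref{cor:Primary Decodes Cognitive Interference Regime for the CIFC-CCM} is not a bound, since conditions \eqref{eq:PDC} and \eqref{eq:very strong interference gaussian} fail arbitrarily badly for small $a$ and large $|b|$.

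The paper's proof splits the cases the other way around and needs neither corner scaling nor any estimate of the binning loss $f(\cdot)-f(\cdot)$. For $|b|\leq 1$ no approximation argument is needed at all: both right-hand sides of \eqref{eq:PDC} are nonpositive when $|b|^2\leq 1$, so Cor.~\ref{cor:Primary Decodes Cognitive Interference Regime for the CIFC-CCM} (with the remark following it) gives capacity \emph{exactly} in that regime. Your plan for $|b|<1$ is also internally inconsistent: switching to $\la=0$ destroys the coincidence of the inner and outer $R_1$ bounds on which your corner argument rests, and \eqref{eq:schemeE lam0 gaussian} is stated only for $|b|>1$. For $|b|>1$ --- the hard case --- the paper invokes the time-division scheme of \cite{RTDjournal2}: each transmitter is active for half of the time at full power, which achieves half of every outer-bound corner by an elementary computation and transfers verbatim to the CIFC-CCM, since during the cognitive slot the primary receiver sees $X_1$ through a channel with gain $|b|>1$ and can thus also decode $W_1$. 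If you want a self-contained proof within your framework, either adopt that time-division scheme for $|b|>1$, or aim each scaled corner at the inner bound evaluated with a rescaled split (e.g.\ $\al'$ chosen so that $\Ccal(\al' P_1)\geq\tfrac12\Ccal(\al P_1)$) and redo the sum-rate comparison with the two parameters decoupled.
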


\begin{proof}
Capacity of the Gaussian CIFC-CCM is known for the regime with $|b| \leq 1$.  The achievability of the outer bound to within  a factor $2$ for $|b|>1$  by a simple time division scheme is shown in \cite{RTDjournal2} for the Gaussian CIFC and can be directly applied to the Gaussian CIFC-CCM.
\end{proof}

\section{Numerical Simulations}
\label{sec:Numerical Simulations}

In the following we illustrate the results presented for the Gaussian CIFC-CCM in the previous section by means of numerical simulations.
We begin by comparing the outer bound in Cor.~\ref{cor:An Outer Bound for the Gaussian CIFC-CCM} with the outer bound in Cor.~\ref{cor:BC-DMS-based outer Bound for the Gaussian CIFC-CCM}:
in general it is not possible to simplify the expression in \eqref{eq:BC-DMS outer Bound for the G-CIFC-CCM} and determine when it is tighter than \eqref{eq:Outer Bound for the G-CIFC-CCM}.
To determine the convex hull of \eqref{eq:BC-DMS outer Bound for the G-CIFC-CCM} one needs to find the assignment of the parameters in  \eqref{eq:BC-DMS outer Bound for the G-CIFC-CCM parameters} that maximizes $R_1 + \mu R_2$ for all $\mu \in \Rbb^+$.
This problem does not have a simple closed form solution apart from some special cases such as the degraded channel and the  Z channel, in which $a=0$ \cite{Rini:ISIT2011:Z}.
Although a closed form solution cannot be determined, we can numerically simulate the outer bound in  Cor.~\ref{cor:BC-DMS-based outer Bound for the Gaussian CIFC-CCM} and conclude that this outer bound is indeed tighter than  Cor.~\ref{cor:An Outer Bound for the Gaussian CIFC-CCM} in a subspace of the parameter regime.
This is illustrated in Fig.~\ref{fig:Out DMS}: since the outer bound \eqref{eq:Outer Bound for the G-CIFC-CCM} does not depend on the channel parameter $a$, we fix the remaining channel parameters and plot the outer bound in \eqref{eq:BC-DMS outer Bound for the G-CIFC-CCM parameters} for increasing values of $a$.
The analytical evaluation of the outer bound in \eqref{eq:BC-DMS outer Bound for the G-CIFC-CCM parameters} is quite involved and thus the results in the plot are obtained by an exhaustive search over the parameter space.
From Fig.~\ref{fig:Out DMS} we conclude that the tightest outer bound is obtained by considering the intersection of the outer bounds in Cor.~\ref{cor:BC-DMS-based outer Bound for the Gaussian CIFC-CCM} and Cor.~\ref{cor:An Outer Bound for the Gaussian CIFC-CCM} as one outer bound does not strictly include the other for all rate pairs.

\begin{figure}
\centering
\includegraphics[width=1.0\columnwidth]{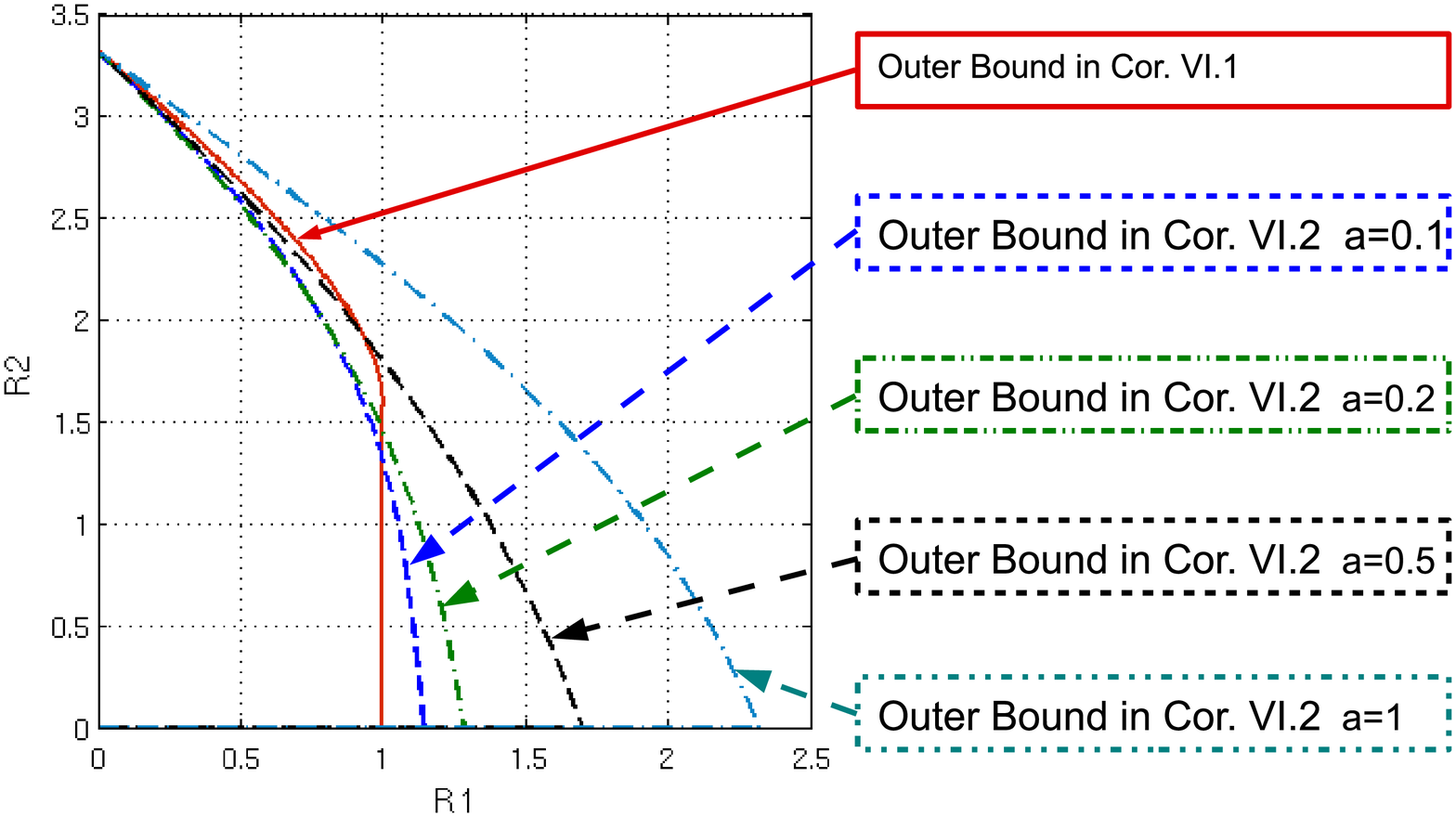}
\vspace*{-2 cm}
\caption{Outer bounds for Gaussian CIFC-CCM with $b=2,P_1=P_2=1$ and $a \in [0.1,0.2, 0.5, 1] $.}
\label{fig:Out DMS}
\end{figure}

\medskip

In Fig.~\ref{fig:Simu a01 b4} we compare outer  and inner bounds for a specific Gaussian CIFC-CCM.
In this regime the outer bound of Cor.~\ref{cor:An Outer Bound for the Gaussian CIFC-CCM} is tighter than the
outer bound in Cor.~\ref{cor:BC-DMS-based outer Bound for the Gaussian CIFC-CCM} for all the $R_1$ values in the interval $[0 \ldots \Ccal(1+P_1)]$.
The figure also compares the inner bound in Cor.~\ref{cor:Capacity for the Gaussian CIFC-CCM in the very strong interference regime} and the inner bound in Cor.~\ref{cor:Primary Decodes Cognitive Interference Regime for the CIFC-CCM} for the case of full interference cancellation and no interference cancellation.
In the latter achievable scheme full interference cancellation is obtained with the choice
\ea{
\lambda = \la_{\rm Costa}=\frac{\alpha P_1}{\alpha P_1 + 1} \left(a + \sqrt{\f{\alb P_1}{P_2}}\right),
}
as defined in \eqref{eq:la costa} while no interference cancellation is obtained with the choice $\la=0$.
The choice $\la=\la_{\rm Costa}$ maximizes the cognitive rate $R_1$ and has been shown to achieve capacity in the ``primary decodes cognitive'' regime of Th.~\ref{cor:Primary Decodes Cognitive Interference Regime for the CIFC-CCM}.
Interestingly the choice $\la=0$ outperforms the choice $\la=\la_{\rm Costa}$ in a large range of $R_1$ which indicates that, in this particular regime, having the primary receiver decode the cognitive codeword can be more easily performed when such codeword is not encoded against the interference.
The inner bound in Cor.~\ref{cor:Capacity for the Gaussian CIFC-CCM in the very strong interference regime} performs much worse than the other two schemes: in this achievable scheme both decoders are required to decode both codewords. Since the value of $a$ we consider is fairly small ($a=0.1$) having the cognitive receiver decode the primary codeword imposes strong restrictions on the rate $R_2$ and this results in a drastic loss of performance.
Note in particular that the point with the largest rate $R_2$ achieved by this scheme is given by
\ea{
(R_1,R_2) =\lb 0, \Ccal\left(|b| \sqrt{\alb P_1}+\sqrt{P_2}\right)^2 \rb .
}
This point is inferior to the point with the largest rate $R_2$ achieved by the other schemes and the outer bound, i.e.,
\ea{
(R_1,R_2) =\lb 0, \Ccal(|b|^2 P_1 + P_2 + 2 \sqrt{|b|^2 P_1 P_2}) \rb.
}

This is because the bound in \eqref{eq:scheme D Y1 bound} is active and reduces the largest achievable sum rate.
Despite of this, the scheme achieves capacity in the ``very strong interference'' regime of Cor.~\ref{cor:Capacity for the Gaussian CIFC-CCM in the very strong interference regime} where both the values of $a$ and $b$ are large with respect to the direct links.
\begin{figure}
\centering
\includegraphics[width=1.0\columnwidth]{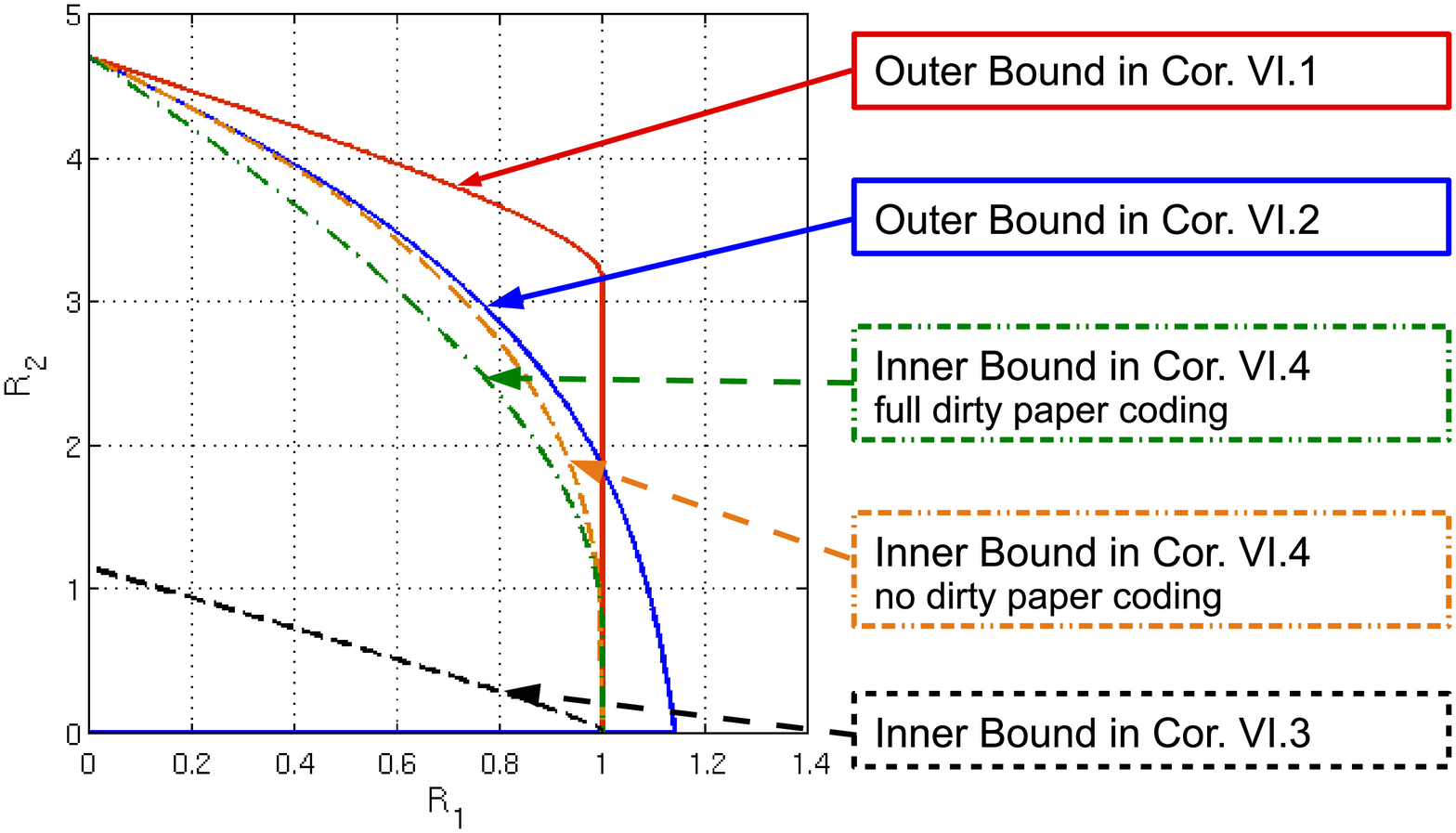}
\vspace*{-2 cm}
\caption{Inner and outer bounds for Gaussian CIFC-CCM with $a=0.1$, $b=4$ and $P_1=P_2=1$.}
\label{fig:Simu a01 b4}
\end{figure}

\medskip

Next, we compare the inner bound given in Cor.~\ref{cor:Primary Decodes Cognitive Interference Regime for the CIFC-CCM} with full interference cancellation that achieves capacity in the ``primary decodes cognitive'' regime with the inner bound from Cor.~\ref{cor:Capacity for the Gaussian CIFC-CCM in the very strong interference regime}. Fig.~\ref{fig:PCD VSI b2} depicts the achievable rates for different values of $a$ and the other channel parameters are fixed. The parameters are chosen such that capacity is not known, i.e., the conditions for the ``primary decodes cognitive'' regime in \eqref{eq:PDC} as well as for  ``very strong interference'' regime in \eqref{eq:very strong interference gaussian} are not fulfilled. 
In such regime the scheme from Cor.~\ref{cor:Primary Decodes Cognitive Interference Regime for the CIFC-CCM} outperforms the scheme from Cor.~\ref{cor:Capacity for the Gaussian CIFC-CCM in the very strong interference regime} for small values of $R_1$. 
This is due to the fact that \eqref{eq:scheme D Y1 bound} is always active and thus, the maximum achievable rate $R_2 = \Ccal(|b|^2 P_1 + P_2 + 2 \sqrt{|b|^2 P_1 P_2})$ is achieved by the scheme from Cor.~\ref{cor:Primary Decodes Cognitive Interference Regime for the CIFC-CCM} and not achieved by Cor.~\ref{cor:Capacity for the Gaussian CIFC-CCM in the very strong interference regime}. 
However, if both, $a$ and $\alpha$ are increasing, the scheme from Cor.~\ref{cor:Capacity for the Gaussian CIFC-CCM in the very strong interference regime} exceeds the achievable rate region of Cor.~\ref{cor:Primary Decodes Cognitive Interference Regime for the CIFC-CCM}.

\begin{figure}
\centering
\includegraphics[width=1.0\columnwidth]{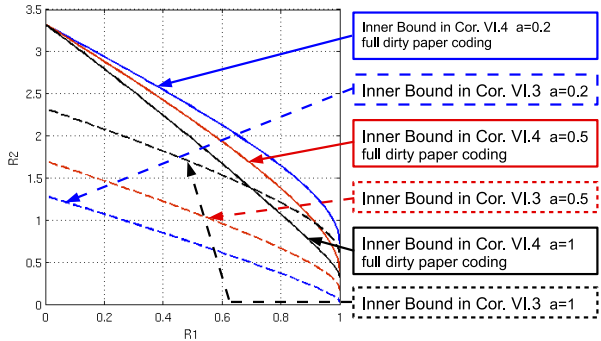}
\vspace*{-2 cm}
\caption{Comparison of the achievable schemes in Cor.~\ref{cor:Capacity for the Gaussian CIFC-CCM in the very strong interference regime} and Cor.~\ref{cor:Primary Decodes Cognitive Interference Regime for the CIFC-CCM} for with $b=2$ and $P_1=P_2=1$.}
\label{fig:PCD VSI b2}
\end{figure}

\medskip

In Fig.~\ref{fig:Simu constgap}, the scheme from Th.~\ref{eq:Capacity to within 1 bits/s/Hz} with $\sgs_{1}=1$ is illustrated for different cognitive transmit powers $P_1$. This scheme is used to prove the constant additive gap in Th. \ref{eq:Capacity to within 1 bits/s/Hz} and thus approaches capacity for large SNR
which, in the standard model of \eqref{eq:in/out gaussian CIFC-CCM} means  large transmit powers.
The figure shows that the scheme from Th.~\ref{eq:Capacity to within 1 bits/s/Hz} approaches capacity for increasing $P_1$ and small cognitive rates $R_1$.
However, for large rates $R_1$, the gap between inner and outer bounds on the $R_1$ coordinate, as given in \eqref{eq:Const gap for the G-CIFC-CCM R1}, approaches 1 bit/s/Hz:
\ea{
\underset{P_1 \rightarrow \infty}{\lim}{\rm GAP_1}(1) &= \underset{P_1 \rightarrow \infty}{\lim} \log \lb 1 + \f {P_1 + |a|^2 P_2}{1+ P_1 + |a|^2 P_2}\rb\notag\\
& = 1.
}

\begin{figure}
\centering
\includegraphics[width=1.0\columnwidth]{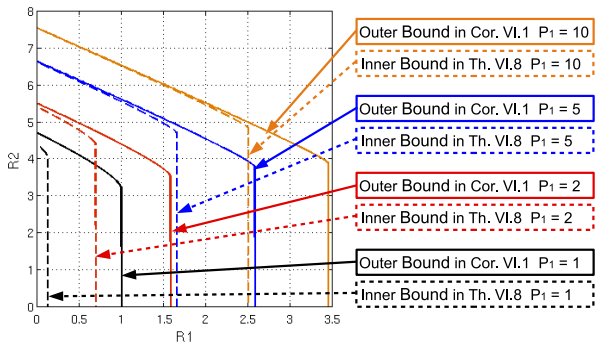}
\vspace*{-2 cm}
\caption{Constant gap approximation for Gaussian CIFC-CCM with $a=2$, $b=4$ and $P_2=1$.}
\label{fig:Simu constgap}
\end{figure}

\medskip

The inner bounds from Cor.~\ref{cor:Primary Decodes Cognitive Interference Regime for the CIFC-CCM} with full interference cancellation, from Cor.~\ref{cor:Capacity for the Gaussian CIFC-CCM in the very strong interference regime}, as well as from Th.~\ref{eq:Capacity to within 1 bits/s/Hz} with $\sgs_{1}=1$ are compared in Fig.~\ref{fig:Simu a1 b1.5} in a regime in which capacity is still unknown. 
In the considered setup none of the approaches outperforms the others over the whole region. For $R_1 = 0$ the scheme from Cor.~\ref{cor:Primary Decodes Cognitive Interference Regime for the CIFC-CCM} touches the outer bound and thus, outperforms both other schemes. The scheme from Th.~\ref{eq:Capacity to within 1 bits/s/Hz} can outperform the others only for medium rates $R_1$ whereas the scheme from Cor.~\ref{cor:Capacity for the Gaussian CIFC-CCM in the very strong interference regime} is superior for large rates $R_1$.

\begin{figure}
\centering
\includegraphics[width=1.0\columnwidth]{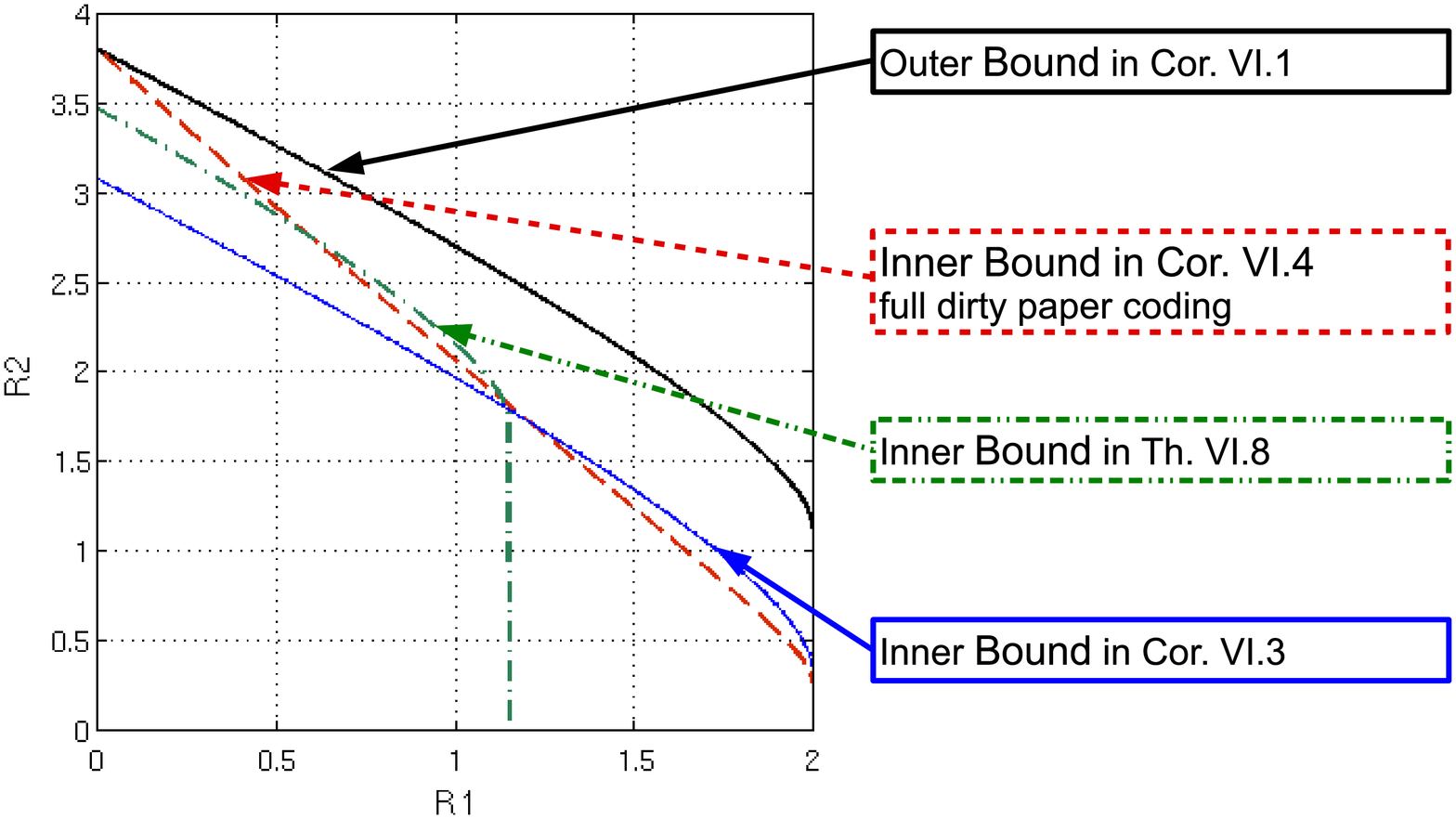}
\vspace*{-2 cm}
\caption{Comparison of different schemes for Gaussian CIFC-CCM with $a=1$, $b=1.5$, $P_1=3$ and $P_2=1$.}
\label{fig:Simu a1 b1.5}
\end{figure}

\medskip

A plot of the capacity results available for the Gaussian CIFC-CCM is depicted in Fig.~\ref{fig:Gaussian Plot} for $P_1=10$ and $P_2 = 10$: in the $a \times b$ plane we plot the ``very strong interference'' regime (area with circles) and the ``primary decodes cognitive regime'' (area with dots).

\begin{figure}
\centering
\includegraphics[width=1.0\columnwidth]{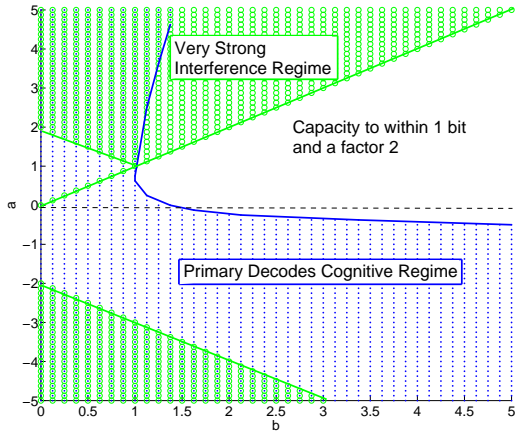}
\caption{Capacity for Gaussian CIFC-CCM.}
\label{fig:Gaussian Plot}
\end{figure}

\section{Conclusion}
\label{sec:conclusion}

The paper studies a variation of the classical cognitive interference channel in which the primary receiver decodes both messages.
This channel is related to the cognitive interference channel in the ``strong interference'' regime and many results for the cognitive interference channel apply to the model under consideration.
For this channel, we give outer bounds to the capacity region based with a different number of auxiliary random variable.
A general inner bound, comprising rate splitting, superposition coding and binning is introduced. 
To prove capacity in special regimes, two simple sub-schemes, one using only superposition coding while the other applies solely binning, are given. 
Using these schemes, we show capacity in the ``very strong interference'' regime, where there is no rate loss  in having both receivers decode both messages, as well as capacity in the ``primary decodes cognitive'' regime for discrete memoryless channels.
We also derive the capacity for the semi-deterministic case, where the cognitive output is a deterministic function of the channel inputs.
Furthermore, we determine capacity for the Gaussian case to within a constant additive gap of one bit/s/Hz and to within a constant multiplicative factor of two. %

\section*{Acknowledgments}
The work of Stefano Rini was supported by the German Ministry of Education and Research in the framework of an Alexander von Humboldt Professorship.
The work of Carolin Huppert was supported by the German research council ``Deutsche Forschungsgemeinschaft'' (DFG) under Grant~Hu~1835/3.

\bibliographystyle{unsrt}
\bibliography{steBib1}

\appendix
\section{Proof of Corollary~\ref{cor:rate-sharing}}
\label{app:Proof of Cor rate-sharing}
To make the notation more compact we employ in the following the short-hand notation
\eas{
&I_{Y_1}  = I(Y_1 ; U_{2c}, U_{1c})-I(U_{1c}; X_2| U_{2c}),\\
&I_{Y_1| U_{2c}}  = I(Y_1 ; U_{1c} | U_{2c}) - I(U_{1c}; X_2| U_{2c}),\\
&I_{Y_2}  =  I(Y_2 ; U_{2c}, U_{1c}, X_2 , U_{2pb} ),\\
&I_{Y_2| U_{2c}}  =  I(Y_2 ; U_{1c}, X_2 , U_{2pb} | U_{2c}),\\
&I_{Y_2| U_{2c},U_{1c}}  =  I(Y_2 ;  X_2 , U_{2pb} | U_{2c}, U_{1c})\notag\\
	  & \quad \quad + I(U_{1c}; X_2 | U_{2c}),\\
&I_{Y_2| U_{2c},X_2}  = I(Y_2 ; U_{1c} , U_{2pb} | U_{2c}, X_2),\\
&I_{Y_2| U_{2c},U_{1c},X_2}  = I(Y_2 ;  U_{2pb} | U_{2c}, U_{1c}, X_2).
}

Consider the proof of Th.~\ref{th:Inner Bounds for the CIFC-CCM} and note that if the rate vector
$\lb \tilde{R}_{2c}, \tilde{R}_{1c}, \tilde{R}_{2p}, \tilde{R}_{2pb}\rb$
is achievable, then also the rate vector
$\lb R_{2c}, R_{1c}, R_{2p}, R_{2pb}\rb$,
with
\eas{
R_{2c} = \tilde{R}_{2c}-\Delta_{2p}^{(2)}-\Delta_{2pb}^{(2)}+\Delta_{2c}^{(1)},\\
R_{1c} = \tilde{R}_{1c}-\Delta_{2p}^{(1)}-\Delta_{2pb}^{(1)}-\Delta_{2c}^{(1)},\\
R_{2p} = \tilde{R}_{2p}+\Delta_{2p}^{(1)}+\Delta_{2p}^{(2)},\\
R_{2pb} = \tilde{R}_{2pb}+\Delta_{2pb}^{(1)}+\Delta_{2pb}^{(2)},
}
is achievable as long as the latter rates $R_{2c}$, $R_{1c}$, $R_{2p}$, $R_{2pb}$ are still positive, that is as long as
\eas{
\tilde{R}_{2c}  & \geq \Delta_{2p}^{(2)} + \Delta_{2pb}^{(2)}-\Delta_{2c}^{(1)}, \\
\tilde{R}_{1c} & \geq \Delta_{2p}^{(1)} +\Delta_{2pb}^{(1)}+\Delta_{2c}^{(1)},
}{\label{eq:positive condition}}
as shown in \cite{hajek1979evaluation} and \cite{liang2007rate}.

The above implies that the achievability of the region in \eqref{eq:achievable region before FME} also implies the achievability of the region
\eas{
&R_{2c},R_{1c},R_{2p},R_{2pb}  \geq 0, \\
&\Delta_{2p}^{(2)},\Delta_{2pb}^{(2)}, \Delta_{2p}^{(1)},\Delta_{2pb}^{(1)},\Delta_{2c}^{(1)}  \geq 0, \\
&R_{2c}+\Delta_{2p}^{(2)} + \Delta_{2pb}^{(2)}   \geq \Delta_{2c}^{(1)},
\label{eq:achievable region before FME RS positive 1} \\
&R_{2p}   \geq \Delta_{2p}^{(1)} + \Delta_{2p}^{(2)}
\label{eq:achievable region before FME RS positive 2} \\
&R_{2pb}   \geq \Delta_{2pb}^{(1)} + \Delta_{2pb}^{(2)}
\label{eq:achievable region before FME RS positive 3} \\
&R_{2c} + R_{1c} + \Delta_{2p}^{(2)}+ \Delta_{2pb}^{(2)}+\Delta_{2p}^{(1)}+\Delta_{2pb}^{(1)}   \leq I_{Y_1},
\label{eq:achievable region before FME RS a}\\
&R_{1c} +\Delta_{2p}^{(1)}+\Delta_{2pb}^{(1)} + \Delta_{2c}^{(1)}  \leq I_{Y_1|U_{2c}},
\label{eq:achievable region before FME RS b}\\
&R_{2c}+R_{1c}+R_{2p}+R_{2pb}  \leq I_{Y_2| U_{2c},U_{1c}},
\label{eq:achievable region before FME RS c}\\
&R_{1c}+R_{2p}+R_{2pb} + \Delta_{1c}^{(1)} - \Delta_{2p}^{(2)}- \Delta_{2pb}^{(2)}  \leq I_{Y_2 | U_{2c}},
\label{eq:achievable region before FME RS d}\\
&R_{2p}+R_{2pb}- \Delta_{2p}^{(2)}- \Delta_{2pb}^{(2)} -  \Delta_{2p}^{(1)}- \Delta_{2pb}^{(1)}   \notag\\
&\quad \leq I_{Y_2 | U_{2c}, U_{1c}} + I(U_{1c}; X_2 | U_{2c}),
\label{eq:achievable region before FME RS e}\\
&R_{1c}+R_{2pb} + \Delta_{2c}^{(1)} + \Delta_{2p}^{(1)} - \Delta_{2pb}^{(2)}  \leq I_{Y_2 | U_{2c}, X_2},
\label{eq:achievable region before FME RS f}\\
&R_{2pb}- \Delta_{2pb}^{(1)} - \Delta_{2pb}^{(2)}  \leq I_{Y_2 | U_{2c}, U_{1c}, X_2},
\label{eq:achievable region before FME RS g}
}{\label{eq:achievable region before FME RS}}
where \eqref{eq:achievable region before FME RS positive 1},\eqref{eq:achievable region before FME RS positive 2} and \eqref{eq:achievable region before FME RS positive 3} result from the condition $\tilde{R}_{2c}, \tilde{R}_{1c}, \tilde{R}_{2p}, \tilde{R}_{2pb} \geq 0 $.
We now proceed with the FME of all the $\Delta$s to obtain a compact representation of the achievable region only in terms of $R_{2c},R_{1c},R_{2p}$ and $R_{2b}$.
First of all note that $\Delta_{2c}^{(1)}$ always decreases the upper rate bounds in \eqref{eq:achievable region before FME RS}. This implies that the
largest achievable region is obtained by simply setting $\Delta_{2c}^{(1)}=0$.
The FME of all the remaining $\Delta$s is algorithmically complex: for this reason we proceed in eliminating $\Delta$s is successive steps.
We begin by eliminating $\Delta_{2p}^{(1)}$ which yields the region
\eas{
&R_{2c},R_{1c},R_{2p},R_{2pb}  \geq 0, \\
&\Delta_{2p}^{(2)},\Delta_{2pb}^{(2)} ,\Delta_{2pb}^{(1)}  \geq 0, \\
&R_{2p}   \geq  \Delta_{2p}^{(2)} \\
&R_{2pb}   \geq \Delta_{2pb}^{(1)} + \Delta_{2pb}^{(2)} \\
&R_{2pb} -\Delta_{2pb}^{(1)} + \Delta_{2pb}^{(2)}   \leq I_{Y_2| U_{2c},U_{1c},X_2},
\label{rate sharing step1 a}\\
&R_{1c}  + \Delta_{2pb}^{(1)} - \Delta_{2pb}^{(2)} \leq I_{Y_1| U_{2c}},
\\
&R_{1c}+R_{2pb}- \Delta_{2pb}^{(2)}   \leq I_{Y_2| U_{2c},X_2}\label{rate sharing step1 c},
\\
&R_{1c}+R_{2p}+R_{2pb}  - \Delta_{2p}^{(2)}- \Delta_{2pb}^{(2)}    \notag\\
& \quad \leq \min \lcb I_{Y_2| U_{2c}}, I_{Y_1| U_{2c}} + I_{Y_2| U_{2c},U_{1c}} \rcb ,
\\
&R_{1c}+R_{2p}+2 R_{2pb} - \Delta_{2p}^{(2)}- 2 \Delta_{2pb}^{(2)}- \Delta_{2pb}^{(1)}   \notag\\
& \quad \leq I_{Y_2| U_{2c},X_2} + I_{Y_2| U_{2c},U_{1c}},
\\
&R_{1c}+R_{2c} + \Delta_{2p}^{(2)}+ \Delta_{2pb}^{(2)} +\Delta_{2pb}^{(1)}     \leq I_{Y_1} \label{rate sharing step1 g},
\\
&R_{1c}+R_{2c}+R_{2p}+R_{2pb}   \notag\\
& \quad  \leq \min \lcb I_{Y_2},I_{Y_1}  + I_{Y_2| U_{2c},U_{1c}} \rcb ,
\label{rate sharing step1 i}
}{\label{eq:achievable region FME step1} }

In the next steps we first eliminate $\Delta_{2p}^{(2)}$ and successively $\Delta_{2pb}^{(1)}$ and $\Delta_{2pb}^{(2)}$ again by means of FME which gives

\eas{
&R_{2c},R_{1c},R_{2p},R_{2pb}  \geq 0, \\
&R_{1c}   \leq \min \lcb I_{Y_1 | U_{2c}},I_{Y_2 | U_{2c}, X_2}  \rcb \\
&R_{1c}+R_{2c}   \leq I_{Y_1}, \\
&R_{1c}+R_{2c}+R_{2pb}   \leq I_{Y_1} + I_{Y_2 | U_{2c},U_{1c}, X_2}, \\
&R_{1c}+R_{2c}+R_{2p}+R_{2pb}   \leq \min \lcb I_{Y_2}, I_{Y_1} + I_{Y_2| U_{2c},U_{1c}} \rcb \\
&2 R_{1c}+R_{2c}+ R_{2pb}   \leq I_{Y_1} + I_{Y_2 | U_{2c}, X_2}, \\
&2 R_{1c}+R_{2c}+R_{2p}+R_{2pb}   \leq I_{Y_1} + I_{Y_2 | U_{2c}}.
}{\label{eq:achievable region FME step4} }

From the FME of $R_2$ in \eqref{eq:achievable region FME step4} and
\eqref{eq:rate splitting definition}  we finally obtain the achievable rate region \eqref{eq:inner bound}.
This concludes the proof.

\end{document}